\title{Improved quantum backtracking algorithms using effective resistance estimates}
\author[1]{Michael Jarret \thanks{mjarret@pitp.ca}}
\affil[1]{\footnotesize\PITP}
\author[1,2]{Kianna Wan \thanks{kianna.wan@gmail.com}}
\affil[2]{\footnotesize Department of Physics and Astronomy, University of Waterloo, Waterloo, Ontario, Canada, N2L 3G1}
\begin{document}
    \maketitle

\renewcommand{\discussion}[1]{}
\newcommand{\half}{\frac{1}{2}}

\begin{abstract}
We investigate quantum backtracking algorithms of the type introduced by Montanaro (arXiv:1509.02374). These algorithms explore trees of unknown structure and in certain settings exponentially outperform their classical counterparts. Some of the previous work focused on obtaining a quantum advantage for trees in which a unique marked vertex is promised to exist. We remove this restriction by recharacterising the problem in terms of the effective resistance of the search space. In this paper, we present a generalisation of one of Montanaro's algorithms to trees containing $k$ marked vertices, where $k$ is not necessarily known \textit{a priori}. 

Our approach involves using amplitude estimation to determine a near-optimal weighting of a diffusion operator, which can then be applied to prepare a superposition state with support only on marked vertices and ancestors thereof. By repeatedly sampling this state and updating the input vertex, a marked vertex is reached in a logarithmic number of steps. The algorithm thereby achieves the conjectured bound of $\widetilde{\mathcal{O}}(\sqrt{TR_{\mathrm{max}}})$ for finding a single marked vertex and $\widetilde{\mathcal{O}}\left(k\sqrt{T R_{\mathrm{max}}}\right)$ for finding all $k$ marked vertices, where $T$ is an upper bound on the tree size and $R_{\mathrm{max}}$ is the maximum effective resistance encountered by the algorithm. This constitutes a speedup over Montanaro's original procedure in both the case of finding one and the case of finding multiple marked vertices in an arbitrary tree. 
\end{abstract}
\maketitle
\newpage
\section{Introduction}


In this paper, we construct a quantum algorithm for finding marked vertices in a rooted tree of unknown structure. Search trees are a natural data structure for many computational problems, notably constraint satisfaction problems (CSPs). Consider a CSP defined on a finite domain $[d-1] = \{0,1,\dots,d-1\}$ by a predicate $P:[d-1]^n \longrightarrow \{\mathrm{true},\mathrm{false}\}$. The objective of the decision problem is to determine whether there is at least one assignment $x \in [d-1]^n$ such that $P(x)$ evaluates to true. In many cases, we would like an algorithm capable not only of deciding whether a satisfying assignment exists, but also of returning one or more such assignments.

Backtracking is a prevailing technique for solving CSPs, and in Ref.\ \cite{montanaro2015quantum}, Montanaro provides algorithms for both deciding the existence of and outputting solutions to a CSP via a discrete-time quantum walk on the underlying backtracking tree. Unlike in most other quantum walk algorithms \cite{krovi2016quantum,magniez2011search,szegedy2004quantum,shenvi2003quantum}, the input graph is defined implicitly (by the backtracking routine) and therefore not known in advance. Montanaro's existence algorithm is a special case of a quantum walk algorithm by Belovs \cite{belovs2013quantum}, which is able to detect the presence of a marked element in a graph in $\mathcal{O}(\sqrt{W\overline{\eta}})$ steps, where $W$ is the total weight of the graph and $\overline{\eta}$ is its effective resistance. Although this work inspired Montanaro's algorithm, 
the latter achieves a somewhat weaker asymptotic bound in a more restricted set of problems than Belovs' original result suggests might be possible. Hence, we investigate whether quantum backtracking algorithms scale with effective resistance and answer in the affirmative. 

In particular, for a tree of size at most $T$, we describe an algorithm (\cref{alg:find_eta}) that estimates its effective resistance $\overline{\eta}$ and, in doing so, determines whether the tree contains at least one marked vertex. Algorithm 1 has an overall complexity of $\widetilde{\mathcal{O}}(\sqrt{T \overline{\eta}})$, where $\widetilde{\mathcal{O}}$ omits logarithmic factors. As one would expect, and in contrast to Montanaro's result, the decision problem becomes easier as the number of marked vertices is increased. In the worst case, where there are no marked vertices, our algorithm converges to Montanaro's bound of $\widetilde{\mathcal{O}}(\sqrt{T n})$, where $n$ is an upper bound on the depth of the tree.

We then use the effective resistance estimate in \cref{alg:find_marked} to find and output a marked vertex in $\widetilde{\mathcal{O}}(\sqrt{T\overline\eta_{\mathrm{max}}})$ steps. Here, $\overline\eta_{\mathrm{max}}$ is the maximum effective resistance over all subtrees that contain marked vertices. In every case with multiple marked vertices, this achieves a speedup over Montanaro's algorithms. If there are polynomially many marked vertices, this improves Montanaro's algorithm by at least a factor of $n$, where $n$ is a bound on the depth of the tree. Furthermore, this allows us to directly compare the commute time of a classical random walk on the tree to the success rate of the quantum backtracking algorithm.

Our approach shows that the effective resistance of a search tree can be efficiently computed, even without \textit{a priori} knowledge of the tree's structure. Having determined the effective resistance, we are able to perform a walk that finds a marked vertex with an expected $\mathcal{O}({\log(k\overline\eta)})$ measurements. Previously, algorithms for estimating effective resistance have been tailored to situations where at least the set of vertices or edges is known beforehand \cite{wang2017}. Although the method of Ref.\ \cite{wang2017} does not seem immediately applicable to the present problem and we altogether avoid spectral theory, this paper can be loosely viewed as a step towards demonstrating the utility of effective resistance estimates in search problems. Our estimation procedure is similar to those found in  Refs.\ \cite{ito2015approximate} and \cite{jeffery2017quantum}, which also employ amplitude estimation to estimate effective resistance; however, our application of amplitude estimation is arrived at somewhat differently and is not intended to estimate effective resistance to within low multiplicative error. (Achieving low multiplicative error is indeed possible with the algorithms we present, but is not the focus of this paper.)

The algorithm of Ambainis and Kokainis in Ref.\ \cite{ambainis2017quantum} also improves upon Montanaro's complexity bound in certain settings, but by estimating the tree size rather than the effective resistance. The tree size estimation strategy sometimes confers an advantage over estimating effective resistance, particularly if the known upper bound on the tree size is too large. Additionally, some of the analysis of our algorithm is quite similar to that for tree size estimation; however, we apply amplitude estimation to the output of phase estimation whereas Ambainis and Kokainis estimate the phase of the eigenvector closest to the eigenvalue-1 eigenvector. This suggests that an algorithm that integrates the approach of Ref.\ \cite{ambainis2017quantum} with ours may achieve better scaling than either taken independently. Indeed, the results of Ref.\ \cite{jeffery2017quantum} suggest that a tighter bound should be possible, even in the case where the structure of the graph is unknown. 

\subsection{Previous work}

In his approach to the decision problem, Montanaro specialises Belovs' algorithm for determining the existence of marked vertices in a graph \cite{belovs2013quantum,belovs2013time} to rooted trees \cite{montanaro2015quantum}. The algorithm applies phase estimation to the product of two diffusion operators (defined in subsection \ref{sec: diffusion_operators}), using the root of the search tree as the input state. When taken to high enough precision, if one or more vertices in the tree are marked, phase estimation returns the eigenvalue $1$ with probability at least $1/2$. If none of the vertices are marked, the eigenvalue $1$ is seen with probability less than $1/4$. A straightforward application of a Chernoff bound then decides existence. 
\begin{thm}\label{thm:Mont_0}
    Let $\mathcal{T}$ be a tree with root $r$ and vertex set $V(\mathcal{T})$. Let $T$ denote an upper bound on the number of vertices $|V(\mathcal{T})|$, and $n$ an upper bound on the depth of $\mathcal{T}$. Define the functions $f: V(\mathcal{T})\rightarrow \{0,1\}$, where for a vertex $v \in \mathcal{T}$, $f(v) = 1$ if $v$ is marked and $f(v) = 0$ otherwise, and the oracle $h$ such that $h(v)$ returns the set of children of $v$. Then, for any $0 < \delta <1$, there is a quantum algorithm which takes $r$ as input and determines whether $\{v \in V(\mathcal{T}) | f(v) = 1\} = \emptyset$ using $\mathcal{O}({\sqrt{Tn}\log(1/\delta)})$ queries to $f$ and $h$. The algorithm requires $O(1)$ auxiliary operations per query and $\poly(n)$ space, and fails with probability at most $\delta$.
\end{thm}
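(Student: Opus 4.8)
The plan is to follow the quantum-walk detection strategy of Belovs~\cite{belovs2013quantum}, specialised to rooted trees as in~\cite{montanaro2015quantum}. First I would work on the state space $\mathcal{H} = \operatorname{span}\{\,|v\rangle : v \in V(\mathcal{T})\,\}$ and form the walk operator $U = D_B D_A$ from the two diffusion operators of~\cref{sec: diffusion_operators}. The only implementation facts I need are that one application of $U$ uses $\mathcal{O}(1)$ calls to the child oracle $h$ (which defines the local reflections) and $\mathcal{O}(1)$ calls to $f$ (which marks the absorbing vertices), plus $\mathcal{O}(1)$ auxiliary gates; and that a vertex, stored as its length-$\le n$ sequence of branch choices from $r$, occupies $\poly(n)$ space.

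The analytic heart, and the step I expect to be the main obstacle, is a spectral dichotomy for $U$ relative to the input $|r\rangle$. In the marked case I would exhibit an exact $(+1)$-eigenvector $|\phi\rangle$ of $U$ with constant overlap $|\langle r|\phi\rangle|^2 = \Omega(1)$, built from a unit electrical flow from $r$ to a marked vertex whose energy equals the effective resistance and is therefore at most the depth $n$. In the unmarked case I would instead prove a phase gap: the component of $|r\rangle$ on the span of all eigenvectors of $U$ with eigenvalue $e^{i\theta}$, $|\theta| \le c/\sqrt{Tn}$, is small. The uniform tool for both bounds is Belovs' effective spectral-gap lemma, fed a positive witness (the flow, of energy $\le n$) in the marked case and a negative witness together with the total tree weight $W = \mathcal{O}(T)$ in the unmarked case; balancing $W$ against the resistance bound $n$ is precisely what yields the geometric mean $\sqrt{Tn}$ and the gap $\Omega(1/\sqrt{Tn})$.

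Granting the dichotomy, I would apply phase estimation to $U$ with input $|r\rangle$ at precision $\Theta(1/\sqrt{Tn})$, costing $\mathcal{O}(\sqrt{Tn})$ controlled applications of $U$, and output ``marked'' exactly when the estimated phase is $0$. A genuine $(+1)$-eigenvector is always reported as phase $0$, so the marked case returns $0$ with probability $\Omega(1)$ (at least $1/2$ once constants are fixed), while in the unmarked case the resolved gap keeps the probability of returning $0$ below $1/4$. A single run therefore separates the two cases by a constant margin.

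Finally I would amplify by repeating this single-shot test $\mathcal{O}(\log(1/\delta))$ times and taking a majority vote; because the cases are separated by a constant probability gap, a Chernoff bound forces the overall error below $\delta$. The cost is $\mathcal{O}(\sqrt{Tn})$ queries per run times $\mathcal{O}(\log(1/\delta))$ runs, i.e.\ $\mathcal{O}(\sqrt{Tn}\,\log(1/\delta))$ queries to $f$ and $h$, with $\mathcal{O}(1)$ auxiliary operations per query and $\poly(n)$ space, as claimed.
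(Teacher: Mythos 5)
Your proposal is correct and follows essentially the same route the paper attributes to Montanaro: phase estimation on the product of the two diffusion operators $R_B R_A$ with input $\ket{r}$ and precision $\Theta(1/\sqrt{Tn})$, using the path eigenvector (unit-flow positive witness of energy at most $n$) to guarantee the eigenvalue-$1$ outcome with probability at least $1/2$ in the marked case, the effective spectral-gap lemma with a negative witness of norm squared $\mathcal{O}(Tn)$ to keep that probability below $1/4$ in the unmarked case, and a Chernoff-bound majority vote over $\mathcal{O}(\log(1/\delta))$ repetitions. This matches both the paper's sketch of this result and the witness machinery it develops later (the states $\ket{\phi_m}$ and the spectral-gap argument of its appendix), so there is nothing to correct.
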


In the case where one and only one marked vertex is promised to exist, Montanaro measures the output state of the phase estimation procedure described above in the computational basis, then uses the measurement outcome as the new input root. Within an expected $\mathcal{O}({\log n})$ repetitions, this routine returns a marked vertex. Thus, Montanaro proves the following theorem.
\begin{thm}\label{thm:Mont_1}
    Let $\mathcal{T}$, $f$, $h$, $T$, and $n$ be defined as in \cref{thm:Mont_0}. Given the promise that there exists exactly one $m\in V(\mathcal{T})$ such that $f(m) = 1$, there is a quantum algorithm which returns $m$ using $\mathcal{O}({\sqrt{Tn}\log^3(n) \log(1/\delta)})$ queries to $f$ and $h$ for any $0<\delta <1$. The algorithm requires $O(1)$ auxiliary operations per query and $\poly(n)$ space, and fails with probability at most $\delta$.
\end{thm}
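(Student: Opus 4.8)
The plan is to turn the decision procedure of \cref{thm:Mont_0} into a search procedure by exploiting the structure of the eigenvalue-$1$ eigenstate produced en route. First I would recall the key structural fact that, when the subtree rooted at the current input vertex contains the unique marked vertex $m$, the product of diffusion operators has a one-dimensional eigenvalue-$1$ eigenspace whose eigenstate $|\psi\rangle$ is supported entirely on the path from the current root to $m$. I would compute the amplitudes of $|\psi\rangle$ along this path explicitly and show that measuring $|\psi\rangle$ in the computational basis returns, with probability bounded below by an absolute constant, a vertex lying at least a constant fraction of the way down from the current root toward $m$. The essential consequence is that each such measurement reduces, in expectation, the remaining depth to $m$ by a constant factor, so that $\mathcal{O}(\log n)$ successful measurements suffice to reach $m$.

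Next I would organise the algorithm as a recursion maintaining the invariant that the current input vertex $v$ is always an ancestor of $m$, with $v = r$ initially. One round proceeds as follows: run phase estimation on the subtree rooted at $v$ to the precision dictated by \cref{thm:Mont_0}, preparing a state close to $|\psi\rangle$; measure in the computational basis to obtain a candidate vertex $w$; then \emph{verify} that $w$ is an ancestor of $m$ by invoking the decision algorithm of \cref{thm:Mont_0} on the subtree rooted at $w$ — under the uniqueness promise, this subtree contains a marked vertex precisely when $w$ is an ancestor of $m$. If the verification succeeds I set $v \leftarrow w$; otherwise I retry from the current $v$. This verification is what keeps the recursion on the correct root-to-$m$ path even in the presence of phase-estimation error, and guarantees that the output is the genuine marked vertex.

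Finally I would tally the cost. Each round calls \cref{thm:Mont_0} a constant number of times, each at cost $\mathcal{O}(\sqrt{Tn}\,\log(1/\delta'))$ for a suitably chosen per-round failure parameter $\delta'$, with the precision needed to resolve the eigenvalue-$1$ eigenspace from the rest of the spectrum contributing the phase-estimation overhead. Since each round makes constant-factor progress with constant probability, a standard concentration argument shows that $\mathcal{O}(\log n)$ rounds suffice with high probability; a union bound over rounds with $\delta' = \delta/\mathrm{poly}(n)$ then controls the total failure probability by $\delta$. Collecting the factors — the $\mathcal{O}(\log n)$ rounds, the $\mathrm{polylog}(n)$ phase-estimation precision, and the $\log(1/\delta)$ amplification — yields the claimed bound $\mathcal{O}(\sqrt{Tn}\,\log^3(n)\log(1/\delta))$.

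The main obstacle I anticipate is the spectral analysis behind the measurement step: I must show that phase estimation carried out with only $\mathrm{polylog}(n)$ overhead produces a state close enough to the ideal eigenstate $|\psi\rangle$ that the progress probability computed for $|\psi\rangle$ still holds, and that this approximation error does not accumulate destructively across the $\mathcal{O}(\log n)$ rounds. This hinges on a lower bound on the spectral gap separating the eigenvalue-$1$ eigenspace from the remaining spectrum — precisely the quantity controlling the $\sqrt{Tn}$ scaling in \cref{thm:Mont_0} — and propagating that gap bound through the recursion, together with the bookkeeping of the three logarithmic factors, is the delicate part of the argument.
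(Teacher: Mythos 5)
Before addressing the gap, some context: the paper under review does not prove this theorem at all --- it is Montanaro's result, quoted as previous work, and the paper's entire ``proof'' is the sketch ``apply phase estimation to $R_BR_A$ with $\eta$ set to $n$, measure the output in the computational basis, recurse on the outcome; a marked vertex is reached within an expected $\mathcal{O}(\log n)$ repetitions.'' Your proposal fleshes out exactly that sketch, with a verification step via \cref{thm:Mont_0} added to keep the descent on the root-to-$m$ path, so structurally you are aligned with what the paper describes. The problem is that the central quantitative claim you rely on is false for the operators you are using, and the paper's own machinery is what exposes this.

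You claim that measuring the eigenvalue-$1$ eigenstate returns, ``with probability bounded below by an absolute constant,'' a vertex a constant fraction of the way toward $m$. Carry out the amplitude computation you propose, with the diffusion operators inherited from \cref{thm:Mont_0} (root weight $\sqrt{\eta}$ with $\eta=n$): the eigenstate is $\ket{\psi} \propto \sqrt{n}\,\ket{r} + \sum_{v\in\mathcal{P}(r,m),\, v\neq r}(-1)^{\ell_v}\ket{v}$, so the probability of obtaining \emph{any} non-root vertex is $\ell/(n+\ell)$, where $\ell$ is the current distance to $m$. This is $\Theta(\ell/n)$, not a constant, and is as small as $\approx 1/n$ in the endgame when $m$ is a child of the current vertex. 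In the paper's own notation this is Eq.~\eqref{eqn:angle}: $\tan^2\beta = \eta/\overline{\eta}$, and for a unique marked vertex $\overline{\eta}=\ell$, so with $\eta = n$ the root is sampled with probability $n/(n+\ell)\to 1$. The imperfect (non-eigenvalue-$1$) part of the phase-estimation output cannot rescue you, since its total weight is $\cos^2\beta = \ell/(n+\ell)$ as well, leaving the per-round progress probability $\mathcal{O}(\ell/n)$. Consequently your ``retry from the current $v$'' loop needs $\Theta(n/\ell)$ phase estimations to advance from distance $\ell$; summing along the descent, the small-$\ell$ stages dominate and give $\Theta(n)$ phase estimations in expectation, each of cost $\Theta(\sqrt{Tn})$, so your argument as written only yields a bound of order $\sqrt{T}\,n^{3/2}$ --- i.e., \cref{thm:Mont_2}, not the claimed $\mathcal{O}(\sqrt{Tn}\log^3 n \log(1/\delta))$. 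Your concern about spectral gaps and error accumulation is not where the difficulty lies; the missing idea is that the root weighting must be adapted to the (unknown) remaining distance, $\eta \approx \ell$, which is precisely the effective-resistance tuning that \cref{alg:find_eta} of this paper performs, and which makes both the root overlap and the progress probability constants. Some such device --- adaptive weighting, or whatever Montanaro's original argument employs --- is indispensable, and without it the proof does not close.
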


In many problems, however, we are not given the promise of a unique marked vertex. Unfortunately, as the number of marked vertices is increased, it becomes increasingly unlikely that measuring the output state of phase estimation in the computational basis returns any state other than the input root. Therefore, for trees with potentially more than one marked vertex, Montanaro finds marked vertices by applying his existence algorithm in conjunction with a classical descent. The search algorithm begins by considering each child $c$ of the root vertex $r$ in turn and determining whether at least one marked vertex exists in the subtree $\mathcal{T}(c)$ rooted at $c$ (using the existence routine). For a child $c_0$ of $r$ for which the subtree $\mathcal{T}(c_0)$ is found to contain at least one marked vertex, the algorithm checks whether $c_0$ is marked. If so, $c_0$ is returned, and if not, the procedure is repeated on the children of $c_0$. Once the algorithm finds a marked vertex, it returns the vertex, unmarks it, and restarts at the root, until all $k$ vertices are discovered. Assuming that the degree of every vertex in $\mathcal{T}$ is $O(1)$, the following bounds are thus achieved.

\begin{thm}\label{thm:Mont_2}
     For $\mathcal{T}$, $f$, $h$, $T$, and $n$ defined as in \cref{thm:Mont_0} and any $0 < \delta <1$, there is a quantum algorithm which returns $v$ such that $f(v) =1$ or ``none'' if no such $v$ exists using $\mathcal{O}({\sqrt{T}n^{3/2}\log (n) \log(1/\delta)})$ queries to $f$ and $h$. The algorithm requires $O(1)$ auxiliary operations per query and $\poly(n)$ space, and fails with probability at most $\delta$. It can be repeated to find all $k$ marked vertices using $\mathcal{O}\left(k\sqrt{T}n^{3/2}\log (n)\log(k/\delta)\right)$ queries to $f$ and $h$. 
\end{thm}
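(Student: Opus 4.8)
The plan is to reduce the multiple-marked-vertex search to repeated calls of the existence routine of \cref{thm:Mont_0}, interleaved with a classical descent through the tree. I would first describe a subroutine that returns a single marked vertex. Beginning at the root $r$, query $f(r)$ to check whether $r$ itself is marked; if so, return $r$. Otherwise, iterate over the $\mathcal{O}(1)$ children $c$ of $r$ and, for each, invoke the existence algorithm of \cref{thm:Mont_0} on the subtree $\mathcal{T}(c)$ to decide whether $\mathcal{T}(c)$ contains a marked vertex. Because every subtree $\mathcal{T}(c)$ has at most $T$ vertices and depth at most $n$, the hypotheses of \cref{thm:Mont_0} transfer directly, so each call costs $\mathcal{O}(\sqrt{Tn}\log(1/\delta'))$ for a per-call failure probability $\delta'$ to be fixed below. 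As soon as some child $c_0$ is reported to root a marked subtree, recurse into $c_0$; if $r$ is unmarked and no child's subtree is reported marked, return ``none''.

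Correctness of the descent rests on the invariant that the current vertex roots a subtree containing a marked vertex. Conditioned on every existence call along the descent path answering correctly, this invariant is preserved at each step: we either detect that the current vertex is marked and terminate, or we identify a child inheriting the invariant. Since each step strictly increases the depth, the descent terminates at a marked vertex after at most $n$ steps. For the query complexity, a single descent visits at most $n$ levels and makes $\mathcal{O}(1)$ existence calls per level, hence $\mathcal{O}(n)$ calls in total. To guarantee that the whole descent is correct with probability at least $1-\delta$, I would take a union bound over these calls and set $\delta' = \Theta(\delta/n)$, so that $\log(1/\delta') = \mathcal{O}(\log n + \log(1/\delta))$. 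Using the crude estimate $\log n + \log(1/\delta) = \mathcal{O}(\log(n)\log(1/\delta))$, the cost of one search is $\mathcal{O}(n)\cdot\mathcal{O}(\sqrt{Tn}\log(1/\delta')) = \mathcal{O}(\sqrt{T}\,n^{3/2}\log(n)\log(1/\delta))$, matching the stated bound; the auxiliary-operation and space bounds are inherited from \cref{thm:Mont_0}, as the descent adds only classical bookkeeping.

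To find all $k$ marked vertices, I would repeat the single-vertex subroutine $k$ times, each time unmarking the returned vertex (updating $f$) and restarting at the root, until the subroutine reports ``none''. Controlling the failure probability accumulated over these $k$ searches requires a second union bound: running each search with failure probability $\delta/k$ replaces $\log(1/\delta)$ by $\log(k/\delta)$ and yields the overall bound $\mathcal{O}\!\left(k\sqrt{T}\,n^{3/2}\log(n)\log(k/\delta)\right)$.

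The main obstacle is the error analysis rather than the algorithmic design. Since a single descent makes $\Theta(n)$ probabilistic existence calls and the full procedure makes $\Theta(kn)$ of them, the per-call confidence must be boosted enough that the union bounds keep the total failure probability below $\delta$; this amplification is exactly what produces the extra logarithmic factors beyond $\sqrt{T}\,n^{3/2}$. One also has to confirm that unmarking a discovered vertex does not corrupt later searches, which is immediate because each search operates independently on the current marking function $f$.
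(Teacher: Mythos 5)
Your proposal is correct and follows essentially the same route as the paper, which presents this theorem as Montanaro's result obtained by interleaving the existence routine of \cref{thm:Mont_0} with a classical descent over the $\mathcal{O}(1)$ children at each level, unmarking each found vertex and restarting at the root to collect all $k$ solutions. The union-bound bookkeeping you supply (per-call failure probability $\Theta(\delta/n)$ for one search, $\Theta(\delta/(kn))$ across the $k$ repetitions) is precisely what yields the stated $\log(n)\log(1/\delta)$ and $\log(n)\log(k/\delta)$ factors.
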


Ambainis and Kokainis reduce the number of steps needed for trees with multiple marked vertices by introducing a tree-size estimation algorithm in Ref.\ \cite{ambainis2017quantum}. The result is similar to \Cref{thm:Mont_2} with the upper bound $T$ on the size of the tree replaced by the number of vertices $T'$ actually visited by a classical backtracking algorithm.

\subsection{Main results}

In Ref.\ \cite{montanaro2015quantum}, Montanaro leaves open the question of whether an algorithm similar to that of \Cref{thm:Mont_1} can be used to obtain a similar bound for trees with an arbitrary number of marked vertices. We proceed to such a generalisation in the present work.

We first establish the following theorem, which utilises the effective resistance $\overline\eta$ between the root vertex of the input tree $\mathcal{T}$ and the set of marked vertices in $\mathcal{T}$ (which we will refer to as the ``effective resistance of $\mathcal{T}$''). Note that for trees with many marked vertices, this effective resistance can be substantially smaller than the resistance between $r$ and a particular marked vertex, but never less than $1/d_r$ where $d_r$ is the degree of the root. 
\begin{thm}\label{thm: k_exist}
     For $\mathcal{T}$, $f$, $h$, $T$, and $n$ defined as in \cref{thm:Mont_0} and any $0 < \delta <1$, there is a quantum algorithm which determines whether $\{v \in V(\mathcal{T}) | f(v) = 1\} = \emptyset$ in $\mathcal{O}({\sqrt{T\overline\eta}\log(1/\delta)})$ queries to $f$ and $h$, where $\overline{\eta}$ is the effective resistance of $\mathcal{T}$ (not necessarily known beforehand). The algorithm requires $O(1)$ auxiliary operations per query and $\poly(n)$ space, and fails with probability at most $\delta$. 
\end{thm}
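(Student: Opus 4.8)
The plan is to retain Montanaro's phase-estimation scaffolding from \cref{thm:Mont_0} but to make the phase-estimation precision \emph{adaptive}, tied to the effective resistance $\overline\eta$ rather than fixed at the worst-case value $\sim 1/\sqrt{Tn}$. First I would instantiate the two diffusion operators (\cref{sec: diffusion_operators}) on the span of basis states indexed by the tree vertices and form the walk operator $W$ as their product, taking the root state $\ket{r}$ as input. As in the Belovs--Montanaro framework \cite{belovs2013quantum,montanaro2015quantum}, existence of a marked vertex is certified by running phase estimation on $W$ with input $\ket{r}$ and testing whether the outcome phase $0$ (eigenvalue $1$) is observed.

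The crucial new ingredient is a spectral dichotomy phrased through $\overline\eta$ instead of the depth $n$. Specialising Belovs' flow/potential witnesses to trees, I would show: (i) if a marked vertex exists, the optimal unit flow from $r$ to the marked set has energy $\Theta(\overline\eta)$, and this flow furnishes a normalised eigenvalue-$1$ eigenvector $\ket\psi$ of $W$ with $|\braket{r|\psi}|^2 = \Omega\!\left(1/(T\overline\eta)\right)$; and (ii) if no vertex is marked, the phase gap of $W$ at phase $0$ is $\Omega(1/\sqrt{Tn})$, corresponding to the worst case $\overline\eta \approx n$. Consequently, phase estimation run at precision $\Theta(1/\sqrt{T\overline\eta})$ --- which costs $O(\sqrt{T\overline\eta})$ applications of $W$ and hence $O(\sqrt{T\overline\eta})$ queries to $f$ and $h$ --- returns phase $0$ with constant probability in the marked case, while in the unmarked case no phase-$0$ outcome survives down to precision $1/\sqrt{Tn}$.

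Since $\overline\eta$ is not known in advance, the next step is to search for the correct precision. I would run the test at geometrically increasing guesses $\eta_i = 2^i$, applying amplitude estimation to the ``phase $0$'' flag --- as in \cite{ito2015approximate,jeffery2017quantum} --- to estimate the overlap $|\braket{r|\psi}|^2$ and thereby read off $\overline\eta$, halting as soon as a marked signature appears (declaring ``marked'') or once $2^i$ exceeds $n$ (declaring ``unmarked''). Because the per-round cost is $O(\sqrt{T\eta_i})$, these costs form a geometric series dominated by their largest term, so the total is $O(\sqrt{T\overline\eta})$ once $\eta_i$ first surpasses $\overline\eta$ (and $O(\sqrt{Tn})$ in the unmarked worst case, recovering Montanaro's bound). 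This is precisely the estimate reported by \cref{alg:find_eta}, and it simultaneously fixes the near-optimal diffusion weighting used downstream. Finally, repeating the whole test $O(\log(1/\delta))$ times and taking a majority vote drives the failure probability below $\delta$ by a Chernoff bound, yielding the stated $O(\sqrt{T\overline\eta}\log(1/\delta))$ complexity with $\poly(n)$ space and $O(1)$ auxiliary operations per query.

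The main obstacle I anticipate is part (i) of the dichotomy: converting the classical effective-resistance/flow certificate into a quantum eigenvalue-$1$ eigenvector with a quantitatively sharp overlap bound, and confirming that this bound degrades gracefully --- never worse than Montanaro's $n$ --- as the number of marked vertices varies. A secondary subtlety is ensuring that spurious phase-$0$ detections in the unmarked case stay suppressed across \emph{all} intermediate precisions of the geometric search, which requires the gap bound (ii) to hold uniformly rather than only at the finest precision.
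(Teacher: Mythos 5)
Your algorithmic skeleton --- the $\eta$-weighted root diffusion of \cref{sec: diffusion_operators}, a geometric sweep of guesses $\eta_i$ capped at $n$, amplitude estimation of the ``phase $0$'' flag, a geometric-series cost bound dominated by the last round, and Chernoff-style repetition to reach failure probability $\delta$ --- is exactly the paper's \cref{alg:find_eta}. The problem is the quantitative core, your dichotomy (i), which is wrong and not merely imprecise. With these diffusion operators the eigenvalue-$1$ eigenvector built from the optimal unit flow has squared overlap with the root equal to $\eta/(\eta+\overline\eta)$, not $\Omega(1/(T\overline\eta))$: the paper's $\kappa$-machinery (\cref{thm:tree_properties}, \cref{thm: eta_bound}) gives $\tan^2\beta = \eta\,\kappa_r^2 = \eta/\overline\eta$ [Eq.\ \eqref{eqn:angle}], so the overlap is exactly $1/2$ at $\eta=\overline\eta$. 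The figure $1/(T\overline\eta)$ belongs to walks started from a uniform or stationary superposition over vertices, not from $\ket{r}$. This error contradicts your own next sentence --- an overlap of order $1/(T\overline\eta)$ cannot make phase estimation return phase $0$ ``with constant probability'' --- and, taken at face value, it destroys the claimed complexity: distinguishing a phase-$0$ probability of order $1/(T\overline\eta)$ from the unmarked baseline requires amplitude estimation at precision $\sim 1/\sqrt{T\overline\eta}$, i.e.\ $\Theta(\sqrt{T\overline\eta})$ repetitions of a phase-estimation circuit that itself costs $\Theta(\sqrt{T\overline\eta})$ walk steps, for a total of $\Theta(T\overline\eta)$. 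The constant overlap at $\eta\approx\overline\eta$ is precisely what lets constant-precision amplitude estimation suffice and what provides the exit signal ($\sin^2\beta\approx 1/2$) for the sweep; establishing it is the heart of the paper's proof (Section \ref{sec:state_features} together with \cref{sec:math}), and your proposal defers it (``the main obstacle I anticipate'') rather than supplying it.

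A second, smaller discrepancy: your dichotomy (ii) invokes a \emph{phase gap} of $\Omega(1/\sqrt{Tn})$ in the unmarked case, but no such spectral gap is established or needed in this framework. What the paper (following Montanaro) actually uses is the effective spectral gap lemma (\cref{lem: spectral_gap}) applied to an explicit witness $\ket{\xi}$ with $\Pi_A\ket{\xi}=0$ and $\Pi_B\ket{\xi}$ equal to the relevant state, giving $\norm{P_\epsilon\ket{\Phi^\perp}} = \mathcal{O}(\epsilon\sqrt{T\eta})$ in the marked case (\cref{lem: P_epsilon}) and the analogous bound for $\ket{r}$ in the unmarked case. Taking $\epsilon = \Theta(\delta/\sqrt{T\eta_i})$ then keeps the spurious phase-$0$ mass at $\mathcal{O}(\delta^2)$ uniformly over every intermediate $\eta_i$ --- this single statement handles both the false-positive suppression you worry about in your last paragraph and the per-round cost $\mathcal{O}(\sqrt{T\eta_i})$. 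So the skeleton of your argument is the paper's, but both spectral claims that drive the complexity need to be replaced by the paper's actual statements before this constitutes a proof.
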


We use \Cref{thm: k_exist} to bound the number of steps taken to find a marked vertex. Leting $\mathcal{T}(v)$ denote the subtree of $\mathcal{T}$ rooted at $v$ and $\overline{\eta}(v)$ the effective resistance between $v$ and the set of marked vertices in $\mathcal{T}(v)$, we define the \emph{maximum} effective resistance $\overline{\eta}_{\mathrm{max}}$ as the largest finite $\overline{\eta}(v)$ over all subtrees $\mathcal{T}(v)$.
The polylogarithmic factors in the theorem below depend on $\overline{\eta}$ ($=\overline\eta(r)$), which may be smaller than $\overline\eta_{\max}$.
\begin{thm}\label{thm: k_marked}
     Let $\mathcal{T}$, $f$, $h$, $T$, $n$, and $\overline\eta$ be defined as in \cref{thm: k_exist}. Let $\eta(v)$ denote the effective resistance of the subtree rooted at $v$, and define \begin{equation} \label{eta_max} \overline{\eta}_{\mathrm{max}} = \sup_{\substack{v\in V(\mathcal{T}) \\ \overline{\eta}(v) < \infty}}\overline{\eta}(v). \end{equation} Then, for any $0 < \delta < 1$, there is a quantum algorithm which returns a $v \in V(\mathcal{T})$ such that $f(v) = 1$ or ``none'' if no such $v$ exists using $\mathcal{O}(\sqrt{T\overline\eta_{\mathrm{max}}}\log^4 (k\overline\eta) \log(1/\delta))$ queries to $f$ and $h$. The algorithm requires $O(1)$ auxiliary operations per query and $\poly(n)$ space, and fails with probability at most $\delta$. 
\end{thm}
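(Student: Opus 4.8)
The plan is to extend the existence routine of \Cref{thm: k_exist} with a state-preparation-and-descent procedure. We first run that routine; if it reports no marked vertex we return ``none,'' and otherwise we (i) prepare a state concentrated on marked vertices and their ancestors, (ii) measure it to obtain a vertex that is an ancestor of (or equal to) a marked vertex but lies deep in $\mathcal{T}$, and (iii) recurse on the subtree rooted at the measured vertex, repeating until a marked vertex is returned.

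For the preparation step, I would introduce a real weighting parameter $w$ into the diffusion operators (subsection \ref{sec: diffusion_operators}). When $\mathcal{T}$ contains a marked vertex, the walk operator has an eigenvalue-$1$ eigenstate whose support is confined to marked vertices and their ancestors, since in the electrical picture current flows only along root-to-marked paths; the role of $w$ is to control how the amplitude of this eigenstate is distributed across depths. The optimal weight $w^\ast$, which biases the amplitude toward deep vertices, depends on the unknown effective resistance $\overline\eta$, so I would locate it by running amplitude estimation on the acceptance probability of phase estimation across a dyadic grid of candidate weights. Phase estimation at $w = w^\ast$ then projects onto a neighbourhood of the desired eigenstate, which I measure in the vertex basis; each such preparation costs $\widetilde{\mathcal{O}}(\sqrt{T\overline\eta_{\mathrm{max}}})$ queries because the relevant walk acts on a subtree of size at most $T$ and effective resistance at most $\overline\eta_{\mathrm{max}}$.

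Next I would establish the two facts driving the descent. First, by the support property every measurement outcome $v$ is either marked --- in which case we halt --- or a proper ancestor of a marked vertex, so recursion on $\mathcal{T}(v)$ is well-defined and $\overline\eta(v) \le \overline\eta_{\mathrm{max}}$. Second, and more importantly, I would show that the depth-biased measurement distribution contracts the effective resistance: with constant probability the measured $v$ satisfies $\overline\eta(v) \le \tfrac{1}{2}\overline\eta(u)$, where $u$ is the current input vertex. Iterating, the effective resistance shrinks geometrically from $\overline\eta$ toward the smallest value attainable by a marked-vertex-containing subtree, forcing termination at a marked vertex within an expected $\mathcal{O}(\log(k\overline\eta))$ measurements; the $k$ enters through the lower bound on this smallest attainable effective resistance.

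Combining these, the query count is the per-measurement cost $\widetilde{\mathcal{O}}(\sqrt{T\overline\eta_{\mathrm{max}}})$ times the $\mathcal{O}(\log(k\overline\eta))$ measurements, where the dyadic search for $w^\ast$, the amplitude-estimation precision, and the error reduction needed to make each step succeed with high probability together supply the remaining factors and yield the stated $\mathcal{O}(\sqrt{T\overline\eta_{\mathrm{max}}}\log^4(k\overline\eta)\log(1/\delta))$ bound, with the final $\log(1/\delta)$ from majority voting on the overall output. I expect the main obstacle to be the second structural fact: proving that the $w^\ast$-weighted eigenstate places enough amplitude on deep vertices that one measurement contracts the effective resistance by a constant factor, and that this holds uniformly enough to bound the expected number of measurements logarithmically rather than by the depth $n$. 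A secondary hurdle is showing that $w^\ast$ can be pinpointed by amplitude estimation despite $\overline\eta$ being unknown, at only polylogarithmic overhead.
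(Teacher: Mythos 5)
Your architecture is essentially the paper's: your weight $w$ is the paper's parameter $\eta$ in the root diffusion operator, your dyadic-grid search with amplitude estimation for $w^\ast$ is \Cref{alg:find_eta}, and your prepare--measure--recurse loop with re-estimation at each new vertex is \Cref{alg:find_marked}, with the same per-step cost $\widetilde{\mathcal{O}}(\sqrt{T\overline\eta_{\mathrm{max}}})$ and the same source of the $\log(1/\delta)$ factor. The gap is the structural fact you yourself flag as the main obstacle, and it is not just hard to prove --- it is false. Counterexample: let $\mathcal{T}$ be a root with $k$ branches, each a path of $L$ edges ending in a marked leaf, with $L \geq 2k$. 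Then $\overline\eta = (L+1)/k$, and since all non-root vertices in the solution tree have equal $\kappa_v$, the measurement distribution conditioned on leaving the root is uniform over the $k(L+1)$ non-root vertices. The subtree rooted at depth $j$ of a branch has resistance $L+1-j$, so the probability that a single measurement halves the resistance is $\mathcal{O}(1/k)$ --- vanishing, not constant. Worse, the typical measurement lands near depth $(L+1)/2$ and \emph{raises} the resistance to about $(L+1)/2$, a multiplicative increase of order $k$; the resistance is not even non-increasing in expectation, so no geometric-decay or drift argument on $\overline\eta(v)$ alone can yield the $\mathcal{O}(\log(k\overline\eta))$ bound without also controlling these unbounded upward jumps, which your outline does not do. (This non-monotonicity is exactly why the theorem charges each step at $\sqrt{T\overline\eta_{\mathrm{max}}}$ rather than $\sqrt{T\overline\eta}$.)

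What the paper proves instead (\Cref{thm:fedja} in Appendix \ref{sec:expected_steps}) is a hitting-time bound for the descent viewed as a Markov chain on $\widetilde{\mathcal{T}}$ with transition probabilities proportional to $\kappa_u^2$ over the strict subtree of the current vertex: the expected number of measurements to reach $\mathcal{M}$ from $r$ is at most $\log\left[k(\overline\eta+1)\right]$. The proof is an induction over the tree using the potential $\sum_{m\in\mathcal{M}(v)}\frac{\kappa_m}{\kappa_v}\log\left(\kappa_v\frac{\overline\eta(v)+1}{\kappa_m}\right)$, with \Cref{cor:ratio} relating parent and child resistances and Jensen's inequality applied at the root; the $\log k$ term arises as the entropy of the leaf distribution $(\kappa_m/\kappa_r)_{m\in\mathcal{M}}$, not through the lower bound $\overline\eta \geq 1/k$ of \Cref{thm: eta_bound} that you invoke. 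This potential decreases in expectation even on steps where the resistance jumps up, which is precisely what your invariant cannot absorb. Replacing your contraction claim with a hitting-time bound of this type is what is needed; the rest of your outline then goes through as in the paper.
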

If desired, the algorithm of the above theorem can be repeated $k$ times to return all $k$ marked vertices, resulting in an overall runtime of $\mathcal{O}(k\sqrt{T n}\log^4 (k n)\allowbreak \log(k/\delta))$. Although this is a loose bound, based on the fact (proven in Section \ref{sec:math}) that the effective resistance is upper bounded by the depth of the tree, it already achieves an improvement over \Cref{thm:Mont_2}. To tighten the bound, one would need to take into account the change in effective resistance that results from unmarking marked vertices that have been found. This is an interesting question in its own right, but beyond the scope of this paper. In \Cref{sec:grover}, we explore an example problem (Grover search) in which the bound can be tightened and show that in at least some cases our algorithm achieves optimal scaling.

\section{Preliminaries}

\subsection{Setting}
We consider a tree $\mathcal{T}$ of bounded degree, with vertex set $V(\mathcal{T})$ and edge set $E(\mathcal{T})$. We assume access to the following:
\begin{enumerate}
    \item the root vertex $r$ of $\mathcal{T}$,
    \item upper bounds on the depth and size of $\mathcal{T}$,
    \item an upper bound $d$ on the degree of each vertex in $\mathcal{T}$,
    \item an oracle $h$ which, for each $v \in V(\mathcal{T})$, returns a tuple that encodes the children of $v$,
    \item an oracle which efficiently evaluates a function $f:V(\mathcal{T}) \longrightarrow \{0,1\}$.
\end{enumerate}

We call a vertex $v \in V(\mathcal{T})$ ``marked'' if $f(v) = 1$ and ``not marked'' if $f(v)=0$.  For simplicity, we will assume that $f(r) = 0$, i.e., that the root vertex is never marked. This could always be ensured by first checking whether the root is marked and if so, returning and unmarking it.

The purpose of our algorithm is to find the marked vertices in $\mathcal{T}$. Each run of the algorithm returns one of the ``shallowest" marked vertices, i.e., a marked vertex that has no marked ancestors. In a standard CSP, such as the Boolean satisfiability problem (SAT), where marked vertices can be associated with satisfying solutions, it is typically the case that if a particular vertex is marked, then so are all of its descendants. In such contexts, therefore, only the set of these shallowest marked vertices is of interest. In the abstract, however, one can easily imagine a setting where, despite a vertex being marked, a subset of its descendants are not. Our method is able to find \textit{all} of the marked vertices in either scenario, simply by unmarking a marked vertex once it is found, thereby giving access to its next shallowest marked descendants (if any) in the next run.

\subsection{Notation}

We use the following notation to describe the input tree $\mathcal{T}$: 

\begin{itemize}
    \item $T$ denotes an upper bound on the number of vertices in $\mathcal{T}$. 
    \item $n$ denotes an upper bound on the depth of $\mathcal{T}$.
    \item $d$ denotes an upper bound on the degree of any vertex in $\mathcal{T}$. 
    \item $\ell_{v}$ denotes the depth of a vertex $v \in V(\mathcal{T})$, and $d_{v}$ denotes its degree (accordingly, $\ell_{v} \leq n$ and $d_{v} \leq d$ for all $v \in V(\mathcal{T})$).
    \item $k$ is the total number of marked vertices in $\mathcal{T}$.
    \item For any vertex $v \in V(\mathcal{T})$, we write $\tree{v}$ to denote the subtree of $\mathcal{T}$ rooted at $v$.
    \item For two vertices $v$ and $u$ where $u$ is a vertex in $\tree{v}$, we write $\mathcal{P}(v,u)$ to denote the (shortest) path in $\mathcal{T}$ connecting $v$ to $u$.
    \item $c \leftarrow v$ indicates that vertex $c$ is a child of vertex $v$. A summation indexed by ``$c \leftarrow v$" is hence a sum over all of the children of vertex $v$. 
\end{itemize}

Additionally, we adopt a few conventions for ease of presentation. First, where we expect no confusion, we abusively write $u \in \tree{v}$ instead of $u \in V(\tree{v})$ and $(u,w) \in \tree{v}$ instead of $(u,w) \in E(\tree{v})$. That is, we use a single identifying variable to represent a vertex and a pair to represent an edge, dropping the explicit specification of the appropriate set. We also follow the convention that an empty sum evaluates to zero. Finally, we denote the set of all ``shallowest" marked vertices in the input tree $\mathcal{T}$ by $\mathcal{M}$. This is the subset of marked vertices that do not have any marked ancestors. To be precise,
\[ \mathcal{M} \equiv \{m \in V(\mathcal{T}): f(m) = 1, f(v) = 0 \enspace \forall v \in \mathcal{P}(r,m)\setminus \{m\}\}. \] 
We then define $\mathcal{M}(v) \equiv \mathcal{M} \cap V(\tree{v})$ for any $v \in \mathcal{T}$ as the set of shallowest marked vertices in the subtree $\tree{v}$ rooted at $v$.

\subsection{Phase and Amplitude estimation}

Our algorithm, like that in Ref.\ \cite{montanaro2015quantum}, applies quantum phase estimation on a particular unitary operator, with input state $\ket{r}$. We recall the statement of phase estimation from Ref.\ \cite{montanaro2015quantum}.

\begin{thm}[phase estimation \cite{montanaro2015quantum}] \label{thm: phase_estimation}
    For every integer $s \geq 1$ and every unitary operator $U$ on $m$ qubits, there exists a uniformly generated quantum circuit $C$ such that $C$ acts on $m + s$ qubits and the following hold.
    \begin{enumerate}
        \item $C$ uses the controlled-$U$ operator $\mathcal{O}({2^{s}})$ times and contains $\mathcal{O}({s^{2}})$ other gates.
        \item If $U\ket{\psi} = \ket{\psi}$, then $C\ket{\psi}\ket{0^{s}} = \ket{\psi}\ket{0^{s}}$.
        \item If $U\ket{\psi} = e^{2i\theta}\ket{\psi}$ with $\theta \in (0,\pi)$, then $C \ket{\psi}\ket{0^{s}} = \ket{\psi}\ket{\omega}$, where 
        \[|\bra{\omega}0^{s}\rangle|^{2} = \sin^{2}(2^{s}\theta)/(2^{2s}\sin^{2}\theta).\]
        \item For any $\ket{\phi} = \sum_{j}\lambda_{j}\ket{\psi_{j}}  \in (\mathbb{C}^{2})^{\otimes m}$, where $U\ket{\psi_{j}} = e^{2i\theta_{j}}\ket{\psi_{j}}$,
        \[ C\ket{\phi}\ket{0^{s}} = \sum_{j}\lambda_{j}\ket{\psi_{j}}\ket{\omega_{j}} \]
        with $\sum_{j:\,\theta_{j}\geq \epsilon}|\bra{\omega_{j}}0^{s}\rangle|^{2} = \mathcal{O}({1/(2^{s}\epsilon)})$.
    \end{enumerate}
\end{thm}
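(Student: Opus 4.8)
The plan is to exhibit the textbook phase-estimation circuit and verify the four claims by direct computation; only the tail bound in item~4 is more than bookkeeping. I would build $C$ in three stages on the $m$-qubit register holding $\ket{\psi}$ together with the fresh $s$-qubit ancilla initialised to $\ket{0^s}$. Apply $H^{\otimes s}$ to the ancilla to produce $\frac{1}{\sqrt{2^s}}\sum_{y=0}^{2^s-1}\ket{y}$; then, for each ancilla qubit $j\in\{0,\dots,s-1\}$, apply controlled-$U^{2^j}$ with that qubit as control, realising $U^{2^j}$ by $2^j$ repetitions of controlled-$U$. The number of controlled-$U$ calls is $\sum_{j=0}^{s-1}2^j = 2^s-1 = \mathcal{O}(2^s)$, which is the count in item~1; the final stage, the inverse quantum Fourier transform $\mathrm{QFT}^{-1}$ on the ancilla, together with the $s$ Hadamards, contributes the $\mathcal{O}(s^2)$ other gates. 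Item~2 is then immediate: if $U\ket{\psi} = \ket{\psi}$ the controlled stage acts trivially, the ancilla stays in the uniform superposition $\mathrm{QFT}\,\ket{0^s}$, and $\mathrm{QFT}^{-1}$ returns it to $\ket{0^s}$, so $C\ket{\psi}\ket{0^s} = \ket{\psi}\ket{0^s}$.

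For item~3, acting on an eigenvector with $U\ket{\psi} = e^{2i\theta}\ket{\psi}$, the controlled stage sends $\ket{y}\ket{\psi} \mapsto e^{2i\theta y}\ket{y}\ket{\psi}$, so the system register factors out and the ancilla becomes $\ket{\omega} = \mathrm{QFT}^{-1}\frac{1}{\sqrt{2^s}}\sum_y e^{2i\theta y}\ket{y}$, giving $C\ket{\psi}\ket{0^s} = \ket{\psi}\ket{\omega}$. Since $\bra{0^s}\mathrm{QFT}^{-1} = \frac{1}{\sqrt{2^s}}\sum_y\bra{y}$, the overlap is the geometric sum $\langle 0^s|\omega\rangle = \frac{1}{2^s}\sum_{y=0}^{2^s-1}e^{2i\theta y} = \frac{1}{2^s}\cdot\frac{e^{2i\cdot 2^s\theta}-1}{e^{2i\theta}-1}$, and applying $|e^{i\alpha}-1|^2 = 4\sin^2(\alpha/2)$ to numerator and denominator yields exactly $|\langle 0^s|\omega\rangle|^2 = \sin^2(2^s\theta)/(2^{2s}\sin^2\theta)$. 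The $\theta\to 0$ limit of this expression equals $1$, consistent with item~2.

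Item~4 follows from linearity of $C$, which gives $C\ket{\phi}\ket{0^s} = \sum_j\lambda_j\ket{\psi_j}\ket{\omega_j}$ with $\ket{\omega_j}$ the state above for phase $\theta_j$. The one genuinely delicate step is the tail estimate $\sum_{j:\,\theta_j\geq\epsilon}|\langle 0^s|\omega_j\rangle|^2 = \mathcal{O}(1/(2^s\epsilon))$. I would start from item~3's formula, bounding $|\langle 0^s|\omega_j\rangle|^2 \le 1/(2^{2s}\sin^2\theta_j)$ and using $\sin\theta_j \ge (2/\pi)\theta_j$ on the relevant range to obtain a per-eigenphase contribution of order $1/(2^s\theta_j)^2$. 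The subtlety is that a naive term-by-term sum does not suffice when many eigenphases cluster just above $\epsilon$; the clean route is to treat $g(\theta) := \sin^2(2^s\theta)/(2^{2s}\sin^2\theta)$ as (the normalisation of) the Dirichlet--Fej\'er kernel, which is symmetric in its dependence on the measurement outcome and on the phase, so that $\sum_{j:\,\theta_j\geq\epsilon} g(\theta_j)$ is governed by the same tail $\sum_{y\ge 2^s\epsilon/\pi} y^{-2} = \mathcal{O}(1/(2^s\epsilon))$ that bounds the probability of phase estimation misreporting a phase by at least $\epsilon$. I expect the bulk of the effort to go into making this kernel-tail argument precise, in particular handling phases near $\pi$ where $g$ is again close to $1$, since the first three items reduce to direct calculation once the circuit is fixed.
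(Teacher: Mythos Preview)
The paper does not prove this theorem at all: it is quoted verbatim as a known result from Ref.~\cite{montanaro2015quantum} (introduced with ``We recall the statement of phase estimation from Ref.~\cite{montanaro2015quantum}''), and is used as a black box in Section~\ref{sec: prepare_phi}. So there is no proof in the paper to compare your attempt against.

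That said, your sketch for items~1--3 is the standard textbook derivation and is correct. Your instinct on item~4 is also right that this is the only nontrivial part, and you correctly flag the clustering issue. Note, however, that the bound as literally stated---an \emph{unweighted} sum $\sum_{j:\theta_j\ge\epsilon}|\langle 0^s|\omega_j\rangle|^2$ over eigenphases---cannot hold for an arbitrary unitary on $m$ qubits with arbitrarily many distinct eigenvalues just above $\epsilon$; the Fej\'er-kernel tail argument you outline controls a sum over \emph{outcomes} (or, equivalently, over phases spaced by $\pi/2^s$), not over arbitrarily placed eigenphases. In the paper's actual application (Eq.~\eqref{eqn: lower_bound}) the sum is implicitly over the eigenvectors appearing in the decomposition of a fixed unit vector, and what is really needed is a bound on $\sum_j|\lambda_j\mu_j|^2$; the stronger unweighted statement is inherited from the cited reference rather than established here.
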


Amplitude estimation, which is also used in our procedure, is based on phase estimation \cite{brassard2002quantum}. We state it a bit differently below since we will not need to analyse the resulting quantum state. Rather, we are mostly interested in determining whether the squares of probability amplitudes are within a constant of ${1}/{2}$, which is quite similar to distinguishing the state $\ket{0^s}$ from the rest of the state space of the ancilla qubits.
\begin{thm}[amplitude estimation] \label{thm:amplitude_estimation}
    For every integer $s \geq 1$ and every unitary operator $U$ on $m$ qubits, there exists a uniformly generated quantum circuit $C$ such that $C$ acts on $m + s$ qubits and the following hold.
    \begin{enumerate}
        \item $C$ uses the controlled-$U$ operator $\mathcal{O}({2^s})$ times and contains $\mathcal{O}({s^2})$ other gates.
        \item If $U\ket{\psi_\mathrm{bad}} = \ket{\psi_\mathrm{bad}}$, then $C\ket{\psi_\mathrm{bad}}\ket{0^s} = \ket{\psi_\mathrm{bad}}\ket{0^s}$. Similarly, if $U\ket{\psi_\mathrm{good}} = -\ket{\psi_\mathrm{good}}$, then $C\ket{\psi_\mathrm{good}} \ket{0^s} = \ket{\psi_\mathrm{good}} \ket{\pi/2}$.
        \item If $\ket\psi = \sin\theta \ket{\psi_\mathrm{good}} + \cos\theta\ket{\psi_\mathrm{bad}}$ for $\theta \in (0,\pi)$, then a measurement of the second register of $C\ket{\psi}\ket{0^s} = \sum_i \ket{\psi_i} \ket{\theta_i}$ returns a state $\ket{\widetilde\theta}$ such that for $\epsilon = {k}/{2^s}$ and some integer $1< k \leq 2^s$.
            \begin{enumerate}
                \item $\Pr{\abs*{\widetilde\theta - \theta} \leq \epsilon} = 1-\mathcal{O}({{1}/({2^{s}\epsilon})})$ and
                \item $\Pr{\abs{\theta - \overline\theta} > 2\epsilon \vert \abs{\widetilde\theta - \overline\theta} \leq \epsilon} = \mathcal{O}({{1}/({2^s\epsilon})})$ for some constant $\overline\theta \in [0,\pi)$, and where the differences are taken \!\!\!$\mod \pi$.
            \end{enumerate}
    \end{enumerate}
\end{thm}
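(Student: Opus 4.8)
The plan is to deduce the theorem from the phase estimation result, \cref{thm: phase_estimation}, exactly as standard amplitude estimation \cite{brassard2002quantum} is derived from phase estimation; indeed, parts~1 and~2 are little more than a restatement of the corresponding parts of \cref{thm: phase_estimation}. I would take $C$ to be the phase estimation circuit of \cref{thm: phase_estimation} applied to $U$, so that part~1 is immediate: $C$ calls controlled-$U$ $\mathcal{O}(2^s)$ times and uses $\mathcal{O}(s^2)$ auxiliary gates. For part~2, the case $U\ket{\psi_\mathrm{bad}} = \ket{\psi_\mathrm{bad}}$ is exactly part~2 of \cref{thm: phase_estimation} (the eigenvalue-$1$ eigenvector is returned undisturbed with ancilla $\ket{0^s}$), while the case $U\ket{\psi_\mathrm{good}} = -\ket{\psi_\mathrm{good}}$ is the eigenphase $\theta = \pi/2$ in the convention $U\ket{\psi} = e^{2\mathrm{i}\theta}\ket\psi$ of \cref{thm: phase_estimation}; since $\pi/2$ is exactly representable on $s$ bits (it corresponds to the integer $2^{s-1}$), phase estimation is exact there and returns the ancilla $\ket{\pi/2}$.

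The substance is part~3. Writing $\ket\psi = \sin\theta\,\ket{\psi_\mathrm{good}} + \cos\theta\,\ket{\psi_\mathrm{bad}}$, I would work in the two-dimensional subspace $\mathcal{H}_\theta = \operatorname{span}\{\ket{\psi_\mathrm{good}}, \ket{\psi_\mathrm{bad}}\}$, which is invariant under $U$ (the amplitude-amplification operator built from the good/bad reflection). On $\mathcal{H}_\theta$, $U$ acts as a rotation by $2\theta$, so it has the two eigenvectors $\ket{\psi_\pm} = \tfrac{1}{\sqrt 2}\big(\ket{\psi_\mathrm{good}} \pm \mathrm{i}\,\ket{\psi_\mathrm{bad}}\big)$ with eigenvalues $e^{\pm 2\mathrm{i}\theta}$. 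The key computation is to re-expand $\ket\psi$ in this eigenbasis, which yields equal weight $\tfrac12$ on $\ket{\psi_+}$ and $\ket{\psi_-}$ (up to phases $e^{\pm\mathrm{i}\theta}$). Feeding this into part~4 of \cref{thm: phase_estimation} then gives $C\ket\psi\ket{0^s} = \sum_i\ket{\psi_i}\ket{\theta_i}$, in which the ancilla register is concentrated near the two eigenphases $\theta$ and $\pi-\theta$ (the image of $-\theta$ modulo $\pi$); both map to the same amplitude $\sin^2\theta$, which is why the differences in parts~3(a) and~3(b) must be read modulo $\pi$ and why the target value is recorded as an unspecified constant $\overline\theta$.

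To get the concentration bounds, I would set $\epsilon = k/2^s$ and apply the tail estimate of part~4 of \cref{thm: phase_estimation}: the total ancilla weight on estimates further than $\epsilon$ from the nearest eigenphase is $\mathcal{O}(1/(2^s\epsilon))$. Since measuring the ancilla is, up to this tail, the same as reading off one of the two eigenphases, the probability that the outcome $\widetilde\theta$ lands within $\epsilon$ of $\theta$ (mod $\pi$) is $1 - \mathcal{O}(1/(2^s\epsilon))$, giving~3(a). Part~3(b) is the corresponding ``converse'' statement: conditioned on $\widetilde\theta$ being within $\epsilon$ of some center $\overline\theta$, the true $\theta$ lies within $2\epsilon$ of $\overline\theta$ except with probability $\mathcal{O}(1/(2^s\epsilon))$; I would obtain this from the same tail bound together with the triangle inequality, noting that an outcome near $\overline\theta$ can arise either from $\theta$ being near $\overline\theta$ (the high-weight branch) or from the $\mathcal{O}(1/(2^s\epsilon))$ tail.

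The main obstacle I anticipate is bookkeeping the two-eigenphase/reflection ambiguity cleanly: because $U$ cannot distinguish $\theta$ from $\pi-\theta$, the statement is phrased in terms of distances modulo $\pi$ and an auxiliary center $\overline\theta$ rather than $\theta$ directly, and part~3(b) must be set up so that the conditioning does not secretly exclude the tail branch. Once the eigen-decomposition of $\ket\psi$ and the modular convention are pinned down, parts~3(a) and~3(b) follow from the single tail bound of \cref{thm: phase_estimation} and elementary probability, and everything else is inherited verbatim from phase estimation.
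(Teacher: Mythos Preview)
The paper does not supply a proof of \cref{thm:amplitude_estimation}; it is stated as a known result, with the prefatory remark that amplitude estimation ``is based on phase estimation \cite{brassard2002quantum}'' and then simply restated in the form needed for the analysis. Your proposal is exactly the standard derivation one finds in that reference: reduce to phase estimation on the two-dimensional invariant subspace spanned by $\ket{\psi_\mathrm{good}}$ and $\ket{\psi_\mathrm{bad}}$, diagonalise $U$ there as a rotation with eigenphases $\pm\theta$, and read off the concentration from the Fej\'er-kernel/tail behaviour of phase estimation. So there is nothing to compare against in the paper itself, and your outline matches the intended (cited) argument.

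One small point of care: the tail estimate you invoke is not literally part~4 of \cref{thm: phase_estimation}, which bounds only the overlap of the ancilla with $\ket{0^s}$ for eigenphases $\theta_j \ge \epsilon$. What you actually need for 3(a)--(b) is the standard concentration of the phase-estimation ancilla around the true eigenphase, i.e., that the probability mass outside an $\epsilon$-window is $\mathcal{O}(1/(2^s\epsilon))$; this follows from the explicit kernel in part~3 of \cref{thm: phase_estimation} (or directly from \cite{brassard2002quantum}), not from part~4. With that adjustment your sketch is correct.
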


\subsection{Diffusion operators} \label{sec: diffusion_operators}
Our algorithm finds marked vertices in $\mathcal{T}$ using a quantum walk on the Hilbert space spanned by $V(\mathcal{T})$, with initial state $\ket{r}$. This quantum walk is effected by a set of \textit{diffusion operators}, adapted from Refs.\ \cite{ambainis2017quantum}, \cite{belovs2013quantum}, and \cite{montanaro2015quantum}.

Let $\eta = {2^i}/{d}$ be an input parameter to the algorithm, where $i \in [\left\lceil\log (d n)\right\rceil]$. For each vertex $v \in \mathcal{T}$, the corresponding diffusion operator $D_{v}$ is defined as follows:
\begin{itemize}
    \item If $v \in \mathcal{M}$, then $D_{v}$ is the identity operator.
    \item If $v \not\in \mathcal{M}$, then
    \[ D_{v}(\eta) \equiv I - 2\ket{\psi_{v}(\eta)}\bra{\psi_{v}(\eta)},\]
    where for $v \in \mathcal{T}$,
    \begin{equation} \label{def: psi_v}  \ket{\psi_{v}(\eta)} \equiv  \begin{dcases} \frac{1}{\sqrt{1 + d_{v}\eta}}\left(\ket{v} + \sqrt{\eta}\sum_{c\leftarrow v}\ket{c}\right) \qquad &\text{if $v = r$} \\
    \frac{1}{\sqrt{d_{v}}}\left(\ket{v} + \sum_{c\leftarrow v}\ket{c}\right) \qquad  &\text{if $v\neq r$}.
    \end{dcases} \end{equation}
\end{itemize}

Each diffusion operator $D_{v}$ acts only on the subspace spanned by the vertex $v$ and its children, and can therefore be implemented with only local knowledge. Specifically, it requires access to only the oracles $f$ and $h$: $f$ is evaluated at $v$ to check whether $v$ is marked, and $h$ gives the children of $v$.

Our diffusion operators are identical to those in Ref.\ \cite{montanaro2015quantum} with the exception of $D_{r}$, which is defined with respect to the parameter $\eta$ instead of the maximum depth $n$. This parameter will be tuned in \Cref{alg:find_eta} and, as will be shown in \Cref{sec:math}, its ``optimal'' value is equal to the effective resistance of $\mathcal{T}$.

Now, let $A$ be the set of vertices that are at even distances from the root (including the root itself), and let $B$ be the set of vertices at odd distances from the root. We define the unitary operators
\[ R_{A}(\eta) \equiv \bigoplus_{v\in A}D_{v}(\eta) \qquad \text{and} \qquad R_{B} \equiv \ket{r}\bra{r} + \bigoplus_{v\in B}D_{v}(\cdot) \]
as direct sums of diffusion operators. Because $R_A(\eta)$ is just a rotation of $R_A(n)$ in the subspace $\mathrm{span}(\left\{\ket{r}, \sum_{c\leftarrow r} \ket{c}/{\sqrt{d_r}} \right\})$ for $\mathcal{O}({\log(n)})$ different choices of $\eta$, these operators can be implemented efficiently given that $R_A(n)$ can be implemented efficiently (as is explicitly demonstrated in Ref.\ \cite{montanaro2015quantum}). An application of the operator $R_{B}R_{A}(\eta)$ amounts to a step in the quantum walk.

\section{Algorithms}\label{sec:algorithm}

In Ref.\ \cite{montanaro2015quantum}, Montanaro proposes an algorithm for trees in which a unique marked vertex is promised to exist. Given such a tree and its root vertex as the input state, phase estimation on the operator $R_{B}R_{A}(\eta)$, defined as in subsection \ref{sec: diffusion_operators} except with $\eta$ set to $n$, is used to approximately produce a superposition state over the vertices on the path from the root to the unique marked vertex. With probability $1/2$, measuring this superposition in the computational basis collapses it to some state corresponding to a non-root vertex. This procedure is then repeated with the resulting state as the input until a marked vertex is reached. Conditioned on measuring away from the root, any non-root vertex along the path is sampled with equal probability. As a result, the number of steps to reach a marked vertex is logarithmic in the depth $n$ (in contrast to the polynomial overhead that would be incurred by classical descent).

Our algorithm is essentially a generalisation of this scheme to trees containing an arbitrary number of marked vertices (not necessarily known beforehand), and uses the same basic framework. \Cref{alg:find_marked} creates a superposition state $\ket{\Phi}$ with support only on vertices that either are marked or have marked descendants. Unlike in the restricted case of a single marked vertex, na\"ively running Montanaro's algorithm with $\eta = n$ on a tree with multiple marked vertices may, depending on the distribution of the marked vertices in the tree, result in a state for which the probability $\abs{\braket{\Phi}{r}}^2$ of measuring $\ket{r}$ is too large. It turns out that with $\eta$ set to the effective resistance of the tree, $\ket{\Phi}$ can be prepared with fixed precision and $\abs{\braket{\Phi}{r}}^{2}$ can be kept close to $1/2$.

However, since the structure of the input tree and the distribution of the marked vertices are not known, the effective resistance $\overline{\eta}$ cannot be determined in advance. \Cref{alg:find_eta} describes the method by which we obtain an estimate $\widetilde\eta$ such that $\abs{\widetilde\eta - \overline\eta} \leq \mathcal{O}({\Delta \overline\eta})$ for a fixed $\Delta$. For a given upper bound $d$ on the degree of the tree, we begin by assuming $\eta \approx {1}/{d}$ and then increment $\eta$ until it hits $\widetilde\eta$ or its maximum possible value of $n$. Provided that the algorithm does not exit early, which we will show is exceptionally unlikely, the interpolation crosses an $\eta$ ``close enough'' to $\overline\eta$ with certainty if such an $\eta$ exists. Since the last iteration of the loop effectively implements Montanaro's existence algorithm from \cite{montanaro2015quantum}, if $\Delta$ is taken to be sufficiently small, the last iteration will produce $\infty$ if there are no marked vertices. If we are interested in a more precise estimate of $\overline\eta$, after \Cref{alg:find_eta} returns some estimate $\widetilde\eta$ such that $\widetilde\eta/\overline\eta \approx 1$, we can perform amplitude estimation to higher precision but with initial guess $\widetilde\eta$. Alternatively, we can replace $2^i$ in \Cref{alg:find_eta} with $S^i$, where $S \in (1,2)$. Because it is sufficient for our application to backtracking algorithms, we assume that a multiplicative step size of 2 produces the desired level of precision.

\begin{algorithm}[H]
\caption{Estimate effective resistance, \textbf{\textit{Estimate-Res}}($v$)}
\begin{algorithmic}[1]\label{alg:find_eta}
    \Require A vertex $v\in\mathcal{T}$; upper bounds $n$, $T$, and $d$ on the depth, number of vertices, and degree of $\mathcal{T}$, respectively; unitaries $R_{A}(\cdot)$ and $R_{B}$; a failure probability $\delta_0$; universal constants $\gamma_1,\gamma_2$.
    \State $i \gets 0$. 
    \State $\eta \gets \min\left\{{2^i}/{d},n\right\}$.
    \State Let $C(\eta)\ket{v}\ket{0^s} = \sin(\beta_\eta)\ket{v}\ket{0^s} + \cos(\beta_\eta) \ket{\psi_{\mathrm{bad}}}$ be the output of phase estimation with unitary $R_BR_A(\eta)$, input state $\ket{v}$, and $s$ ancilla qubits in the second register. 
    \State Perform amplitude estimation on the second register $\gamma_1\log(1/\delta_0)$ times with $\ket{\psi_{\text{good}}} = \ket{0^s}$ and precision ${\gamma_2}$. Let the output be the multiset $\widetilde\Theta = \{\widetilde{\beta}_{j}\}_{j}$.
    \State If  more than half of the elements $\widetilde\beta_j \in \widetilde\Theta$ satisfy $\abs{\widetilde\beta_{j} - \pi/4} \leq \pi/16$,  
    \Return $\widetilde\eta = \eta \cot^2 \widetilde\beta$ where $\widetilde\beta$ is the most frequent element in $\widetilde\Theta$.
    \State If $\eta = n$, \Return $\infty$.
    \State Increment $i$, go to Step 2.
\end{algorithmic}
\end{algorithm}

\Cref{alg:find_eta} is implemented in a loop in \Cref{alg:find_marked}, which uses effective resistance estimates to find a marked vertex as follows. Starting at the root vertex $r$, we obtain an estimate $\widetilde\eta$ using \textbf{\textit{Estimate-Res}}($r$). Given that $\widetilde\eta \sim\overline{\eta}$, the output $\ket{\Phi}$ of phase estimation on unitary $R_{B}R_{A}(\overline{\eta})$ with input state $\ket{r}$ is expected to satisfy $\abs{\braket{r}{\Phi}^2 - {1}/{2}} \leq \Delta$ with high probability. By measuring $\ket\Phi$ in the computational basis and setting the output as the new root, we progress down the tree towards a marked vertex. \Cref{sec:expected_steps} demonstrates that the expected number of steps in this walk is $\mathcal{O}({\log(k \overline\eta)})$.

\begin{algorithm}[H]
\caption{Find a marked vertex}
\begin{algorithmic}[1]\label{alg:find_marked}
    \Require The root vertex $r$ of $\mathcal{T}$; upper bound on the number of vertices in $\mathcal{T}$; unitaries $R_{A}(\cdot)$ and $R_{B}$; other inputs required for \textbf{\textit{Estimate-Res}}($r$).
    \State $v \gets r$.
    \State If $f(v)=1$, \Return $v$.
    \State $\widetilde\eta \gets \textbf{\textit{Estimate-Res}}(v)$.
    \While{$\widetilde\eta \neq \infty$}
        \State Let $\ket{\Psi} = C(\widetilde\eta)\ket{v}\ket{0^s} = \sum_i \alpha_i\ket{\psi_i}\ket{\omega_i}$ be the output of phase estimation with unitary {\color{white} bla bl \hspace{.015em}} $R_B R_A(\widetilde{\eta})$, input state $\ket{r}$, and precision $\mathcal{O}({\sqrt{1/T\widetilde{\eta}}})$. 
        \State Measure the second register of $\ket\Psi$ in the computational basis. Let the output be $\ket\omega$.
        \State If $\omega \neq 0^s$, \textbf{continue}. 
        \State Measure the first register of $\ket\Psi$. Let the output be $\ket{o}$.
        \State $v \gets o$.
        \State If $f(v) = 1$, \Return $v$.
        \State $\widetilde\eta \gets \textbf{\textit{Estimate-Res}}(v)$.
    \EndWhile
    \State \Return ``no marked vertex.''
\end{algorithmic}
\end{algorithm}

\section{Analysis}
In this section, we analyse \Cref{alg:find_eta,alg:find_marked}. Subsection \ref{sec:state_features} defines a particular eigenstate $\ket{\Phi}$ of the unitary $R_{B}R_{A}(\eta)$ that will be used to find marked vertices, and subsection \ref{sec: prepare_phi} describes how phase estimation can be applied to prepare such an eigenstate. In subsection \ref{sec:alg_analysis}, we evaluate the accuracy and efficiency of \Cref{alg:find_eta} in returning a useful value for the parameter $\eta$. We bound the expected runtime of \Cref{alg:find_marked} in subsection \ref{sec:alg_analysis2}.

As will be made more evident in subsection \ref{sec:alg_analysis}, in the case of no marked vertices, our algorithm reduces to an algorithm similar to that of \Cref{thm:Mont_0}. Therefore, we assume in the following discussion that there exists at least one marked vertex in the search tree $\mathcal{T}$.

\subsection{On the path to marked vertices}\label{sec:state_features}

For each marked vertex $m \in \mathcal{M}$, the state
\[ \ket{\phi_{m}} = \sqrt{\eta}\ket{r} + \sum_{\substack{v \in \mathcal{P}(r,m)\\v \neq r}}(-1)^{\ell_{v}}\ket{v} \]
is an eigenvector of $R_{B}R_{A}(\eta)$ with eigenvalue 1. This can be seen from the fact that for a given $\eta$, $\ket{\phi_{m}}$ as defined above is orthogonal to $\ket{\psi_{v}(\eta)}$ for all $v \in \mathcal{T}$ [cf.\ Eq.\ \eqref{def: psi_v}]. $\ket{\phi_{m}}$ encodes the entire path $\mathcal{P}(r,m)$ from the root $r$ to the marked vertex $m$.

We introduce the state 
\begin{equation}\label{def:Phi1}
\ket{\Phi} \equiv \sum_{m \in \mathcal{M}}C_{m}\ket{\phi_{m}} =  \sqrt{\eta}\sum_{m\in\mathcal{M}}C_{m}\ket{r} + \sum_{m\in\mathcal{M}}C_{m}\sum_{\substack{v \in \mathcal{P}(r,m)\\v\neq r}}(-1)^{\ell_{v}}\ket{v}
\end{equation}
as a normalised superposition of these ``path" eigenvectors with real coefficients $(C_{m})_{m\in\mathcal{M}}$. Clearly, $\ket{\Phi}$ has support only on the shallowest marked vertices in $\mathcal{T}$ and their ancestors. Accordingly, we define the ``solution tree" $\widetilde{\mathcal{T}}$ to be the largest subtree of $\mathcal{T}$ in which all leaves are the shallowest marked vertices on their respective paths, i.e.,
\[ \widetilde{\mathcal{T}} \equiv \bigcup_{m\in\mathcal{M}}\mathcal{P}(r,m). \] 
We use $\widetilde{\mathcal{T}}(v)$ to indicate the subtree of $\widetilde{\mathcal{T}}$ rooted at a vertex $v \in \widetilde{\mathcal{T}}$.

Letting
\begin{equation} \label{def: sin_beta} \sin\beta \equiv \bra{r}\Phi\rangle = \sqrt{\eta}\sum_{m\in\mathcal{M}}C_{m}, \end{equation}
and defining $\kappa:V(\mathcal{T})\longrightarrow \mathbb{R}$ by
\begin{equation}\label{def:kappas}  \kappa_v \equiv \kappa(v)  = \frac{1}{\cos\beta}\sum_{m\in\mathcal{M}(v)} C_{m}, 
\end{equation}
we rewrite Eq.\ \eqref{def:Phi1} as 
\begin{equation} \label{def:Phi2}
\ket{\Phi} = \sin\beta\ket{r} + \cos\beta\sum_{\substack{v\in\mathcal{T}\\v\neq r}}(-1)^{\ell_{v}}\kappa_{v}\ket{v}.
\end{equation}

By construction, $\kappa$ satisfies
\begin{equation} \label{E: kappas_recurrence}
    \kappa_{v} = \begin{dcases} \kappa_{v} \qquad &\text{if $v\in\mathcal{M}$} \\
    \sum_{c\leftarrow v}\kappa_{v} \qquad &\text{if $v \not\in\mathcal{M}$} \end{dcases}
\end{equation}
with $\kappa_{v} = 0$ for $v \not\in\widetilde{\mathcal{T}}$ (using the convention that an empty sum is equal to zero), as well as normalisation:
\begin{equation} \label{E: kappas_norm}
    \sum_{\substack{v\in\mathcal{T}\\v\neq r}}\kappa_{v}^{2} = 1.
\end{equation}

We further constrain $\kappa$ as follows: for every vertex $v \in \widetilde{\mathcal{T}}$ and any marked vertices $m_{0}, m_{1} \in \mathcal{M}(v)$,
\begin{equation} \label{E: path_leaf} \sum_{u\in\mathcal{P}(v, m_{0})}\kappa_{u} = \sum_{u\in\mathcal{P}(v,m_{1})}\kappa_{u}. \end{equation}
In other words, given any vertex $v$ in the solution tree $\widetilde{\mathcal{T}}$, the weight in $\ket{\Phi}$ on the path from $v$ to a marked vertex in $\widetilde{\mathcal{T}}(v)$ is independent of the choice of marked vertex. One can easily see that as a consequence of this condition, $\kappa_{m} \neq 0$ for at least some $m \in \mathcal{M}$. If this were not the case, Eq.\ \eqref{E: kappas_recurrence} would imply $\kappa_{v} = 0$ for all $v \in \mathcal{T}$, violating Eq.\ \eqref{E: kappas_norm}.

As will become clear, a set of coefficients satisfying Eqs. \eqref{E: kappas_recurrence}-\eqref{E: path_leaf} is guaranteed to exist and are unique (up to a global phase).\footnote{The reader familiar with the theory of electrical networks and flows may already see how Eq.\ \eqref{E: path_leaf} can naturally follow from considering a voltage between the root and the set of marked leaves. We nonetheless avoid this theory, as it obfuscates some of the ideas used in our walk.} (For consistency with Eq.\ \eqref{def:kappas}, we choose them to all be real.) In particular, the constraint imposed by Eq.\ \eqref{E: path_leaf}, which will justified in the following analyses, allows us determine some interesting properties of the solution tree $\widetilde{\mathcal{T}}$. We state some of the results relevant to this section below, postponing their proofs until \Cref{sec:math}.

\cref{cor: kappa_recursion} of \Cref{sec:math} demonstrates that $\kappa$ satisfies Eqs.\ \eqref{E: kappas_recurrence}-\eqref{E: path_leaf} if and only if for every $m_{0} \in \mathcal{M}$,
\begin{equation} \label{E: kappa_m} \left(\sum_{m\in\mathcal{M}}\kappa_{m}\right)  \sum_{\substack{v\in\mathcal{P}(r,m_{0})\\v\neq r}}\sum_{m'\in\mathcal{M}(v)}\kappa_{m'} = 1. \end{equation}
This constitutes a system of $|\mathcal{M}|$ equations that determines $(\kappa_{m})_{m\in\mathcal{M}}$ and therefore $\kappa$ uniquely [cf.\ Eq.\ \eqref{E: kappas_recurrence}]. An important implication of Eq.\ \ref{E: kappa_m} is that $\kappa$ is independent of the input parameter $\eta$ and depends only on the structure of $\widetilde{\mathcal{T}}$. In particular, we will show in \Cref{thm: eta_bound} of \Cref{sec:math} that $\kappa_{r}^{2}$ is an invariant of finite trees and that its value is bounded in terms of the number of marked vertices, the degree of the root vertex, and the (maximum) depth of the tree as
\begin{equation} \label{E: eta_bound} \frac{1}{\kappa_{r}^{2}} \in \left[\max\left(\frac{1}{k}, \frac{1}{d_r} \right), n\right]. \end{equation}
As will be further discussed in subsection \ref{sec:alg_analysis}, we exploit this result in designing \Cref{alg:find_eta}, which tunes the input parameter $\eta$ so as to optimise the superposition state $\ket{\Phi}$ for our intended purpose. That is, if we are able to produce a state $\ket\Phi$ such that $\braket{r}{\Phi}$ is not too large, a state closer to the marked state can be revealed by measuring $\ket\Phi$ in the computational basis. Indeed, it follows from Eqs.\ \eqref{def: sin_beta} and \eqref{def:kappas} that 
\begin{equation}\label{eqn:angle}
    \tan\beta = \sqrt{\eta}\kappa_r.
\end{equation}
This implies that if there exists at least one marked vertex, then there is a particular value of $\eta$, determined by the structure of the search tree, for which the the probability $\abs{\braket{r}{\Phi}}^2$ of sampling $\ket{r}$ is approximately 1/2. \Cref{sec:alg_analysis} shows that when $\eta$ is equal to the effective resistance $\overline{\eta}$ of the tree, the probability of sampling the root vertex is 1/2.\footnote{In \Cref{sec:alg_analysis}, the relative amplitude in the root and non-root states is characterized by $\cot^2\beta$. We call $\eta$ achieving $\cot^2\beta =1$ ``optimal'' in that it assigns equal probability to sampling either a root or non-root state.} In the following subsection, we demonstrate that the particular superposition of eigenvectors $\ket{\Phi}$ defined here can be approximately produced via phase estimation.

\subsection{Complexity of preparing $\ket{\Phi}$}\label{sec: prepare_phi}

This subsection primarily follows Ref.\ \cite{montanaro2015quantum}, with relevant adjustments as they become necessary in the generalisation to trees with possibly more than one marked vertex.

For a fixed value of $\eta$ and given $\ket{\Phi}$ as defined in the preceding subsection, let $\ket{\Phi^{\perp}}$ denote the particular normalised state orthogonal to $\ket{\Phi}$ such that 
$\ket{r} = \bra{r}\Phi\rangle\ket{\Phi} + \bra{r}\Phi^{\perp}\rangle\ket{\Phi^{\perp}}$, that is,
\begin{equation} \label{def: Phi_perp} \ket{\Phi^{\perp}} \equiv \cos\beta\ket{r} - \sin\beta\sum_{\substack{v\in\mathcal{T}\\v\neq r}}(-1)^{\ell_{v}}\kappa_{v}\ket{v}. \end{equation}
By direct calculation,
\begin{equation} \label{E: phi_perp_orth} \bra{\Phi^{\perp}}\phi_{m}\rangle = 0 \end{equation}
for all $m \in \mathcal{M}$.

We perform phase estimation on unitary $R_{B}R_{A}(\eta)$ with $s$ ancilla qubits and input state 
\[ \ket{r} = \sin\beta\ket{\Phi} + \cos\beta\ket{\Phi^{\perp}}, \] 
which yields a state of the form
\begin{equation}\label{eqn:Psi} 
\ket\Psi = \sin\beta\ket{\Phi}\ket{0^{s}} + \cos\beta\sum_{j}\lambda_{j}\ket{\psi_{j}}\ket{\omega_{j}}. 
\end{equation}
By virtue of Eq.\ \eqref{E: phi_perp_orth}, we can take the eigenvectors $\ket{\psi_{j}}$ in the expansion of $\ket{\Phi^{\perp}}$ to be orthogonal to the space spanned by the path eigenvectors, i.e., such that $\bra{\Phi}\psi_{j}\rangle = 0$ for all $j$. Decomposing each $\ket{\omega_{j}}$ as $\ket{\omega_{j}} = \mu_{j}\ket{0^{s}} + \ket{\omega_{j}'}$, where the $\ket{\omega_{j}'}$ are subnormalised vectors orthogonal to $\ket{0^{s}}$, the probability of obtaining $\ket{0^{s}}$ upon measuring the second register is
$\sin^{2}\beta + \cos^{2}\beta\sum_{j}|\lambda_{j}\mu_{j}|^{2}$,
in which case the first register collapses to
\begin{equation} \label{Phi_tilde} \ket{\widetilde{\Phi}} = \frac{1}{\sqrt{\sin^{2}\beta + \cos^{2}\beta\sum\limits_{j}|\lambda_{j}\mu_{j}|^{2}}}\left(\sin\beta\ket{\Phi} + \cos\beta\sum_{j}\lambda_{j}\mu_{j}\ket{\psi_{j}}\right). \end{equation}

It follows immediately from \Cref{thm: phase_estimation} that 
\begin{equation}\label{eqn: lower_bound}
 \sum_{j:\,\theta_{j}\geq \epsilon}|\lambda_{j}\mu_{j}|^{2} \leq \sum_{j:\,\theta_{j}\geq \epsilon}|\mu_{j}|^{2} = \bigO{\frac{1}{2^{s}\epsilon}}.
\end{equation}

In order to distinguish the state $\ket{0^s}$ from all other states, we also need to bound the terms indexed by $j$ such that $0 < \theta_j < \epsilon$. We prove the following lemma in \Cref{xi_proof}.
\begin{restatable}{lemma}{Pepsilon} \label{lem: P_epsilon} Let $P_{\epsilon}$ denote the projector onto $\mathrm{span}(\{\ket{\psi}:R_{B}R_{A}(\eta)\ket{\psi} = e^{2i\theta}\ket{\psi}, |\theta| \leq \epsilon\})$. Then,
\[\|P_{\epsilon}\ket{\Phi^{\perp}}\| = \bigO{\epsilon\sqrt{T\eta}} \]
\end{restatable}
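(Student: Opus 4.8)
The plan is to recognise $U \equiv R_B R_A(\eta)$ as a product of two reflections and apply the effective spectral gap lemma \cite{belovs2013quantum}. Because the supports of $\{\ket{\psi_v(\eta)}\}_{v\in A\setminus\mathcal{M}}$ are mutually disjoint (a child of an even‑depth vertex has odd depth, and each vertex has a unique parent), we have $R_A(\eta) = I - 2\sum_{v\in A\setminus\mathcal{M}}\ket{\psi_v}\bra{\psi_v} = 2\Pi_A - I$ with $\Pi_A \equiv I - \sum_{v\in A\setminus\mathcal{M}}\ket{\psi_v}\bra{\psi_v}$, and likewise $R_B = 2\Pi_B - I$ with $\Pi_B \equiv I - \sum_{v\in B\setminus\mathcal{M}}\ket{\psi_v}\bra{\psi_v}$ (the $\ket{r}\bra{r}$ term is harmless because $r\in A$ lies outside every $B$‑type support). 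Thus $U = (2\Pi_B - I)(2\Pi_A - I)$, whose eigenphases are the $2\theta$, so the effective spectral gap lemma gives $\|P_\epsilon\Pi_B\ket{w}\| \le \epsilon\,\|\ket{w}\|$ for any $\ket{w}$ with $\Pi_A\ket{w} = 0$. It therefore suffices to exhibit a witness $\ket{w} \in \mathrm{im}(I-\Pi_A) = \mathrm{span}\{\ket{\psi_v} : v\in A\setminus\mathcal{M}\}$ obeying $\Pi_B\ket{w} = \ket{\Phi^\perp}$ and $\|\ket{w}\|^2 = \mathcal{O}(T\eta)$.

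First I would establish that $\ket{\Phi^\perp}$ already lives in $\mathrm{im}\,\Pi_B$. A direct computation from \eqref{def: psi_v}, \eqref{def: Phi_perp}, and the recurrence \eqref{E: kappas_recurrence} shows $\braket{\psi_v}{\Phi^\perp} = 0$ for every $v\notin\mathcal{M}$ except $v=r$: for a non‑root unmarked $v$ the $\ket{v}$ contribution cancels against the $\sum_{c\leftarrow v}\ket{c}$ contributions precisely because $\kappa_v = \sum_{c\leftarrow v}\kappa_c$. In particular $\Pi_B\ket{\Phi^\perp} = \ket{\Phi^\perp}$, so it is enough to find $\ket{w} = \sum_{v\in A\setminus\mathcal{M}} a_v\ket{\psi_v}$ with $\ket{w} - \ket{\Phi^\perp} \in \mathrm{span}\{\ket{\psi_u} : u\in B\setminus\mathcal{M}\}$. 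Matching coefficients vertex by vertex forces a telescoping ``potential''
\[ o_v \equiv \sqrt{\eta}\cos\beta - \sin\beta\!\!\sum_{u\in\mathcal{P}(r,v)\setminus\{r\}}\!\!\kappa_u, \qquad a_v = \sqrt{d_v}\,o_v\ (v\neq r), \quad a_r = \sqrt{1+d_r\eta}\,\cos\beta. \]
The only nontrivial consistency requirement arises at the marked leaves, where there is no diffusion vector to absorb the residual; this requirement reduces exactly to $\sum_{u\in\mathcal{P}(r,m)\setminus\{r\}}\kappa_u = \sqrt{\eta}\cot\beta = 1/\kappa_r$ being independent of $m\in\mathcal{M}$, which is guaranteed by \eqref{E: path_leaf} together with \eqref{eqn:angle}.

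Finally I would bound the norm. Since the $\ket{\psi_v}$ with $v\in A\setminus\mathcal{M}$ are orthonormal, $\|\ket{w}\|^2 = (1+d_r\eta)\cos^2\beta + \sum_{v\in A\setminus(\mathcal{M}\cup\{r\})} d_v\,o_v^2$. Using \eqref{eqn:angle} in the form $\sqrt{\eta}\cos\beta = \sin\beta/\kappa_r$, we find $o_v/\sin\beta = 1/\kappa_r - \sum_{u\in\mathcal{P}(r,v)\setminus\{r\}}\kappa_u$, which is the residual path weight from $v$ down to the marked leaves and hence lies in $[0,1/\kappa_r]$; therefore $|o_v| \le \sqrt{\eta}\cos\beta$ for \emph{every} $v$, including vertices off the solution tree $\widetilde{\mathcal{T}}$ (where $o_v$ stays constant along each hanging subtree). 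Since $\sum_v d_v \le 2\lvert E(\mathcal{T})\rvert < 2T$, this gives $\|\ket{w}\|^2 \le (1+d_r\eta) + 2T\eta\cos^2\beta = \mathcal{O}(T\eta)$, and the lemma yields $\|P_\epsilon\ket{\Phi^\perp}\| = \|P_\epsilon\Pi_B\ket{w}\| \le \epsilon\,\|\ket{w}\| = \mathcal{O}(\epsilon\sqrt{T\eta})$.

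I expect the witness construction of the second paragraph to be the main obstacle. The subtle point is that, although $\ket{\Phi^\perp}$ is supported only on $\widetilde{\mathcal{T}}$, the witness $\ket{w}$ is \emph{forced} to carry nonzero amplitude on the entire tree: off $\widetilde{\mathcal{T}}$ the $A$‑ and $B$‑type terms cancel in $\ket{w}-\ket{\Phi^\perp}$ but not in $\ket{w}$ itself, and it is precisely this spreading—controlled uniformly by $|o_v|\le\sqrt{\eta}\cos\beta$—that produces the tree‑size factor $T$. Verifying that the vertex‑by‑vertex matching is globally consistent rather than over‑determined is exactly where the path‑weight invariance \eqref{E: path_leaf} is needed, so confirming that identity (and the nonnegativity of the $\kappa_v$ used in the norm estimate) is the crux of the argument.
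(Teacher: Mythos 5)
Your proposal is correct and takes essentially the same route as the paper: you apply the effective spectral gap lemma (\Cref{lem: spectral_gap}) to a witness vector that, when expanded in the vertex basis, is exactly the $\ket{\xi}$ of \Cref{xi_facts} and \Cref{lem: alphas} (your potential $o_v$ coincides with the paper's $\alpha_v$, and your consistency condition at marked leaves is precisely \Cref{cor: kappa_recursion}). The norm estimate via the uniform bound $\abs{o_v} \leq \sqrt{\eta}\cos\beta$, giving $\|\ket{w}\| = \bigO{\sqrt{T\eta}}$, likewise mirrors the paper's argument, so the two proofs agree in substance.
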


This lemma allows us to obtain
\begin{equation}\label{eqn: small_theta}
    \sum_{j:\,\theta_{j} \leq \epsilon}|\lambda_{j}\mu_{j}|^{2} \leq \sum_{j:\,\theta_{j} \leq \epsilon}|\lambda_{j}|^{2} = \|P_{\epsilon}\ket{\Phi^{\perp}}\|^{2} = \bigO{\epsilon^{2}T\eta}.
\end{equation} 

Thus, for a fixed precision $\delta$, taking $\epsilon = \Theta(\delta/\sqrt{T \eta})$  and $2^s = \Theta(\sqrt{T\eta}/\delta^3)$ in the bounds of Eqs. \eqref{eqn: lower_bound} and \eqref{eqn: small_theta} yields
\begin{equation} \label{delta_squared} \sum_{j}|\lambda_{j}\mu_{j}|^{2} = \mathcal{O}(\delta^{2}). \end{equation} This bounds the total variation distance as $\sup_{i}\abs*{\lvert\braket{i}{\widetilde\Phi}\rvert^2-\left\lvert\braket{i}{\Phi}\right\rvert^2} = \mathcal{O}({\delta})$ and guarantees that $\|{\ket{\widetilde\Phi}-\ket{\Phi}}\| = \mathcal{O}({\delta})$. 




\subsection{\Cref{alg:find_eta}}\label{sec:alg_analysis}
\subsubsection{Accuracy of approximation}
We begin our analysis by showing that the support on $\ket{0^{s}}$ of the output state $\ket{\Psi}$ of phase estimation [cf.\ Eq.\ \eqref{eqn:Psi}] is a reasonable estimate of the amplitude $\braket{r}{\Phi} = \sin\beta$. Letting $\ket{\Psi}_2$ denote the state of the second register of $\ket{\Psi}$, $\abs{\braket{0^s}{\Psi}_{2}}$ is simply the inverse of normalisation constant in Eq.\ \eqref{Phi_tilde}. Hence, recalling Eq.\ \eqref{delta_squared}, we fix a precision $\delta$ and write $\abs{\braket{0^s}{\Psi}_{2}}^2 = \sin^2\beta + \mathcal{O}({\delta^{2}})$. For any $\widetilde\beta$, we then have
\begin{align*}
   \abs*{\sin^2\beta - \sin^2\widetilde\beta} & \leq \abs*{\abs{\braket{0^s}{\Psi}_2}^2 - \sin^2\widetilde\beta} + \abs*{\abs{\braket{0^s}{\Psi}_{2}}^2 - \sin^2\beta} \\
      &=  \abs*{\abs{\braket{0^s}{\Psi}_2}^2 - \sin^2\widetilde\beta} +  \bigO{\delta^2}.
\end{align*}
This means that if $\sin \widetilde\beta$ is an accurate estimate of $\abs{\braket{0^s}{\Psi}_2}$, then it approximates $\sin \beta$ with similar accuracy. 

Now, suppose that $\abs*{\abs*{\braket{0^s}{\Psi}_2}^2 - \sin^2\widetilde\beta} \leq {1}/{\sigma}$ for some $1/\sigma \geq \mathcal{O}({\delta^2})$. This gives
\[
    \abs*{\sin^2\beta - \sin^2\widetilde\beta} \leq \frac{1}{\sigma} + \bigO{\delta^2} = \bigO{\frac{1}{\sigma}}.
\]
We use this inequality to bound the error in $\widetilde\eta$ returned by \Cref{alg:find_eta}. Let $\eta$ be the final value queried by \Cref{alg:find_eta}, $\widetilde\eta = \eta\cot^2\widetilde\beta$ the output of amplitude estimation (step 4 of \Cref{alg:find_eta}), and $\sin^2\beta$ the amplitude that was estimated. By Eq.\ \eqref{eqn:angle} and \Cref{thm: eta_bound}, $\tan^2\beta = {\eta}/{\overline\eta}$. Note that for any $\eta$ queried by \Cref{alg:find_eta}, this results in $\overline\eta = \eta\cot^2\beta$ and $\widetilde\eta = \eta\cot^2\widetilde\beta$.
\begin{align*}
    \abs*{\overline\eta - \widetilde\eta} &= \eta \abs*{\cot^2\beta - \cot^2\widetilde\beta} \\
    &= \eta \abs*{\frac{1}{\sin^2\beta}-\frac{1}{\sin^2\widetilde\beta}} \\
    &= \frac{\eta}{\sin^2\beta \sin^2\widetilde\beta}\abs*{\sin^2\beta - \sin^2\widetilde\beta} \\
    &= \frac{\overline\eta}{\cos^2\beta\sin^2\widetilde\beta}\abs*{\sin^2\beta - \sin^2\widetilde\beta}.
\end{align*}
Since \Cref{alg:find_eta} exits when $\abs{\widetilde\beta - \pi/4} \leq \pi/16$,
\begin{align*}
    \abs*{\overline\eta - \widetilde\eta} &\leq \frac{4\overline\eta}{\cos^2\beta}\abs*{\sin^2\beta - \sin^2\widetilde\beta}\\
    &\leq 4 \overline\eta \frac{\abs*{\sin^2\beta - \sin^2\widetilde\beta}}{\abs*{\cos^2\widetilde\beta - \abs*{\sin^2\beta - \sin^2\widetilde\beta}}}\\
    &\leq 16 \overline\eta \frac{\abs*{\sin^2\beta - \sin^2\widetilde\beta}}{\abs*{1 - 4\abs*{\sin^2\beta - \sin^2\widetilde\beta}}}\\
    &\leq 16 \overline\eta \frac{1+\sigma\bigO{\delta^2}}{\abs*{\sigma - 4(1+\sigma\bigO{\delta^2})}} \\
    &= \bigO{\frac{\overline\eta}{\sigma}}.
\end{align*}
Thus, the relative error in our estimate of $\overline\eta$ can be reduced arbitrarily, up to the error of phase estimation itself, by taking amplitude estimation to greater precision.

\subsubsection{Failure rate}\label{sec:failure}

It remains to analyse the failure rate of \Cref{alg:find_eta}. Consider Step 5 of \Cref{alg:find_eta}. Let $X_i \in \{0,1\}$ be a coin that returns $0$ whenever $\abs{\widetilde\beta_i - \pi/4} \leq \pi/16$ and $1$ otherwise. By \Cref{thm:amplitude_estimation}, $\Pr{X_i = 0 \; \vert \; \abs{\beta_\eta - \pi/4} \geq \pi/8} \leq c\gamma_2$ for some constant $c$. We apply the Chernoff-Hoeffding theorem \cite{hoeffding1963probability} with probability $p = c\gamma_2$ and $\gamma_1 \log(1/\delta_0)$ repetitions of the experiment to get,
\begin{align*}
    \mathrm{Pr}\left(\frac{1}{\gamma_1 \log(1/\delta_0)}\sum_i {X_i} \geq p + \varepsilon \; \middle| \; \abs{\beta_\eta - \pi/4} \geq \pi/8 \right) &\leq \left[ \left(\frac{p}{p+\varepsilon}\right)^{p+\varepsilon} \left(\frac{1-p}{1-p-\varepsilon} \right)^{1-p -\varepsilon} \right]^{\gamma_1 \log(1/\delta_0)}\\
    &= \left[4 p (1-p)\right]^{\gamma_1 \log(1/\delta_0)/2}\\
    &\leq (4 p)^{\gamma_1 \log(1/\delta_0)/2}\\
    &= \left(4c\gamma_2\right)^{\gamma_1 \log(1/\delta_0)/2}.
\end{align*}
For an appropriate choice of $\gamma_2 < {1}/{2c}$, we have $p < {1}/{2}$ and we let let $\varepsilon = 1/2 - p$. This also yields
\[
    \Pr{\frac{1}{\gamma_1 \log(1/\delta_0)}\sum_i {X_i} \leq \frac{1}{2} \; \vert \;  \abs{\beta - \pi/4} \leq \pi/8 } \leq \left(4c\gamma_2\right)^{\gamma_1 \log(1/\delta_0)/2}.
\]
Therefore, there exists a $\gamma_2 < {1}/(8c\sqrt{n})$ such that the failure rate for each loop of \Cref{alg:find_eta} is $\mathcal{O}({({1}/{\sqrt n})^{\gamma_1\log({1}/{\delta_0})/2}})$, where we include a factor of $\sqrt{n}$ for future convenience.

By choosing an appropriate constant $\gamma_1 > 2$ in \Cref{alg:find_eta}, we can weaken this bound to $\mathcal{O}({{\delta_0}/{n}})$, which simplifies some of the computations in this subsection. Thus, the probability that \Cref{alg:find_eta} does not exit before querying an $\eta$ corresponding to a $\beta_\eta$ such that $\abs*{\beta_\eta-\pi/4} \leq \pi/8$ is $1-\mathcal{O}({{\delta_0}/{n}})$.\footnote{The bound of $\pi/8$ is weakened from the rejection bound of $\pi/16$ in \Cref{alg:find_eta} in order to account for the possibility that $\beta_\eta \approx 3\pi/16$. In such an event, the algorithm can exit ``successfully'' while still returning a $\widetilde\beta \approx \pi/4$. This bound can be tightened arbitrarily by adjusting the multiplicative step size from multiples of $2$ to multiples of $S \in (1,2)$. Alternatively, one can simply run amplitude estimation to higher precision after an initial rough estimate has been made.} Similarly, the probability that the algorithm exits once it reaches a state for which $\abs*{\beta_\eta - \pi/4} \leq \pi/8$ is $1-\mathcal{O}(({{\delta_0}/{n}})$. The probability of success is then
\[
    \left(1-\mathcal{O}\left(\frac{\delta_0}{n}\right)\right)^{\log (d \overline\eta)} \geq 1 - \mathcal{O}\left(\frac{\delta_0\log (d \overline\eta)}{n}\right),
\]
and the algorithm returns an estimate such that $\abs*{\widetilde\eta - \overline \eta} = \mathcal{O}({\Delta \overline\eta})$, where $\Delta \in [\gamma_2, 1/8)$ is the error from amplitude estimation, with probability greater than $1-\mathcal{O}({\delta_0})$. If there are no marked vertices, the algorithm runs to completion and returns $\eta = \infty$ with probability given by the above expression with $\overline\eta = n$.


For our purpose of backtracking, it does not help to prepare a state with precision $\delta$ less than $\sqrt{\Delta}$. Hence, for any iteration of \Cref{alg:find_eta} with parameter $\eta$, we can always take $2^s = \mathcal{O}({\sqrt{T \eta/\Delta^3}})$ and $\epsilon = \Theta(\sqrt{{\Delta}/({T \eta})})$. This prepares a state $\ket{\widetilde{\Phi}}$ such that $\|{\ket{\widetilde\Phi}-\ket{\Phi}}\| = \mathcal{O}(\sqrt{\Delta})$ using $\mathcal{O}(\sqrt{{T \eta}/{\Delta^3}})$ applications of $R_B$ and $R_A(\eta)$. Since the complexities of preparing the state and of estimating the relevant amplitude are additive, the expected runtime for the loop with parameter $\eta$ is $\mathcal{O}({\sqrt{T\eta/{\Delta^{3}}}} + \sqrt{n}\log(1/\delta_0)) = \mathcal{O}({\sqrt{T \eta/\Delta^{3}}}\log(1/\delta_0))$.

Now, for the full sequence of queries $(\eta_i = 2^i/d)_{i=0}^{\lceil\log (d \overline\eta) \rceil}$ made by \Cref{alg:find_eta}, we can write
\[
    \sum_{i} \sqrt{\eta_i} = \sum_{i=0}^{m}\sqrt{\frac{2^i}{d}} = \frac{2^{({m+1})/{2}}-1}{d(\sqrt{2}-1)} = \mathcal{O}\left(\sqrt{\overline\eta}\right)
\]
provided that $2^{({m+1})/{2}} \leq \sqrt{2\overline\eta}$. If the algorithm exits successfully, i.e., when $\abs{\beta_\eta - \pi/4} \leq \pi/8$, this is always the case. Because the probability of exiting with $\beta_\eta > 3\pi/8$ (which requires runtime greater than $\mathcal{O}({\sqrt{T\overline\eta}})$) is bounded by $\mathcal{O}({\delta_0\log(n)/n})$, we know that these occurrences will not contribute to the scaling of the expected runtime. Thus, \Cref{alg:find_eta} runs in $\mathcal{O}(\sqrt{T\overline\eta/\Delta^{3}}\log(1/\delta_0))$ steps and requires at most an additional $\mathcal{O}({\log \sqrt n})$ ancilla qubits. If our goal is only to determine the existence of a marked vertex, we can interpret any output of $\widetilde\eta < \infty$ as an indication that a marked vertex exists and $\widetilde\eta = \infty$ as nonexistence.  

\subsection{\Cref{alg:find_marked}}\label{sec:alg_analysis2}

\Cref{alg:find_marked} implements an extended version of Montanaro's formulation for trees containing a unique marked vertex \cite{montanaro2015quantum}. With the parameter $\eta$ in $R_{B}R_{A}(\eta)$ set to $\overline{\eta}$, the probability that the outcome of measuring $\ket{\widetilde{\Phi}}$ in the computational basis is a vertex other than the root is is within a constant of $1/2$. Conditioned on each iteration of the loop sampling a non-root vertex, the algorithm returns a marked vertex with $\mathcal{O}({\log (k \overline\eta)})$ measurements, as proven in \Cref{sec:expected_steps}. We hence fix an precision $\delta = \mathcal{O}(1/\log (k \overline\eta))$. Then, by the same argument as that in \cite{montanaro2015quantum}, letting $2^s = \Theta({\sqrt{T\eta}}/{\delta^3})$ and $\epsilon = \Theta(\delta/\sqrt{T\eta})$ guarantees that with probability $\Omega(1)$, none of the $\mathcal{O}(\log (k \overline\eta))$ measurements sample away from the support of $\ket{\Phi}$. Matching $\delta_0$ from \cref{alg:find_eta} to $\delta$, the expected runtime is $\mathcal{O}({\sqrt{T\eta}\log^3 (k \overline\eta)})$ per step, and therefore $\mathcal{O}(\sqrt{T\overline\eta_{\mathrm{max}}}\log^4 (k \overline\eta))$ for the entire procedure, where $\overline{\eta_{\mathrm{max}}}$ is defined in Eq.\ \eqref{eta_max}. The precision can be increased arbitrarily through repetition. Up to logarithmic factors, in all cases where $k \sim \poly(n)$, this results in at least a $\widetilde{\mathcal{O}}(n)$ improvement over Montanaro's algorithm of \Cref{thm:Mont_2}. For problems in which the effective resistance of all subtrees is $\mathcal{O}({\log n})$, the speedup is $\widetilde{\mathcal{O}}({n^{3/2}})$.

If an upper bound on $k$ is not provided (or we suspect that the bound is not tight), we can initially take $k=1$, and double our guess of $k$ until a marked vertex is returned. This will take at most $\log k$ repetitions. If we reach a $k$ such that $\log k\sim n$ without having found a marked vertex, we can use \Cref{alg:find_eta} in the manner of the algorithm of \Cref{thm:Mont_2}. Namely, starting at the root $r$, whenever \Cref{alg:find_eta} with input vertex $r$ returns $\eta < \infty$, we apply \Cref{alg:find_eta} to each child $c \leftarrow r$. If \Cref{alg:find_eta} returns returns $\eta < \infty$ for some $c$, then we repeat the process with $c$ as the new input. Choosing $\Delta < {1}/{8}$ and $\delta_0 = \mathcal{O}({{1}/{n}})$, we expect a success rate of $\Omega({1})$. Thus, we return a marked vertex in time $\mathcal{O}({n\sqrt{T \overline \eta}})$. Hence, for low-resistance trees with exponentially many marked vertices, this still results in a $\widetilde{\mathcal{O}}({\sqrt{n}})$ improvement over the algorithm of \Cref{thm:Mont_2}.

As we will see in the following section, the iterative approach to finding a suitable value for the parameter $k$ can sometimes result in an overall tighter bound. This suggests that there exists a strategy which can reduce logarithmic factors by using the estimates of $\overline{\eta}$ to deduce information about the structure of the tree.

\section{Optimality of the algorithm: Grover search revisited} \label{sec:grover}
In the case of unstructured search, the bound achieved in subsection \ref{sec:alg_analysis2} converges to Grover search and is therefore optimal in at least some regimes. To see this, we can consider a database where all $N$ database states are connected to a root vertex. If $k$ of the $N$ states are marked, the effective resistance of this tree is simply $\overline\eta = {1}/{k}$ and the expected steps to find a marked element scales like $\mathcal{O}(\sqrt{T/k} \log^4 k)$. Noting that $T = N+1$, the na\"ive bound is $\mathcal{O}(\sqrt{N/k} \log^4 k)$. 

The factor of $\log^4 k$ can be removed by considering the iterative approach outlined in subsection \ref{sec:alg_analysis2}. That is, we begin by setting $k=1$ and double the value of $k$ until a marked vertex is returned. Since we take only one step (from the root vertex to one of the database states, which are all leaves in the tree) and the logarithmic factors are merely there to ensure that our state is sufficiently precise to complete the necessary number of steps, we only need the sampling procedure to succeed once per marked vertex. Thus, this approach actually scales as $\mathcal{O}({\sqrt{N/k}})$, with no logarithmic factors.

\section{The effective resistance of trees and the parameter $\overline\eta$}\label{sec:math}
In this section, we prove a number of helpful lemmas concerning the particular function defined by Eqs. \eqref{E: kappas_recurrence}-\eqref{E: path_leaf} on the solution tree $\widetilde{\mathcal{T}}$ rooted at $r$, which we subsequently use to establish a connection to effective resistance in the following theorem. These results also arise naturally by identifying the harmonic function in the Rayleigh quotient of the operator $R_B R_A(\eta)$ with the minimum energy function of a relevant flow network on $\widetilde{\mathcal{T}}$, where the set of leaves $\mathcal{M}$ are all identified with a single point. Nevertheless, we construct this section in a self-contained fashion. 

Consistent with the definition of $\widetilde{\mathcal{T}}$ in subsection \ref{sec:state_features}, we denote the set of leaves in $\widetilde{\mathcal{T}}$ by $\mathcal{M}$, and the set of leaves in the subtree $\widetilde{\mathcal{T}}(v)$ rooted at $v \in \widetilde{\mathcal{T}}$ by $\mathcal{M}(v)$. (In the context of the previous sections, the leaves of $\widetilde{\mathcal{T}}$ comprise the ``shallowest marked vertices." The results derived in this section, however, are applicable to arbitrary trees $\widetilde{\mathcal{T}}$ with root $r$ and leaf set $\mathcal{M}$.)

\begin{restatable}{thm}{treeprops}\label{thm:tree_properties}
    Let $\kappa: V(\widetilde{\mathcal{T}}) \longrightarrow \mathbb{R}$ be such that
    \begin{enumerate}
        \item $\sum\limits_{\substack{v\in\widetilde{\mathcal{T}}\\v\neq r}}\kappa_{v}^{2} = 1$,
        \item $\kappa_v = \sum\limits_{c\leftarrow v} \kappa_c$ for all $v \not\in \mathcal{M}$, and 
        \item for any $v \in \widetilde{\mathcal{T}}$, $\sum\limits_{u \in \mathcal{P}(v,m_0)}\kappa_u = \sum\limits_{u \in \mathcal{P}(v,m_1)}\kappa_u$ for all leaves $m_0,m_1 \in \mathcal{M}({v})$, 
    \end{enumerate}
    where $\kappa_v \equiv \kappa(v)$. Then, $\overline\eta: \{\widetilde{\mathcal{T}}(v)\}_{v\in\widetilde{\mathcal{T}}} \longrightarrow \mathbb{R}_{\geq 0}$ defined by
    \[
        \overline\eta(v) \equiv \overline\eta(\widetilde{\mathcal{T}}(v)) = \frac{\sum\limits_{w \in \tree{v}}\kappa_w^2}{\kappa_v^2} - 1
    \]
    is an invariant of the set of subtrees $\{\widetilde{\mathcal{T}}(v)\}_{v \in \widetilde{T}}$, and $\widetilde{\eta}(v)$ equals the effective resistance between $v$ and $\mathcal{M}(v)$ (through $\widetilde{\mathcal{T}}(v)$) for each $v\in\widetilde{\mathcal{T}}$. Furthermore, if $\widetilde{\mathcal{T}}(v)$ is of maximum degree $d$ and maximum depth $n$ and has at most $k$ leaves, then \[\overline{\eta}(v) \in \left[\max\left(\frac{1}{k},\frac{1}{d}\right),n\right].\]
\end{restatable}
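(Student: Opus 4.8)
The plan is to prove the three claims of Theorem~\ref{thm:tree_properties} in sequence: first that $\overline\eta(v)$ is well-defined (an ``invariant'' depending only on the tree structure, not on the particular $\kappa$), then that it equals the effective resistance between $v$ and $\mathcal{M}(v)$, and finally the $[\max(1/k,1/d),n]$ bound. The central computational device will be a \emph{recursion} for the quantity $\sum_{w\in\widetilde{\mathcal{T}}(v)}\kappa_w^2$ in terms of the children's subtrees. Writing $E(v)\equiv\sum_{w\in\widetilde{\mathcal{T}}(v)}\kappa_w^2$, I would split off the root term to get $E(v)=\kappa_v^2+\sum_{c\leftarrow v}E(c)$, and then divide by $\kappa_v^2$ so that
\[
\overline\eta(v)+1=\frac{E(v)}{\kappa_v^2}=1+\sum_{c\leftarrow v}\frac{\kappa_c^2}{\kappa_v^2}\bigl(\overline\eta(c)+1\bigr).
\]
This reduces everything to understanding the ratios $\kappa_c/\kappa_v$, which by condition~2 ($\kappa_v=\sum_{c\leftarrow v}\kappa_c$) and condition~3 (the path-sum balance) are forced.

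First I would establish the invariance and the resistance identification together, by induction on the height of $\widetilde{\mathcal{T}}(v)$. The base case is a leaf $m\in\mathcal{M}$, where $\widetilde{\mathcal{T}}(m)=\{m\}$ gives $\overline\eta(m)=0$, matching the zero resistance of a trivial network. For the inductive step, the series-parallel rules for effective resistance on a tree say that the resistance $\rho(v)$ from $v$ to its leaf set is obtained by adding a unit resistor on each edge $v$--$c$ in series with $\rho(c)$, then combining the resulting child-branches in parallel:
\[
\frac{1}{\rho(v)}=\sum_{c\leftarrow v}\frac{1}{1+\rho(c)}.
\]
The key lemma I need is that the coefficient $\kappa_c^2/\kappa_v^2$ appearing in the recursion above equals the parallel-conductance weight $\tfrac{1}{1+\rho(c)}\big/\sum_{c'\leftarrow v}\tfrac{1}{1+\rho(c')}$; concretely, I expect to show $\kappa_c/\kappa_v=\bigl(1+\overline\eta(c)\bigr)^{-1}\big/\sum_{c'\leftarrow v}\bigl(1+\overline\eta(c')\bigr)^{-1}$. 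This is precisely what condition~3 buys us: the path-balance condition makes the $\kappa$-values behave like a harmonic potential, so the branch weights distribute inversely to branch resistances. Granting this lemma, substituting $\kappa_c^2/\kappa_v^2$ into the $E(v)/\kappa_v^2$ recursion and simplifying should collapse to $\overline\eta(v)=\rho(v)$, simultaneously proving the resistance identity and (since $\rho(v)$ manifestly depends only on tree structure) the invariance.

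The main obstacle I anticipate is exactly this lemma relating the $\kappa$-ratios to the conductances, since it is where condition~3 must be converted into a quantitative statement. I would prove it by defining, for each $v$, the common path-sum $\Sigma(v)\equiv\sum_{u\in\mathcal{P}(v,m)}\kappa_u$ (well-defined by condition~3, independent of the chosen leaf $m\in\mathcal{M}(v)$), and showing $\Sigma(v)=\kappa_v\bigl(1+\overline\eta(v)\bigr)$ by an induction that interlocks with the main one --- intuitively $\Sigma$ plays the role of the voltage drop from $v$ to the identified leaf-terminal while $\kappa_v$ plays the role of the current out of $v$, so their ratio is the resistance. The path-balance condition at $v$ reads $\kappa_v+\Sigma(c)=\Sigma(v)$ for every child $c$ (the path from $v$ picks up $\kappa_v$ then descends through some child), which pins the ratios once combined with $\kappa_v=\sum_{c\leftarrow v}\kappa_c$. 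This is likely the subtlest bookkeeping in the proof, and I would need to double-check index conventions (whether $\kappa_v$ is included in which path sum) against Eq.~\eqref{E: path_leaf}.

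Finally, for the bound $\overline\eta(v)\in[\max(1/k,1/d),n]$ I would argue directly from the resistance characterization rather than from $\kappa$. The upper bound $\overline\eta(v)\le n$ follows because effective resistance never exceeds the length of any single path to a leaf (a tree has a unique path, of length at most the depth $n$, so the resistance is bounded by this series resistance by Rayleigh monotonicity / the fact that deleting all but one leaf-path only increases resistance). For the lower bounds: $\overline\eta(v)\ge 1/d$ holds at the first branching because $v$ has at most $d$ children, so at least $1/d_v\ge 1/d$ resistance is incurred crossing the first level (combining at most $d$ unit-or-greater resistors in parallel gives conductance at most $d$); and $\overline\eta(v)\ge 1/k$ follows since placing all $k\le$ leaves as close as possible still forces conductance $\le k$ into the terminal. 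I would phrase both lower bounds cleanly using $1/\overline\eta(v)=\sum_{c\leftarrow v}(1+\overline\eta(c))^{-1}\le\min(d_v,k)$, which drops straight out of the recursion already derived. These extremal bounds should be short once the recursion is in hand; the real work is all in the resistance identity.
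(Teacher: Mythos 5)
Your plan is, at its core, the paper's own proof. Your ``key lemma'' $\Sigma(v)=\kappa_v\bigl(1+\overline\eta(v)\bigr)$ is exactly the paper's \Cref{lem: kappa_recursion}, namely $\kappa_v\sum_{u\in\mathcal{P}(v,m)}\kappa_u=\sum_{u\in\widetilde{\mathcal{T}}(v)}\kappa_u^2$, divided through by $\kappa_v^2$; your observation that path-balance at $v$ reads $\Sigma(v)=\kappa_v+\Sigma(c)$ with $\Sigma(c)$ independent of the child is the same mechanism used there; and your conductance-weight formula for $\kappa_c/\kappa_v$ is precisely \Cref{cor:ratio}. The interlocked induction deriving $1/\overline\eta(v)=\sum_{c\leftarrow v}\bigl(1+\overline\eta(c)\bigr)^{-1}$ is the paper's \Cref{thm:resistance}, which likewise delivers invariance and the resistance identification in one stroke. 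The only real divergence is in the bounds: the paper proves $\overline\eta\le n$ and $\overline\eta\ge 1/k$ by Cauchy--Schwarz, applied to \Cref{cor: kappa_recursion} (i.e.\ to $1/\kappa_r=\sum_{u\in\mathcal{P}(r,m),\,u\neq r}\kappa_u$) and to $\kappa_r=\sum_{m\in\mathcal{M}}\kappa_m$ respectively, using the recurrence only for the $1/d_r$ bound, whereas you extract all three from the recurrence. That route works: dropping all but one child term gives $\overline\eta(v)\le 1+\overline\eta(c)$ and hence the depth bound by induction, and your claim $1/\overline\eta(v)\le k$ needs only the one-line induction $\bigl(1+\overline\eta(c)\bigr)^{-1}\le 1\le|\mathcal{M}(c)|$ summed over children, since $\sum_{c\leftarrow v}|\mathcal{M}(c)|=|\mathcal{M}(v)|$.

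There is, however, one genuine omission: nothing in your plan establishes that $\kappa_v\neq 0$ for every $v$, yet every step of it divides by some $\kappa$ --- the definition of $\overline\eta(v)$ itself, the ratios $\kappa_c/\kappa_v$, and the rewriting $\kappa_c=\Sigma(c)/\bigl(1+\overline\eta(c)\bigr)$. Conditions 1--3 only rule out $\kappa\equiv 0$ outright; a priori a single vertex, and hence (as it turns out) an entire branch, could carry weight zero, in which case your induction is not even well-posed. The paper devotes its first lemma (\Cref{lem: same_phase}) to exactly this point, showing all the $\kappa_v$ are nonzero and of a common sign before anything is divided by anything; that positivity is also what licenses treating the $\bigl(1+\overline\eta(c)\bigr)^{-1}$ as genuine conductances. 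Fortunately the repair lives inside your own framework: your key identity $\kappa_v\Sigma(v)=\sum_{u\in\widetilde{\mathcal{T}}(v)}\kappa_u^2$ is division-free, so $\kappa_v=0$ forces every $\kappa_u$ on $\widetilde{\mathcal{T}}(v)$ to vanish; then $\Sigma(v)=0$, path-balance at the parent forces $\Sigma(c')=0$ for every sibling $c'$, the identity applied at each sibling zeroes their subtrees, condition 2 zeroes the parent, and the zero propagates up to the root, contradicting the normalisation in condition 1. You should state and prove this (or invoke \Cref{lem: same_phase}) as a preliminary step; with that added, your plan is sound.
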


The remainder of this section will develop the proof of this theorem. In \Cref{lem: same_phase}, we demonstrate that for $\kappa$ satisfying the conditions of \Cref{thm:tree_properties}, $\kappa_{v_0}$ and $\kappa_{v_1}$ have the same sign for all vertices $v_0, v_1 \in\widetilde{\mathcal{T}}$. Lemmas \ref{lem: kappa_recursion} and \ref{cor: kappa_recursion} are technical lemmas which are used to prove \Cref{thm:resistance}, our first statement about effective resistance. \Cref{thm: eta_bound} completes the proof of \Cref{thm:tree_properties} by bounding the effective resistance.  For the remainder of this section, we assume that $\kappa$ satisfies the hypotheses of \Cref{thm:tree_properties}, i.e., Eqs. \eqref{E: kappas_recurrence}-\eqref{E: path_leaf}.

\begin{lemma}\label{lem: same_phase} 
For all $v_{0},v_{1} \in \widetilde{\mathcal{T}}$, $\kappa_{v_{0}}\kappa_{v_{1}} > 0$.
\begin{proof}
First, we claim that for every vertex $v \in \widetilde{\mathcal{T}}$ and any leaf $m \in \mathcal{M}(v)$, 
\[ \kappa_{m}\sum_{u \in \mathcal{P}(v,m)}\kappa_{u} > 0. \]
This is trivially true if $v$ is a leaf. Now, assume this holds true for every child $c$ of a vertex $v\not\in\mathcal{M}$. Then, for any two children $c_{0}$, $c_{1}$ of $v$ and leaves $m_{0} \in \widetilde{\mathcal{T}}(c_{0})$, $m_{1} \in \widetilde{\mathcal{T}}(c_{1})$,
\begin{align*}
    0 &< \left(\kappa_{m_{0}}\sum_{u\in\mathcal{P}(c_{0},m_{0})}\kappa_{u}\right)\left(\kappa_{m_{1}}\sum_{u\in\mathcal{P}(c_{1},m_{1})}\kappa_{u}\right) \\
    &= \kappa_{m_{0}}\kappa_{m_{1}}\left(\sum_{u\in\mathcal{P}(v,m_{0})}\kappa_{u}-\kappa_{v}\right)\left(\sum_{u\in\mathcal{P}(v,m_{1})}\kappa_{u} - \kappa_{v}\right) \\
    &= \kappa_{m_{0}}\kappa_{m_{1}}\left(\sum_{u\in\mathcal{P}(v,m_{0})}\kappa_{u}-\kappa_{v}\right)^{2},
\end{align*}
where in the last line we have used Eq.\ \eqref{E: path_leaf}.
Thus, $\kappa_{m_{0}}\kappa_{m_{1}} > 0$ for all $m_{0},m_{1} \in \mathcal{M}(v)$, from which it directly follows that
\[ \kappa_{m_{0}}\sum_{u\in\mathcal{P}(v,m_{0})}\kappa_{u} = \sum_{u\in\mathcal{P}(v,m_{0})}\sum_{m\in\mathcal{M}(u)}\kappa_{m_{0}}\kappa_{m} > 0,\]
since for $u\in\mathcal{P}(v,m_{0}) \subseteq \widetilde{\mathcal{T}}(v)$, $m \in \mathcal{M}(u)$ implies that $m\in \mathcal{M}(v)$.

Therefore, the claim is true for all $v \in \widetilde{\mathcal{T}}$, including for $r$, and so $\kappa_{m_{0}}\kappa_{m_{1}} > 0$ for all $m_{0}, m_{1} \in \mathcal{M}$. Then,
\[ \kappa_{v_0}\kappa_{v_1} = \left(\sum_{m_0\in \mathcal{M}(v_0)}\kappa_{m_0}\right)\left(\sum_{m_1\in\mathcal{M}(v_1)}\kappa_{m_1}\right) = \sum_{m_0\in\mathcal{M}(v_0)}\sum_{m_1\in\mathcal{M}(v_1)}\kappa_{m_0}\kappa_{m_1} > 0,\]
as desired.
\end{proof}
\end{lemma}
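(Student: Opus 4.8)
The plan is to deduce the claim from a stronger, more local statement that I would prove by induction up the tree: for every vertex $v \in \widetilde{\mathcal{T}}$ and every leaf $m \in \mathcal{M}(v)$, the path-weight $\kappa_m \sum_{u \in \mathcal{P}(v,m)}\kappa_u$ is strictly positive. Granting this, the lemma is immediate. Iterating the recurrence \eqref{E: kappas_recurrence} gives $\kappa_w = \sum_{m \in \mathcal{M}(w)}\kappa_m$ for every $w \in \widetilde{\mathcal{T}}$; specialising the local statement to $v = r$ shows $\kappa_{m_0}\kappa_{m_1} > 0$ for every pair of leaves, since by \eqref{E: path_leaf} the root-to-leaf path sum is a common value $L \neq 0$ and each $\kappa_m$ shares the sign of $L$. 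Hence, for arbitrary $v_0, v_1$,
\[
\kappa_{v_0}\kappa_{v_1} = \Bigl(\sum_{m_0 \in \mathcal{M}(v_0)}\kappa_{m_0}\Bigr)\Bigl(\sum_{m_1 \in \mathcal{M}(v_1)}\kappa_{m_1}\Bigr) = \sum_{m_0 \in \mathcal{M}(v_0)}\sum_{m_1 \in \mathcal{M}(v_1)}\kappa_{m_0}\kappa_{m_1} > 0,
\]
because $\mathcal{M}(v_0),\mathcal{M}(v_1)$ are nonempty (every vertex of $\widetilde{\mathcal{T}}$ lies below $r$ on some root-leaf path) and every summand is positive.

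The engine of the induction is a perfect-square collapse driven by the path-leaf constraint \eqref{E: path_leaf}. At a non-leaf $v$, assuming the local statement for all children $c \leftarrow v$, I would pick two children $c_0, c_1$ and descendant leaves $m_0 \in \mathcal{M}(c_0)$, $m_1 \in \mathcal{M}(c_1)$ and multiply the two (positive) child-inequalities. Rewriting each child-rooted path sum as $\sum_{u\in\mathcal{P}(c_i,m_i)}\kappa_u = \sum_{u\in\mathcal{P}(v,m_i)}\kappa_u - \kappa_v$ and then using \eqref{E: path_leaf} to identify the two $v$-rooted path sums turns the product into $\kappa_{m_0}\kappa_{m_1}\bigl(\sum_{u\in\mathcal{P}(v,m_0)}\kappa_u - \kappa_v\bigr)^2$. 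Since the left-hand side is positive while the squared factor is nonnegative, this forces $\kappa_{m_0}\kappa_{m_1} > 0$ for all leaves in $\mathcal{M}(v)$; feeding this pairwise positivity back through $\kappa_u = \sum_{m\in\mathcal{M}(u)}\kappa_m$ along $\mathcal{P}(v,m_0)$ (each such $u$ satisfies $\mathcal{M}(u)\subseteq\mathcal{M}(v)$) then recovers $\kappa_{m_0}\sum_{u\in\mathcal{P}(v,m_0)}\kappa_u > 0$, closing the induction.

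The step I expect to be the real obstacle is securing \emph{strictness} rather than mere nonnegativity — equivalently, ruling out vanishing leaf-weights, which the base case $v = m$ (giving $\kappa_m^2$) tacitly needs. I would handle this by strengthening the inductive hypothesis into a dichotomy: each subtree $\widetilde{\mathcal{T}}(v)$ either carries $\kappa \equiv 0$ throughout, or satisfies the strict local statement. The square-collapse shows that if even one child subtree is nontrivial then its path sum $\sum_{u\in\mathcal{P}(v,m_i)}\kappa_u - \kappa_v$ is nonzero, and the path-leaf identity \eqref{E: path_leaf} then forces this same nonzero value on every sibling branch, so no branch can be the all-zero case while another is nontrivial. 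This propagates nonvanishing across siblings and up the tree, and the normalisation \eqref{E: kappas_norm} excludes the global all-zero alternative at the root, delivering strict positivity everywhere. The electrical-network reading — that this lemma is the sign-definiteness of the harmonic voltage between $r$ and the shorted leaf set — gives useful intuition, but I would keep the argument purely combinatorial, in line with the paper's self-contained treatment.
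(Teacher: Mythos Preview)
Your proposal is correct and follows essentially the same inductive square-collapse argument as the paper's proof: the same local claim, the same product-of-child-inequalities rewritten via \eqref{E: path_leaf} into $\kappa_{m_0}\kappa_{m_1}$ times a perfect square, the same expansion of $\kappa_u$ over leaves to close the induction, and the same final step at $v=r$. Your additional dichotomy step to secure strictness of the base case is a valid and useful tightening --- the paper simply asserts the leaf case is ``trivially true'' without addressing why $\kappa_m \neq 0$ for every leaf --- so your treatment is, if anything, slightly more careful than the original.
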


Without loss of generality, we henceforth assume that $\kappa_{v} > 0$ for all $v \in \widetilde{\mathcal{T}}$.

\begin{lemma} \label{lem: kappa_recursion} For every $v \in \widetilde{\mathcal{T}}$ and any $m \in \mathcal{M}(v)$,
\[ \kappa_{v}\sum_{u\in\mathcal{P}(v,m)}\kappa_{u} = \sum_{u\in\widetilde{\mathcal{T}}(v)}\kappa_{u}^{2}. \]

\begin{proof}
This is trivially true if $v \in \mathcal{M}$, and assuming that this holds for every child $c$ of $v \not\in\mathcal{M}$, we have for any $m \in\mathcal{M}(v)$
\begin{align*}
    \kappa_{v}\sum_{u\in\mathcal{P}(v,m)}\kappa_{u}
    &= \kappa_{v}\left(\kappa_{v} + \sum_{\substack{u\in\mathcal{P}(v,m)\\u\neq v}}\kappa_{u}\right) \\
    &= \kappa_{v}^{2} + \sum_{c \leftarrow v}\kappa_{c}\sum_{u\in\mathcal{P}(c,m)}\kappa_{u} \\
    &= \kappa_{v}^{2} + \sum_{c\leftarrow v}\sum_{u\in\widetilde{\mathcal{T}}(c)}\kappa_{u}^{2} \\
    &= \sum_{u\in\widetilde{\mathcal{T}}(v)}\kappa_{u}^{2},
\end{align*}
where the second line follows from Eq.\ \eqref{E: kappas_recurrence}, the third from Eq.\ \eqref{E: path_leaf}, and the fourth from the induction hypothesis.
\end{proof}
\end{lemma}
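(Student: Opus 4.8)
The plan is to prove the identity by structural induction on the subtree $\widetilde{\mathcal{T}}(v)$, working upward from the leaves. The base case is immediate: if $v \in \mathcal{M}$ is a leaf, then $\widetilde{\mathcal{T}}(v) = \{v\}$ and the only element of $\mathcal{M}(v)$ is $m = v$, so both sides reduce to $\kappa_v^2$. For the inductive step I take $v \notin \mathcal{M}$, assume the identity for each child of $v$, fix $m \in \mathcal{M}(v)$, and let $c^\ast$ be the unique child of $v$ on $\mathcal{P}(v,m)$, so that $\mathcal{P}(v,m) = \{v\} \cup \mathcal{P}(c^\ast, m)$. Splitting off the $v$-term gives
\[ \kappa_v \sum_{u \in \mathcal{P}(v,m)} \kappa_u = \kappa_v^2 + \kappa_v \sum_{u \in \mathcal{P}(c^\ast, m)} \kappa_u, \]
and it remains to show the second term equals $\sum_{u \in \widetilde{\mathcal{T}}(v),\, u \neq v} \kappa_u^2$.

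The mechanism behind this is the interplay of the two structural hypotheses on $\kappa$. The path condition Eq.\ \eqref{E: path_leaf} at $v$ forces $\sum_{u \in \mathcal{P}(v, m')} \kappa_u$ to take a single common value over all leaves $m' \in \mathcal{M}(v)$; subtracting $\kappa_v$ shows that $\sum_{u \in \mathcal{P}(c, m_c)} \kappa_u$ is this same constant for every child $c$ of $v$ and every leaf $m_c \in \mathcal{M}(c)$, and in particular equals $\sum_{u \in \mathcal{P}(c^\ast, m)} \kappa_u$. This is exactly what allows me to expand $\kappa_v$ via the recurrence Eq.\ \eqref{E: kappas_recurrence}, $\kappa_v = \sum_{c \leftarrow v} \kappa_c$, and redistribute the common path sum child-by-child:
\[ \kappa_v \sum_{u \in \mathcal{P}(c^\ast, m)} \kappa_u = \sum_{c \leftarrow v} \kappa_c \sum_{u \in \mathcal{P}(c, m_c)} \kappa_u. \]
Applying the induction hypothesis to each child turns the $c$-th summand into $\sum_{u \in \widetilde{\mathcal{T}}(c)} \kappa_u^2$, and since the subtrees $\widetilde{\mathcal{T}}(c)$ over the children $c \leftarrow v$ partition $\widetilde{\mathcal{T}}(v) \setminus \{v\}$, the sum collapses to $\sum_{u \in \widetilde{\mathcal{T}}(v),\, u \neq v} \kappa_u^2$. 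Adding back $\kappa_v^2$ gives $\sum_{u \in \widetilde{\mathcal{T}}(v)} \kappa_u^2$, closing the induction.

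The step I expect to be the main obstacle is precisely this redistribution. The fixed leaf $m$ lies in only one child's subtree, so a priori the path sum $\sum_{u \in \mathcal{P}(c^\ast, m)} \kappa_u$ is tied to the single child $c^\ast$; it is only the path condition that certifies this value coincides with each child's own path sum, which in turn is what makes the expansion $\kappa_v = \sum_c \kappa_c$ compatible with invoking the induction hypothesis separately on each child. Once this common-value fact is established, the rest is routine telescoping, and the sign-consistency from \Cref{lem: same_phase} is not even needed, since the identity is purely algebraic.
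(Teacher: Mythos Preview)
Your proposal is correct and follows essentially the same structural-induction argument as the paper: base case at leaves, split off $\kappa_v^2$, use Eq.~\eqref{E: kappas_recurrence} to expand $\kappa_v$ over children, invoke Eq.~\eqref{E: path_leaf} to make the path sum child-independent, apply the induction hypothesis to each child, and recombine via the partition $\widetilde{\mathcal{T}}(v)\setminus\{v\}=\bigcup_{c\leftarrow v}\widetilde{\mathcal{T}}(c)$. If anything, your version is slightly more explicit than the paper's in isolating the child $c^\ast$ and spelling out exactly where the path condition enters.
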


\begin{lemma} \label{cor: kappa_recursion}
For any $m \in \mathcal{M}$,
\[ \kappa_{r}\sum_{\substack{u\in\mathcal{P}(r,m)\\u\neq r}}\kappa_{u} = 1. \]

\begin{proof}
This follows directly from \Cref{lem: kappa_recursion} and Eq.\ \eqref{E: kappas_norm}.
\end{proof}
\end{lemma}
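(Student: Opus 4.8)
The plan is to obtain this as a direct specialisation of \Cref{lem: kappa_recursion} to the root, combined with the normalisation condition. Since $\mathcal{M}(r) = \mathcal{M}$ and $\widetilde{\mathcal{T}}(r) = \widetilde{\mathcal{T}}$, setting $v = r$ in \Cref{lem: kappa_recursion} gives, for any $m \in \mathcal{M}$,
\[ \kappa_r \sum_{u \in \mathcal{P}(r,m)} \kappa_u = \sum_{u \in \widetilde{\mathcal{T}}} \kappa_u^2. \]
The left-hand path sum includes the root, so I would split off its contribution, writing $\sum_{u \in \mathcal{P}(r,m)} \kappa_u = \kappa_r + \sum_{u \in \mathcal{P}(r,m),\, u \neq r} \kappa_u$.

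For the right-hand side, I would peel off the $\kappa_r^2$ term and invoke the normalisation of Eq.\ \eqref{E: kappas_norm}, which sums only over non-root vertices, to conclude $\sum_{u \in \widetilde{\mathcal{T}}} \kappa_u^2 = \kappa_r^2 + 1$. Substituting both rewrites into the specialised identity and expanding yields $\kappa_r^2 + \kappa_r \sum_{u \in \mathcal{P}(r,m),\, u \neq r} \kappa_u = \kappa_r^2 + 1$, after which cancelling the common $\kappa_r^2$ produces the claimed equation.

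There is essentially no obstacle here; the one point requiring care is the asymmetric treatment of the root, since \Cref{lem: kappa_recursion} includes $r$ in both of its sums whereas both the target identity and the normalisation in Eq.\ \eqref{E: kappas_norm} exclude it. Reconciling these conventions is precisely what collapses the right-hand side to the constant $1$. \Cref{lem: same_phase} moreover ensures $\kappa_r > 0$, so the coefficient we factor out is genuinely nonzero.
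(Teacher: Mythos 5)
Your proof is correct and is exactly the argument the paper intends: specialise \Cref{lem: kappa_recursion} to $v = r$, split off the root terms, and apply the normalisation Eq.\ \eqref{E: kappas_norm}. The only quibble is that invoking \Cref{lem: same_phase} is unnecessary, since cancelling $\kappa_r^2$ is an additive cancellation and requires no nonvanishing assumption.
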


Belovs' original result relates the commute time of a classical random walk with the problem of determining connectivity via a quantum walk. In particular, the commute time can be characterised in terms of the effective resistance of the corresponding graph, provided that the graph's structure is known \textit{a priori} \cite{belovs2013quantum}. In light of this, the following theorem, which is essentially a statement about the effective resistance of trees, should not be too surprising.

\begin{thm}\label{thm:resistance}
    For every vertex $v\in \widetilde{\mathcal{T}}$,
    \begin{equation} \label{def: etabar} \overline{\eta}(v) \equiv \frac{\sum\limits_{u \in \widetilde{\mathcal{T}}(v)}\kappa_{u}^{2}}{\kappa_v^2} - 1\end{equation}
    is a unique map for all $\{\widetilde{\mathcal{T}}(v)\}_{v\in\widetilde{\mathcal{T}}}$. 
    
    \begin{proof}
    We begin by noting that for a leaf $m\in\mathcal{M}$,
    \[\overline{\eta}(m) = \frac{\sum\limits_{u\in\widetilde{\mathcal{T}}(m)}\kappa_{u}^{2}}{\kappa_m^2} -1 = 0\]
    is clearly unique for the trivial graph $\widetilde{\mathcal{T}}(m)$.
    Now, assume that for every child $c$ of $v \not\in\mathcal{M}$, $\overline{\eta}(c) = \sum\limits_{u \in \widetilde{\mathcal{T}}(c)}\kappa_{u}^{2}/\kappa_{c}^{2} - 1$ is uniquely defined for $\widetilde{\mathcal{T}}(c)$. Then,
    \begin{align*}
        \overline{\eta}(v) = \frac{\sum\limits_{u\in\widetilde{\mathcal{T}}(v)}\kappa_{u}^{2}}{\kappa_{v}^{2}} - 1 &= \frac{\kappa_{v}^{2}}{\kappa_{v}^{2}} + \frac{\sum\limits_{c\leftarrow v}\sum\limits_{u\in \widetilde{\mathcal{T}}(c)}\kappa_{u}^{2}}{\left(\sum\limits_{c\leftarrow v}\kappa_{c}\right)^{2}} - 1 \\
        &= \frac{\sum\limits_{c\leftarrow v}\kappa_c\sum\limits_{u \in \mathcal{P}(c,m)}\kappa_u}{\left(\sum\limits_{c\leftarrow v}\kappa_c\right)^2} 
        \\
        &= \frac{\sum\limits_{u \in \mathcal{P}(c_0,m_{0})}\kappa_u}{\sum\limits_{c\leftarrow v}\kappa_c} \\
        &= \left(\sum\limits_{c\leftarrow v} \frac{\kappa_c^2}{\kappa_{c}\sum\limits_{u\in \mathcal{P}(c,m)}\kappa_u}\right)^{-1}\\ 
        &= \left(\sum\limits_{c\leftarrow v} \frac{\kappa_c^2}{\sum\limits_{u\in \tree{c}}\kappa_u^2}\right)^{-1} \\
        &= \left(\sum_{c\leftarrow v}\frac{1}{\overline{\eta}(c) + 1}\right)^{-1}
    \end{align*}
    where the second and fifth lines follow from \Cref{lem: kappa_recursion}, and the third and fourth lines use the direct consequence of Eq.\ \eqref{E: path_leaf} that for children $c$ of $v$ and leaves $m \in \mathcal{M}(c)$, $\sum_{u\in\mathcal{P}(c,m)}\kappa_{u}$ is independent of $c$.

    Therefore, for every $v \in \widetilde{\mathcal{T}}$, $\overline{\eta}(v)$ is unique and well-defined. Furthermore, since we have
    \[ \frac{1}{\overline{\eta}(v)} = \sum_{c\leftarrow v}\frac{1}{\overline{\eta}(c) + 1}, \]
    $\overline{\eta}(v)$ satisfies the recurrence relation for the effective resistance of a circuit with source $v$ and sinks $\mathcal{M}(v)$, where each edge has resistance 1. 
    \end{proof}
\end{thm}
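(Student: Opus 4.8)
The plan is to prove the claim by structural induction on the subtree $\widetilde{\mathcal{T}}(v)$, working upward from the leaves, and to establish along the way a recurrence for $\overline\eta(v)$ whose right-hand side involves only the values $\overline\eta(c)$ at the children $c\leftarrow v$. Because this recurrence never references the global normalisation \eqref{E: kappas_norm} nor any individual $\kappa$-value, it will follow that $\overline\eta(v)$ is determined by the subtree $\widetilde{\mathcal{T}}(v)$ alone, which is exactly the asserted uniqueness (the defining ratio in Eq.\ \eqref{def: etabar} is in any case scale-invariant, so the genuine content is independence from the global distribution of the $\kappa_u$).

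For the base case, if $v=m\in\mathcal{M}$ is a leaf, then $\widetilde{\mathcal{T}}(m)=\{m\}$ and the sum in Eq.\ \eqref{def: etabar} collapses to $\kappa_m^2/\kappa_m^2-1=0$, manifestly independent of $\kappa$. For the inductive step I would assume that for each child $c\leftarrow v$ the quantity $\overline\eta(c)=\sum_{u\in\widetilde{\mathcal{T}}(c)}\kappa_u^2/\kappa_c^2-1$ is already known to depend only on $\widetilde{\mathcal{T}}(c)$. Starting from Eq.\ \eqref{def: etabar}, I would split the numerator as $\kappa_v^2+\sum_{c\leftarrow v}\sum_{u\in\widetilde{\mathcal{T}}(c)}\kappa_u^2$, substitute $\kappa_v=\sum_{c\leftarrow v}\kappa_c$ from Eq.\ \eqref{E: kappas_recurrence} into the denominator, and then invoke \Cref{lem: kappa_recursion} to rewrite each inner sum $\sum_{u\in\widetilde{\mathcal{T}}(c)}\kappa_u^2$ as $\kappa_c\sum_{u\in\mathcal{P}(c,m)}\kappa_u$ for any leaf $m\in\mathcal{M}(c)$.

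The crux — and the step I expect to be the main obstacle — is eliminating the residual $\kappa$-dependence, since the path-sum $\sum_{u\in\mathcal{P}(c,m)}\kappa_u$ appearing after these substitutions could a priori depend on the choice of child $c$ and leaf $m$. Here I would apply Eq.\ \eqref{E: path_leaf} at the common ancestor $v$: it forces $\sum_{u\in\mathcal{P}(v,m)}\kappa_u$ to be the same for all $m\in\mathcal{M}(v)$, and since $\sum_{u\in\mathcal{P}(v,m)}\kappa_u=\kappa_v+\sum_{u\in\mathcal{P}(c,m)}\kappa_u$ for the child $c$ lying on the path to $m$, the quantity $\sum_{u\in\mathcal{P}(c,m)}\kappa_u$ is likewise a single common value $S$ across all children $c\leftarrow v$ and all leaves beneath them. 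This $S$ cancels between numerator and denominator; rewriting $\kappa_c\sum_{u\in\mathcal{P}(c,m)}\kappa_u=\sum_{u\in\widetilde{\mathcal{T}}(c)}\kappa_u^2=\kappa_c^2(\overline\eta(c)+1)$ via \Cref{lem: kappa_recursion} then yields the closed recurrence $\overline\eta(v)=\bigl(\sum_{c\leftarrow v}1/(\overline\eta(c)+1)\bigr)^{-1}$. Since the right-hand side is assembled solely from the children's $\overline\eta(c)$, which by the induction hypothesis depend only on $\widetilde{\mathcal{T}}(c)$, the value $\overline\eta(v)$ is uniquely determined by $\widetilde{\mathcal{T}}(v)$, completing the induction. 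I would close by remarking that this recurrence is precisely the series/parallel law for the effective resistance of a unit-resistance network with source $v$ and sink-set $\mathcal{M}(v)$, delivering the electrical interpretation stated in the theorem.
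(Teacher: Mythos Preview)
Your proposal is correct and follows essentially the same approach as the paper: structural induction from the leaves, splitting the numerator over children, substituting $\kappa_v=\sum_{c\leftarrow v}\kappa_c$, invoking \Cref{lem: kappa_recursion} to convert subtree sums to path sums, using Eq.~\eqref{E: path_leaf} to make the path sum child-independent, and arriving at the parallel-resistance recurrence $\overline\eta(v)=\bigl(\sum_{c\leftarrow v}1/(\overline\eta(c)+1)\bigr)^{-1}$. Your justification of why $\sum_{u\in\mathcal{P}(c,m)}\kappa_u$ is common to all children is slightly more explicit than the paper's, but the argument is the same.
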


\begin{cor}\label{cor:ratio}
For any child $c$ of $v$,
\[
    \frac{\overline\eta(c)+1}{\overline\eta(v)} = \frac{\kappa_v}{\kappa_c}.
\]
\end{cor}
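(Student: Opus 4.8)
The plan is to prove this corollary directly from the two key ingredients already established in the preceding results, namely \Cref{lem: kappa_recursion} and the recurrence $1/\overline\eta(v) = \sum_{c\leftarrow v}1/(\overline\eta(c)+1)$ from \Cref{thm:resistance}. The statement relates the multiplicative ``voltage drop'' $\kappa_v/\kappa_c$ across a single edge to the series-resistance quantity $\overline\eta(c)+1$ normalised by $\overline\eta(v)$, so I would expect both sides to be expressible in terms of the sums $\sum_{u\in\widetilde{\mathcal{T}}(c)}\kappa_u^2$.

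First I would rewrite the definition \eqref{def: etabar} for the child $c$: since $\overline\eta(c) = \sum_{u\in\widetilde{\mathcal{T}}(c)}\kappa_u^2/\kappa_c^2 - 1$, we immediately get
\[
\overline\eta(c)+1 = \frac{\sum_{u\in\widetilde{\mathcal{T}}(c)}\kappa_u^2}{\kappa_c^2}.
\]
Next I would invoke \Cref{lem: kappa_recursion} applied at the vertex $c$ (for any $m\in\mathcal{M}(c)$), which gives $\sum_{u\in\widetilde{\mathcal{T}}(c)}\kappa_u^2 = \kappa_c\sum_{u\in\mathcal{P}(c,m)}\kappa_u$. Substituting this in, the $\kappa_c^2$ in the denominator partially cancels, leaving
\[
\overline\eta(c)+1 = \frac{\sum_{u\in\mathcal{P}(c,m)}\kappa_u}{\kappa_c}.
\]
By Eq.\ \eqref{E: path_leaf}, the numerator $\sum_{u\in\mathcal{P}(c,m)}\kappa_u$ is independent of the choice of $m\in\mathcal{M}(c)$, so this expression is well-defined; this is exactly the intermediate quantity that appeared in the chain of equalities within the proof of \Cref{thm:resistance}.

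To finish, I would obtain the matching expression for $\overline\eta(v)$. From the same line of reasoning in \Cref{thm:resistance}, $\overline\eta(v) = \left(\sum_{u\in\mathcal{P}(c_0,m_0)}\kappa_u\right)/\left(\sum_{c\leftarrow v}\kappa_c\right)$, where the numerator is the common path-sum value (independent of child), and by Eq.\ \eqref{E: kappas_recurrence} the denominator is simply $\kappa_v$. Since $c$ is a child of $v$, the path $\mathcal{P}(c,m)$ has the same total $\kappa$-weight as the representative path used for $\overline\eta(v)$ (they differ only by the vertex $v$ itself, which is excluded from $\mathcal{P}(c,m)$), so taking the ratio
\[
\frac{\overline\eta(c)+1}{\overline\eta(v)} = \frac{\left(\sum_{u\in\mathcal{P}(c,m)}\kappa_u\right)/\kappa_c}{\left(\sum_{u\in\mathcal{P}(c,m)}\kappa_u\right)/\kappa_v} = \frac{\kappa_v}{\kappa_c},
\]
the common path-sum cancels and the result follows. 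The only point requiring care is ensuring the numerator path-sums genuinely coincide across the two expressions, which is precisely what Eq.\ \eqref{E: path_leaf} guarantees; I anticipate this bookkeeping to be the main (though minor) obstacle, and everything else reduces to direct substitution.
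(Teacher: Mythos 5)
Your proof is correct and is essentially the paper's (omitted) argument: the paper dismisses \cref{cor:ratio} as following from the definition of $\overline\eta$ in \cref{thm:resistance}, and your computation—writing $\overline\eta(c)+1 = \bigl(\sum_{u\in\mathcal{P}(c,m)}\kappa_u\bigr)/\kappa_c$ via \cref{lem: kappa_recursion}, writing $\overline\eta(v) = \bigl(\sum_{u\in\mathcal{P}(c,m)}\kappa_u\bigr)/\kappa_v$ via the same chain of equalities (using Eq.~\eqref{E: path_leaf} and Eq.~\eqref{E: kappas_recurrence}), and cancelling the common path-sum—is exactly the verification the paper leaves to the reader.
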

\noindent
    This follows from the definition of $\overline{\eta}(v)$ in \Cref{thm:resistance} [cf.\ Eq.\ \eqref{def: etabar}] and the proof is omitted.

Since $\{\overline\eta(v)\}_{v\in\widetilde{\mathcal{T}}}$ is uniquely determined, \Cref{cor:ratio} guarantees that, up to a global factor, $\kappa$ is unique and well-defined. The global factor is fixed by the normalisation condition [Eq.\ \eqref{E: kappas_norm}].

The fact that $\overline{\eta} = \overline{\eta}(r)$ is the effective resistance of the tree $\widetilde{\mathcal{T}}$ makes the following theorem fairly intuitive.

\begin{thm}\label{thm: eta_bound}
    Let $n$ denote the depth of the tree $\widetilde{\mathcal{T}}$, $k \equiv |\mathcal{M}|$ the number of leaves, and $d_r$ the degree of the root vertex $r$. Then
    \[ \overline{\eta} \in \left[\max\left(\frac{1}{k},\frac{1}{d_r}\right), n\right].\]
    \begin{proof}
    First, we note that 
    \[ \overline{\eta} = \frac{1}{\kappa_{r}^{2}}\sum_{u\in\widetilde{\mathcal{T}}}\kappa_{u}^{2} - 1 = \frac{1}{\kappa_{r}^{2}}\sum_{\substack{u\in\widetilde{\mathcal{T}}\\u\neq r}}\kappa_{u}^{2} = \frac{1}{\kappa_{r}^{2}}. \]
    by definition.
    
    We use the Cauchy-Shwarz inequality to obtain the upper and lower bounds. By \Cref{cor: kappa_recursion}, for any $m \in \mathcal{M}$,
    \[ \frac{1}{\kappa_r^2} = \left(\sum_{\substack{u \in \mathcal{P}(r,m)\\u\neq r}}\kappa_{u}\right)^{2} \leq \left(\sum_{\substack{u \in \mathcal{P}(r,m)\\u\neq r}}1\right)\left(\sum_{\substack{u \in \mathcal{P}(r,m)\\u\neq r}}\kappa_{u}^{2}\right) \leq n\sum_{\substack{u \in \widetilde{\mathcal{T}}\\u\neq r}}\kappa_{u}^{2} = n.\]
    
    In the other direction,
    \[ \frac{1}{\kappa_{r}^{2}} = \left[\sum\limits_{m\in\mathcal{M}}\kappa_{m}\right]^{-2} \geq \left[\left(\sum\limits_{m\in\mathcal{M}}1\right)\left(\sum\limits_{m\in\mathcal{M}}\kappa_{m}^{2}\right)\right]^{-1} \geq \left(k\sum\limits_{\substack{u \in \widetilde{\mathcal{T}}\\u\neq r}}\kappa_{u}^{2}\right)^{-1} = \frac{1}{k}. \]
    In addition, as a consequence of \Cref{thm:resistance} (or simply the definition of effective resistance),
    \begin{align*}
        \frac{1}{\overline\eta} &= \sum_{c\leftarrow r} \frac{1}{\overline\eta(c) + 1} \leq \sum_{c\leftarrow r}1 = d_r.
    \end{align*}
    Thus, we have that $\overline\eta \geq \max\left(1/k, {1}/{d_r}\right)$, as claimed.
    
    \end{proof}
\end{thm}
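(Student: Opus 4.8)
The goal is to prove \Cref{thm: eta_bound}, which asserts the bounds $\overline{\eta} = 1/\kappa_r^2 \in [\max(1/k, 1/d_r), n]$. The plan is to first establish the clean identity $\overline{\eta} = 1/\kappa_r^2$, and then attack the upper and lower bounds separately, drawing on the recursive and normalisation structure already developed.

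First I would verify the identity $\overline{\eta}(r) = 1/\kappa_r^2$ directly from the definition in Eq.\ \eqref{def: etabar}: since $\sum_{u \in \widetilde{\mathcal{T}}} \kappa_u^2 = \kappa_r^2 + \sum_{u \neq r} \kappa_u^2$ and the normalisation Eq.\ \eqref{E: kappas_norm} forces $\sum_{u \neq r}\kappa_u^2 = 1$, the two leading terms cancel and leave $1/\kappa_r^2$. This reduces everything to bounding $1/\kappa_r^2$.

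For the upper bound $\overline{\eta} \leq n$, the natural tool is \Cref{cor: kappa_recursion}, which gives $\kappa_r \sum_{u \in \mathcal{P}(r,m), u \neq r} \kappa_u = 1$ for any fixed leaf $m$. Rewriting as $1/\kappa_r = \sum_{u \in \mathcal{P}(r,m), u\neq r}\kappa_u$ and squaring, I would apply Cauchy--Schwarz to split the sum: $(\sum \kappa_u)^2 \leq (\sum 1)(\sum \kappa_u^2)$. The first factor is the path length, which is at most $n$ (the depth bound), and the second factor is bounded by the full normalisation sum $\sum_{u\neq r}\kappa_u^2 = 1$. This yields $1/\kappa_r^2 \leq n$. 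For the lower bound $\overline{\eta}\geq 1/k$, I would instead use the relation $\kappa_r = \sum_{m\in\mathcal{M}}\kappa_m$ (from Eq.\ \eqref{E: kappas_recurrence} applied down to the leaves) and again invoke Cauchy--Schwarz in the reciprocal direction: $(\sum_{m}\kappa_m)^2 \leq k \sum_m \kappa_m^2 \leq k \sum_{u\neq r}\kappa_u^2 = k$, so that $1/\kappa_r^2 \geq 1/k$. For the complementary lower bound $\overline{\eta}\geq 1/d_r$, the cleanest route is the harmonic recurrence $1/\overline\eta = \sum_{c\leftarrow r} 1/(\overline{\eta}(c)+1)$ from \Cref{thm:resistance}; since each summand is at most $1$ and there are $d_r$ children, $1/\overline{\eta} \leq d_r$, giving $\overline{\eta}\geq 1/d_r$. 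Combining the two lower bounds gives $\overline{\eta}\geq \max(1/k, 1/d_r)$.

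I do not anticipate a serious obstacle here, as every bound reduces to a one-line Cauchy--Schwarz estimate or a trivial bound on the harmonic-sum recurrence, and all the structural inputs (\Cref{cor: kappa_recursion}, \Cref{thm:resistance}, and the normalisation) are already in hand. The only point requiring mild care is ensuring the correct handling of the root term: the sums in the path identity and the normalisation must consistently exclude $r$, and one must confirm that $\kappa_r = \sum_{m\in\mathcal{M}}\kappa_m$ holds (which follows by iterating the child-sum recurrence of Eq.\ \eqref{E: kappas_recurrence} from the root down to the leaves, valid because $r\notin\mathcal{M}$). Once these bookkeeping details are pinned down, the three estimates combine immediately to give the claimed interval.
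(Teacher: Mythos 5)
Your proposal is correct and follows essentially the same route as the paper's own proof: the identity $\overline{\eta} = 1/\kappa_r^2$ via normalisation, Cauchy--Schwarz along the path $\mathcal{P}(r,m)$ for the upper bound $n$, Cauchy--Schwarz over the leaves (using $\kappa_r = \sum_{m\in\mathcal{M}}\kappa_m$) for the lower bound $1/k$, and the harmonic recurrence $1/\overline{\eta} = \sum_{c\leftarrow r} 1/(\overline{\eta}(c)+1)$ from \Cref{thm:resistance} for the lower bound $1/d_r$. Your extra remark justifying $\kappa_r = \sum_{m\in\mathcal{M}}\kappa_m$ by iterating Eq.\ \eqref{E: kappas_recurrence} is a detail the paper leaves implicit, but there is no substantive difference in approach.
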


\section{Future directions} \label{conclusion}
This paper provides some natural directions for future work. Given that we are able to achieve a bound that scales like $\mathcal{O}(\sqrt{T\overline\eta_{\mathrm{max}}})$ in the case of trees of unknown structure (and that our argument is compatible with the generalisation to directed acyclic graphs in Ref.\ \cite{ambainis2017quantum}), it would be interesting to see to what extent these algorithms can be generalised to graphs of unknown structure. The results of \Cref{sec:math}
are derived in terms of local neighbors on the appropriate state space, and it seems plausible that similar arguments from graph theory might offer insight beyond trees. We expect that as long as appropriate diffusion operators can be implemented, many of the statements above should hold for arbitrary graphs of bounded degree. The challenge, then, lies in determining these diffusion operators and deriving a theorem like the one in \Cref{sec:expected_steps}.

Aside from Montanaro's analogue of backtracking itself, we do not make use of any of methods utilised by modern (classical) backtracking algorithms, such as those considered in Ref.\ \cite{ambainis2017quantum}. The effective resistance $\overline\eta$ reveals some information about the structure of the graph, thereby giving an indication of how ``easy" it is to find a solution. One might imagine, for instance, systematically marking subsets of vertices in a tree in an attempt to elucidate its structure, without incurring much additional overhead. Although it seems clear that the information contained in estimates of $\overline\eta$ for trees or subtrees can be exploited, explicit methods for doing so remain to be developed, and the potential advantages are not yet obvious. 

Finally, it may be worthwhile to determine the exact relation between the distribution of marked vertices and the effective resistance. This could perhaps be used to derive of a tighter bound than that of \Cref{sec:expected_steps} and reduce the dependence upon $k$ of the logarithmic factors in \Cref{thm: k_marked}.

\section{Acknowledgements}
The authors would like to thank Courtney Brell, Daniel Gottesman, and Ted Yoder for useful discussions. The bound on the expected number of steps in the walk is primarily due to the MathOverflow user fedja \cite{mathoverflow}. KW is grateful for support from the Mike Lazaridis Scholarship in Theoretical Physics. \PIRA

\newpage
\appendix

\section{Proof of \Cref{lem: P_epsilon}} \label{xi_proof}

In this appendix, we use the effective spectral gap lemma to prove that 
\begin{equation}
    \|P_{\epsilon}\ket{\Phi^{\perp}}\|^{2} = \bigO{\epsilon^{2}T\eta},
\end{equation}
where $P_{\epsilon}$ is the projector onto the span of eigenvectors of $R_{B}R_{A}(\eta)$ with eigenvalues $e^{2i\theta}$ such that $|\theta| \leq \epsilon$, and $\ket{\Phi^{\perp}}$ is defined as in Eq.\ \eqref{def: Phi_perp}.

\begin{lemma}[effective spectral gap lemma \cite{lee2011quantum}]\label{lem: spectral_gap}

Let $\Pi_{A}$ and $\Pi_{B}$ be projectors, and let $R_{A} = 2\Pi_{A} - I$ and $R_{B} = 2\Pi_{B} - I$ be the reflections about their respective ranges. Let $P_{\epsilon}$ denote the projector onto $\mathrm{span}(\{\ket{\psi}:R_{B}R_{A}\ket{\psi} = e^{2i\theta}\ket{\psi},|\theta| \leq \epsilon\})$. Then, if $\Pi_{A}\ket{\xi} = 0$, $\|P_{\epsilon}\Pi_{B}\ket{\xi}\| \leq \epsilon\norm{\ket{\xi}}$.


\end{lemma}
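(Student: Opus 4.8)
The plan is to prove this through Jordan's lemma applied to the pair of projectors $\Pi_A$ and $\Pi_B$, which simultaneously block-diagonalises them into mutually orthogonal subspaces of dimension at most two that are invariant under both projectors. First I would invoke Jordan's lemma to write the Hilbert space as $\mathcal{H} = \bigoplus_j \mathcal{H}_j$, where each $\mathcal{H}_j$ is invariant under $\Pi_A$ and $\Pi_B$ (hence under $R_A$, $R_B$, and $R_B R_A$) with $\dim\mathcal{H}_j \in \{1,2\}$. Because the blocks are orthogonal and invariant, the projector $P_\epsilon$, the reflections, and the components of $\ket{\xi}$ all respect the decomposition, so it suffices to prove the bound blockwise and recombine via Pythagoras.

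Within a two-dimensional block $\mathcal{H}_j$, I would fix an orthonormal basis $\{\ket{a_j},\ket{a_j^{\perp}}\}$ with $\Pi_A|_{\mathcal{H}_j} = \ket{a_j}\bra{a_j}$, and write $\ket{b_j} = \cos\theta_j\ket{a_j} + \sin\theta_j\ket{a_j^{\perp}}$ for the unit vector spanning $\mathrm{range}(\Pi_B)\cap\mathcal{H}_j$, where $\theta_j \in (0,\pi/2)$ is the principal angle between the two ranges. On $\mathcal{H}_j$ the operator $R_B R_A$ is a product of two reflections and therefore a rotation by $2\theta_j$, with eigenvalues $e^{\pm 2i\theta_j}$; matching this against the eigenphase convention $e^{2i\theta}$ shows that both eigenvectors obey $\abs{\theta}\leq\epsilon$ exactly when $\theta_j \leq \epsilon$. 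Hence $P_\epsilon$ acts as the identity on $\mathcal{H}_j$ if $\theta_j \leq \epsilon$ and as zero otherwise.

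The hypothesis $\Pi_A\ket{\xi}=0$ forces the component $\ket{\xi_j}$ of $\ket{\xi}$ in $\mathcal{H}_j$ to lie along $\ket{a_j^{\perp}}$, say $\ket{\xi_j} = c_j\ket{a_j^{\perp}}$. I would then compute $\Pi_B\ket{\xi_j} = c_j\braket{b_j}{a_j^{\perp}}\ket{b_j}$, whose norm is $\abs{c_j}\,\abs{\sin\theta_j}$. Combining with the previous step gives $\norm{P_\epsilon\Pi_B\ket{\xi_j}} = \abs{c_j}\,\abs{\sin\theta_j}\leq \abs{c_j}\,\epsilon$ on blocks with $\theta_j\leq\epsilon$ and $0$ otherwise, using $\abs{\sin\theta_j}\leq\theta_j$. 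Summing over blocks with Pythagoras then yields $\norm{P_\epsilon\Pi_B\ket{\xi}}^2 = \sum_j\norm{P_\epsilon\Pi_B\ket{\xi_j}}^2 \leq \epsilon^2\sum_j\abs{c_j}^2 = \epsilon^2\norm{\ket{\xi}}^2$, which is the claim.

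The one-dimensional blocks I would handle as degenerate special cases of the same formula. A block in $\ker\Pi_A\cap\ker\Pi_B$ has $\theta_j=0$, so $\Pi_B\ket{\xi_j}=0$ and contributes nothing; a block in $\ker\Pi_A\cap\mathrm{range}(\Pi_B)$ carries eigenphase $\pi/2$, so $P_\epsilon$ annihilates it whenever $\epsilon<\pi/2$; and blocks inside $\mathrm{range}(\Pi_A)$ receive no weight from $\ket{\xi}$. The main obstacle is the geometry inside the two-dimensional blocks: correctly identifying the principal angle $\theta_j$, matching the rotation angle $2\theta_j$ to the eigenphase convention so that $P_\epsilon$ is precisely the block identity for $\theta_j\leq\epsilon$, and extracting the decisive $\sin\theta_j$ overlap factor, together with confirming that the Jordan decomposition is genuinely orthogonal so that the termwise Pythagorean sum is valid.
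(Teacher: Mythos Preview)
The paper does not actually prove this lemma; it is stated as the effective spectral gap lemma with a citation to \cite{lee2011quantum} and then invoked as a black box in the proof of \cref{lem: P_epsilon}. So there is no in-paper proof to compare against.

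Your argument via Jordan's lemma is the standard proof and is correct. The key steps---decomposing into invariant one- and two-dimensional blocks, identifying $R_BR_A$ restricted to a two-dimensional block as a rotation by $2\theta_j$, noting that $\Pi_A\ket{\xi}=0$ forces each block component along $\ket{a_j^\perp}$, and extracting the $\sin\theta_j \leq \theta_j \leq \epsilon$ factor from $\Pi_B\ket{\xi_j}$---are all sound, and the Pythagorean recombination is justified by the orthogonality of the Jordan blocks. Your treatment of the one-dimensional blocks is also fine; in particular the case $\ker\Pi_A \cap \mathrm{range}(\Pi_B)$ has eigenphase $\pi/2$, and for $\epsilon \geq \pi/2$ the inequality $\sin(\pi/2)=1\leq \pi/2 \leq \epsilon$ still gives $\lvert c_j\rvert \leq \epsilon\lvert c_j\rvert$, so no case is missed.
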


If we are able to find vector $\ket{\xi}$ such that $\Pi_{A}\ket{\xi} = 0$ and $\Pi_{B}\ket{\xi} = \ket{\Phi^{\perp}}$, where $\Pi_A$ and $\Pi_B$ correspond to the diffusion operators $R_A$ and $R_B$ defined in subsection \ref{sec: diffusion_operators} (we will suppress the dependence on the input parameter $\eta$ in the following), \cref{lem: spectral_gap} can be applied to prove our claim. 

\begin{prop} \label{xi_facts}
Let $\Pi_{A}$ and $\Pi_{B}$ be the projectors onto $\mathrm{span}(\{\ket{\psi}:R_{A}\ket{\psi} = \ket{\psi}\})$ and $\mathrm{span}(\{\ket{\psi}:R_{B}\ket{\psi} = \ket{\psi}\})$, respectively. For $v \in\mathcal{T}(v) \setminus\{r\}$, let $p(v)$ denote the parent of $v$ in $\mathcal{T}$.  A vector $\sum\limits_{v\in \mathcal{T}}\alpha_{v}\ket{v}$ satisfies \begin{equation} \label{E: xi_conditions} 
\Pi_{A}\ket{\xi} = 0 \qquad \text{and} \qquad 
\Pi_{B}\ket{\xi} = \ket{\Phi^{\perp}} \end{equation}
if and only if
\begin{enumerate}[label=(\alph*)]
    \item \label{xi_facts_B} for all $v \in B$,
    \[ \alpha_{v} = \begin{cases} \sqrt{\eta}\alpha_{r} \qquad &\text{if $p(v) = r$} \\ \alpha_{p(v)} \qquad &\text{otherwise}, \end{cases} \]

    \item for all $v \neq r \in A$, \label{xi_facts_A}
    \begin{equation*} \alpha_{v} = \alpha_{p(v)} - \left(\kappa_{p(v)} + \kappa_{v}\right)\sin\beta,
    \end{equation*}
    
    \item \label{xi_facts_root} $\alpha_{r} = \cos\beta$, and
    
    \item \label{xi_facts_marked} for every marked vertex $m \in \mathcal{M}$,
    \[ \alpha_{m} = \begin{cases} 0 \qquad &\text{if $m \in A$} \\ 
    \kappa_{m}\sin\beta \qquad  &\text{if $m \in B$}. \end{cases} \] 
\end{enumerate}

\begin{proof}
    The eigenvalue-1 eigenspace of $R_{A}$ is spanned by the set of vectors
    \[ S_{A} \equiv \{\ket{\psi_{v}^{\perp}}:\bra{\psi_{v}^{\perp}}\psi_{v}\rangle = 0, v \in A\} \cup \{\ket{\psi_{m}}:m \in A\cap \mathcal{M}\} \]
    [cf.\ Eq.\ \eqref{def: psi_v}]. Its orthogonal complement is then spanned by $\{\ket{\psi_{v}}:v \in A\setminus \mathcal{M}\}$, so $\Pi_{A}\ket{\xi} = 0$ if and only if $\ket{\xi}$ is a linear combination of $\ket{\psi_{v}}$ corresponding to vertices $v \in A$ that are not marked. This implies that $\alpha_{m} = 0$ for all marked vertices $m \in A \cap\mathcal{M}$ (first case of item \ref{xi_facts_marked}), and item \ref{xi_facts_B} follows from the definition of $\ket{\psi_{v}}$ in Eq.\ \eqref{def: psi_v}.
    
    Similarly, the eigenvalue-1 eigenspace of $R_{B}$ is spanned by
    \[ S_{B} \equiv \{\ket{r}\} \cup \{\ket{\psi_{v}^{\perp}}:\bra{\psi_{v}^{\perp}}\psi_{v}\rangle = 0, v \in B\} \cup\{\ket{\psi_m}: m \in B \cap \mathcal{M}\}. \]
    Then, $\Pi_{B}\ket{\xi} = \ket{\Phi^{\perp}}$ is satisfied if and only if, for all $\ket{\zeta} \in S_{B}$, $\bra{\zeta}\xi\rangle = \bra{\zeta}\Phi^{\perp}\rangle$. \Cref{xi_facts_root} and the second case in item \ref{xi_facts_marked} respectively impose $\bra{r}\xi\rangle = \bra{r}\Phi^{\perp}\rangle$ and $\bra{m}\xi\rangle = \braket{m}{\Phi^{\perp}}$ for $m \in B \cap \mathcal{M}$. Moreover, for each vertex $v \in B$, the set of vectors
    \[ \left\{\ket{\psi_{v,c_i}^{\perp}} \equiv - \ket{v} + d_{v}\ket{c_i} - \sum_{c_j \leftarrow v}\ket{c_j} \right\}_{c_i \leftarrow v} \] forms a basis for the space $\mathrm{span}(\{\ket{\psi_{v}^{\perp}}:\bra{\psi_{v}^{\perp}}\psi_{v}\rangle = 0\})$. Direct calculation yields, for each child $c_{i}$,
    \[ \bra{\psi_{v,c_i}^{\perp}}\xi\rangle = -\alpha_{v} + d_{v}\alpha_{c_{i}} - \sum_{c_j \leftarrow v}\alpha_{c_j} \]
    and [cf.\ Eq.\ \eqref{def: Phi_perp}]
    \[ \bra{\psi_{v,c_i}^{\perp}}\Phi^{\perp}\rangle = -d_{v}\kappa_{c_{i}}\sin\beta.\]
    For each $v \in B$, we thus obtain a system of $d_{v} - 1$ linear equations of the form
    \[ -\alpha_{v} + d_{v}\alpha_{c_{i}} - \sum_{\substack{c_j \leftarrow v}}\alpha_{c_j} = -d_{v} \kappa_{c_{i}}\sin\beta\]
    which has the unique solution [cf.\ Eq.\ \eqref{E: kappas_recurrence}]
    \[ \alpha_{c_{i}} = \alpha_{v} - (\kappa_{v} + \kappa_{c_{i}})\sin\beta.\]
    This is \Cref{xi_facts_A}.
\end{proof}

\end{prop}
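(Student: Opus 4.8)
The plan is to exploit the fact that the two requirements in Eq.~\eqref{E: xi_conditions} decouple completely. Each of $R_A$ and $R_B$ is a reflection, so $R_A = 2\Pi_A - I$ and $R_B = 2\Pi_B - I$, and each splits $\mathbb{C}^{V(\mathcal{T})}$ into mutually orthogonal invariant subspaces: one local subspace $H_v \equiv \mathrm{span}(\{\ket{v}\}\cup\{\ket{c}:c\leftarrow v\})$ per centre $v$, together with the distinguished line $\mathbb{C}\ket{r}$ in the case of $R_B$. I would therefore establish the two equivalences $\Pi_A\ket{\xi}=0 \Leftrightarrow \{\text{(a)},\ \text{(d) for }m\in A\}$ and $\Pi_B\ket{\xi}=\ket{\Phi^\perp}\Leftrightarrow\{\text{(b)},\text{(c)},\ \text{(d) for }m\in B\}$ separately, and then conjoin them to obtain the stated characterisation. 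Throughout I would use the recurrence $\kappa_v=\sum_{c\leftarrow v}\kappa_c$ of Eq.~\eqref{E: kappas_recurrence} as the main algebraic engine.

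For the $\Pi_A$ equivalence, I would note that since $R_A$ has only eigenvalues $\pm1$, the condition $\Pi_A\ket{\xi}=0$ is equivalent to $\ket{\xi}$ lying in the $(-1)$-eigenspace of $R_A$, which is exactly $\mathrm{span}(\{\ket{\psi_v(\eta)}:v\in A\setminus\mathcal{M}\})$ (a marked centre contributes no $(-1)$-eigenvector, as $D_m=I$). Writing $\ket{\xi}=\sum_{v\in A\setminus\mathcal{M}}\beta_v\ket{\psi_v(\eta)}$ and reading off coefficients then yields the result directly: a marked $m\in A$ occurs in no $\ket{\psi_v(\eta)}$ with $v\in A\setminus\mathcal{M}$, since its unique parent lies in $B$ and $m\neq v$, forcing $\alpha_m=0$ (the $A$-case of item~(d)); and each $B$-vertex $c$ inherits the coefficient of its unique $A$-parent, scaled by $\sqrt{\eta}$ at the root and by $1$ elsewhere (item~(a)), using the form of $\ket{\psi_v(\eta)}$ in Eq.~\eqref{def: psi_v}. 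The converse is immediate: given item~(a) and the $A$-case of~(d), setting $\beta_r=\alpha_r\sqrt{1+d_r\eta}$ and $\beta_v=\alpha_v\sqrt{d_v}$ reconstructs $\ket{\xi}$ as exactly such a combination, the unique-parent property of the tree ensuring that no two centres impose conflicting values on a shared child.

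For the $\Pi_B$ equivalence, I would first verify that $\ket{\Phi^\perp}$ already lies in $\mathrm{range}(\Pi_B)$, so that $\Pi_B\ket{\xi}=\ket{\Phi^\perp}$ reduces to $\braket{\zeta}{\xi}=\braket{\zeta}{\Phi^\perp}$ for every $\ket{\zeta}\in S_B$. This membership follows because the $\ket{\psi_v(\eta)}$-component of $\ket{\Phi^\perp}$ vanishes for every unmarked $v\in B$, which is a one-line computation from Eq.~\eqref{def: Phi_perp} and the recurrence $\kappa_v=\sum_{c\leftarrow v}\kappa_c$. I would then split the inner-product conditions across the invariant subspaces: testing against $\ket{r}$ gives $\alpha_r=\cos\beta$, i.e.\ item~(c); for a marked $m\in B$ the whole of $H_m$ is fixed by $D_m=I$, so $\alpha_m$ and its children are read off directly from $\ket{\Phi^\perp}$, giving the $B$-case of~(d) (using $(-1)^{\ell_m}=-1$) and item~(b) for the children of $m$; and for an unmarked $v\in B$ the conditions against the basis $\{\ket{\psi^\perp_{v,c_i}}\}_{c_i\leftarrow v}$ of the orthogonal complement of $\ket{\psi_v(\eta)}$ inside $H_v$ produce the stated system of $d_v-1$ linear equations.

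The one genuinely computational step, and where I expect the only real work to lie, is solving that linear system. After evaluating $\braket{\psi^\perp_{v,c_i}}{\Phi^\perp}=-d_v\kappa_{c_i}\sin\beta$ (again collapsing $\sum_{c\leftarrow v}\kappa_c=\kappa_v$), one substitutes the proposed $\alpha_{c_i}=\alpha_v-(\kappa_v+\kappa_{c_i})\sin\beta$ back into each equation to confirm it, and checks uniqueness by noting that $\alpha_v$ is fixed externally (by the parent's equation or by item~(c)), leaving $d_v-1$ equations in the $d_v-1$ unknowns $\{\alpha_{c_i}\}$; this delivers item~(b) for every non-root $A$-vertex. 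I would close by remarking that the marked-vertex coefficients are pinned down by both equivalences simultaneously—the $A$-case of~(d) from $\Pi_A$ and the $B$-case of~(d) together with~(b) from $\Pi_B$—so the content of the proposition is precisely that the two reflections are mutually consistent on these overlaps; the existence of a common solution $\ket{\xi}$ needed to invoke \Cref{lem: spectral_gap} is then secured by the explicit recursion on $\alpha$ rather than by the characterisation itself.
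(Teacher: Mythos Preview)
Your proposal is correct and follows essentially the same route as the paper: both arguments identify the $(-1)$-eigenspace of $R_A$ as $\mathrm{span}\{\ket{\psi_v(\eta)}:v\in A\setminus\mathcal{M}\}$ to obtain (a) and the $A$-case of (d), and both reduce $\Pi_B\ket{\xi}=\ket{\Phi^\perp}$ to inner-product conditions against a basis of $S_B$, using the explicit vectors $\ket{\psi_{v,c_i}^\perp}$ to derive the linear system whose solution is (b). Your explicit verification that $\ket{\Phi^\perp}\in\mathrm{range}(\Pi_B)$ is a welcome addition that the paper leaves implicit, and your treatment of marked $m\in B$ via the observation that $D_m=I$ fixes all of $H_m$ is slightly cleaner than the paper's handling, but these are refinements rather than a different strategy.
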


One might observe that Items (a) through (d) of \Cref{lem: P_epsilon} may over-constrain the set of coefficients $\{\alpha_{v}\}_{v\in\mathcal{T}}$. Consider, for instance, the path $\mathcal{P}(r,m)$ from the root $r$ to some marked vertex $m \in \mathcal{M}$. Items \ref{xi_facts_root} and \ref{xi_facts_marked} fix boundary conditions at $r$ and at the marked vertex $m$, respectively, while Items \ref{xi_facts_B} and \ref{xi_facts_A} constitute a recurrence relation that determines $\alpha_{v}$ for each vertex $v$ along $\mathcal{P}(r,m)$ as a function of $\alpha_{p(v)}$ for its parent $p(v)$ and the values assigned by $\kappa$. If we start with $\alpha_{r}$, given by item \ref{xi_facts_root} and recurse down the path using \ref{xi_facts_B} and \ref{xi_facts_A} to obtain $\alpha_{m}$, the result may be inconsistent with the constraint on $\alpha_{m}$ imposed by \ref{xi_facts_marked}. However, for $\kappa$ given by Eqs.\ \eqref{E: kappas_recurrence}-\eqref{E: path_leaf}, there indeed exists a set of coefficients $\{\alpha_{v}\}_{v\in\mathcal{T}}$ that satisfies the recurrence relation as well as both boundary conditions. We determine these coefficients explicitly in the following lemma.

\begin{lemma} \label{lem: alphas}
For any nontrivial input tree $\mathcal{T}$ and any parameter $\eta$, if $\kappa$ satisfies Eqs.\ \eqref{E: kappas_recurrence}-\eqref{E: path_leaf}, then there exists a unique vector $\ket{\xi} = \sum_{v\in \mathcal{T}}\alpha_{v}\ket{v}$ for which Eq.\ \eqref{E: xi_conditions} holds. Moreover, 
\begin{equation} \label{E: xi_bound} \|\ket{\xi}\| = \bigO{\sqrt{T\eta}}. \end{equation}

\begin{proof}
    First, consider any two vertices $v_{0} \in B$ and $v \in \mathcal{T}(v_{0})$. Summing over all edges $(u,w) \in \mathcal{P}(v_{0},v)$,
    \begin{align*}
        \alpha_{v_{0}} - \alpha_{v} &= \sum_{(u,w) \in \mathcal{P}(v_{0},v)}(\alpha_{u} - \alpha_{w}) \\
        &= \sum_{\substack{(u,w)\in\mathcal{P}(v_{0},v)\\u\in A}}(\alpha_{u} - \alpha_{w}) + \sum_{\substack{(u,w)\in\mathcal{P}(u,w)\\u\in B}}(\alpha_{u} - \alpha_{w}) \\
        &= \sum_{\substack{(u,w)\in\mathcal{P}(v_{0},v)\\u\in A}}0 + \sum_{\substack{(u,w)\in\mathcal{P}(v_{0},v)\\u\in B}}(\kappa_{u} + \kappa_{w})\sin\beta \\
        &= \sin\beta\begin{dcases} \sum_{u\in\mathcal{P}(v_{0},v)}\kappa_{u} \qquad &\text{if $v \in A$} \\
        \sum_{u\in\mathcal{P}(v_{0},v)}\kappa_{u} - \kappa_{v} \qquad &\text{if $v \in B$}, \end{dcases}
    \end{align*}
    where in the third line, we apply item \ref{xi_facts_B} of \Cref{xi_facts} in the first sum and item \ref{xi_facts_A} in the second. In particular, if $v_{0}$ is the ancestor of $v$ of depth 1, we have $\alpha_{v_{0}} = \sqrt{\eta}\alpha_{r} = \sqrt{\eta}\cos\beta$ by items \ref{xi_facts_B} and \ref{xi_facts_root}, and the above relation can be rearranged to obtain
    \begin{equation} \label{E: alphas} \alpha_{v} = \sin\beta\begin{dcases} 
    \frac{1}{\kappa_{r}} - \sum_{\substack{u\in\mathcal{P}(r,v)\\u\neq r}}\kappa_{u} \qquad &\text{if $v \in A$} \\
    \frac{1}{\kappa_{r}} - \sum_{\substack{u\in\mathcal{P}(r,v)\\u\neq r}}\kappa_{u} + \kappa_{v} \qquad &\text{if $v \in B$}, \end{dcases} \end{equation}
    for all $v \in \mathcal{T} \setminus \{r\}$.
    
    Taking $v$ to be a marked vertex $m \in \mathcal{M}$, this gives us 
    \begin{align*} \alpha_{m} &= \sin\beta\begin{dcases} \frac{1}{\kappa_{r}} - \sum_{\substack{u\in\mathcal{P}(r,m)\\u\neq r}}\kappa_{u} \qquad &\text{if $v \in A$} \\ \frac{1}{\kappa_{r}} - \sum_{\substack{u\in\mathcal{P}(r,m)\\u\neq r}}\kappa_{u} + \kappa_{m} \qquad &\text{if $v \in B$} \end{dcases} \\
    &= \begin{dcases} 0 \qquad &\text{if $v \in A$} \\ \kappa_{m}\sin\beta \qquad &\text{if $v \in B$}, \end{dcases}
    \end{align*}
    as required by Eq.\ \eqref{E: xi_conditions}. This uses \Cref{cor: kappa_recursion}, which we note is a consequence of the constraints on $\kappa$ imposed by Eq.\ \eqref{E: path_leaf}. Thus, $\ket{\xi} = \sum_{v\in\mathcal{T}}\alpha_{v}\ket{v}$ satisfies Eq.\ \eqref{E: xi_conditions} and is uniquely determined. 
    
    It follows from Lemmas \ref{lem: same_phase} and \ref{cor: kappa_recursion} that for all $v \in \mathcal{T}\setminus \{r\}$,
    \[ \left|\sum_{\substack{u\in\mathcal{P}(r,m)\\u\neq r}}\kappa_{u} - \kappa_{v} \right| \leq \left|\sum_{\substack{u\in\mathcal{P}(r,m)\\u\neq r}}\kappa_{u} \right| \leq \left|\frac{1}{\kappa_{r}}\right|, \]
    so $|\alpha_{v}| \leq |\sin\beta/\kappa_{r}| = \sqrt{\eta}|\cos\beta|$ by Eq.\ \eqref{E: alphas} and \Cref{lem: same_phase}. Therefore, 
    \begin{align*} \|\ket{\xi}\|^{2} &= \alpha_{r}^{2} + \sum_{\substack{v\in\mathcal{T}\\v\neq r}}\alpha_{v}^{2}
    \leq \cos^{2}\beta + \sum_{\substack{v\in\mathcal{T}\\v\neq r}}\eta\cos^{2}\beta
    = \left(1 + (T-1)\eta\right)\cos^{2}\beta
    \leq 2 (T-1)\eta \cos^2 \beta,
    \end{align*}
    where the last inequality follows from the fact that $\eta \geq 1/k_{\mathrm{max}} \geq 1/(T-1)$ for any nontrivial tree whose root vertex is not marked.
\end{proof}
\end{lemma}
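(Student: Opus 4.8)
The plan is to read the four conditions of \Cref{xi_facts} as a recurrence that propagates $\alpha_v$ from parent to child (items (a)–(b)), anchored by the root value $\alpha_r = \cos\beta$ (item (c)), and then to check that the values so produced automatically meet the prescribed marked-vertex values (item (d)). Uniqueness is then immediate, since (a)–(c) determine every $\alpha_v$; the content of the lemma is the consistency of this recurrence with the marked boundary, together with the norm bound. First I would fix $v \neq r$ and telescope along $\mathcal{P}(r,v)$: writing $\alpha_{v_0} - \alpha_v = \sum_{(u,w) \in \mathcal{P}(v_0,v)}(\alpha_u - \alpha_w)$ for the depth-$1$ ancestor $v_0$ of $v$, the edges whose parent lies in $A$ contribute nothing by item (a), while each edge whose parent lies in $B$ contributes $(\kappa_u + \kappa_w)\sin\beta$ by item (b). These contributions telescope into $\sum_{u \in \mathcal{P}(v_0,v)}\kappa_u$ when $v \in A$ and into $\sum_{u \in \mathcal{P}(v_0,v)}\kappa_u - \kappa_v$ when $v \in B$. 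Combining this with $\alpha_{v_0} = \sqrt{\eta}\,\alpha_r = \sqrt{\eta}\cos\beta$ and rewriting $\sqrt{\eta}\cos\beta = \sin\beta/\kappa_r$ via \eqref{eqn:angle} produces an explicit closed form for each $\alpha_v$ in terms of $\kappa_r$, the partial path sums of $\kappa$, and $\sin\beta$.

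The step I expect to be the crux is verifying the marked-vertex boundary, i.e.\ that the recurrence does not over-determine the system. Evaluating the closed form at a leaf $m \in \mathcal{M}$, the relevant partial path sum is the \emph{full} sum $\sum_{u \in \mathcal{P}(r,m),\, u \neq r}\kappa_u$, and here \Cref{cor: kappa_recursion} — itself a consequence of the path-leaf constraint \eqref{E: path_leaf} — gives $\sum_{u \in \mathcal{P}(r,m),\, u \neq r}\kappa_u = 1/\kappa_r$. This is exactly what makes the leading $1/\kappa_r$ term cancel, yielding $\alpha_m = 0$ for $m \in A$ and $\alpha_m = \kappa_m \sin\beta$ for $m \in B$, in agreement with item (d). Thus a vector $\ket{\xi}$ satisfying \eqref{E: xi_conditions} exists and, by the recurrence, is unique. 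Without \eqref{E: path_leaf} the two boundary conditions would generically clash, so it is precisely the uniformity of the path sums across all leaves that rescues existence — this is the heart of the argument.

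Finally, for the norm bound I would estimate the coefficients uniformly. By \Cref{lem: same_phase} we may take $\kappa_v > 0$ for all $v$, so every partial path sum lies between $0$ and the full sum $1/\kappa_r$; the bracketed expression in the closed form therefore has modulus at most $1/\kappa_r$ in both cases, giving $|\alpha_v| \le |\sin\beta|/\kappa_r = \sqrt{\eta}\,|\cos\beta|$ after reusing \eqref{eqn:angle}. Summing over the at most $T$ vertices,
\[
\|\ket{\xi}\|^2 = \alpha_r^2 + \sum_{v \neq r}\alpha_v^2 \le \bigl(1 + (T-1)\eta\bigr)\cos^2\beta \le 2(T-1)\eta\cos^2\beta,
\]
where the last inequality uses $\eta \ge 1/(T-1)$, valid for any nontrivial tree with unmarked root. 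This establishes \eqref{E: xi_bound} and completes the proof.
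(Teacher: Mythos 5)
Your proposal is correct and follows essentially the same route as the paper's proof: the same telescoping of $\alpha_{v_0}-\alpha_v$ along the path split by parent parity, the same use of \Cref{cor: kappa_recursion} to show the closed form is consistent with the marked-vertex boundary (item (d)), and the same uniform bound $|\alpha_v|\le\sqrt{\eta}\,|\cos\beta|$ yielding $\|\ket{\xi}\|^2\le 2(T-1)\eta\cos^2\beta$. The only cosmetic differences are that you make the appeal to Eq.\ \eqref{eqn:angle} explicit when rewriting $\sqrt{\eta}\cos\beta=\sin\beta/\kappa_r$ and anchor the telescope at the root directly rather than at a general $v_0\in B$ first.
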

Finally, the lemmas above imply the following result, which is applied in subsection \ref{sec: prepare_phi} to bound the complexity the phase estimation procedure used to prepare the state $\ket{\Phi}$.

\Pepsilon*
    \begin{proof}
        Let $\ket{\xi}$ be the vector satisfying Eq.\ \eqref{E: xi_conditions}. Then,  
        \[ \|P_{\epsilon}\ket{\Phi^{\perp}}\| = \|P_{\epsilon}\Pi_{B}\ket{\xi}\| \leq \epsilon\|\ket{\xi}\| = \bigO{\epsilon\sqrt{T\eta}}, \]
        where the inequality follows from \Cref{lem: spectral_gap} and the last equality is Eq.\ \eqref{E: xi_bound}.
    \end{proof}

\section{Expected number of steps to reach a marked vertex}\label{sec:expected_steps}
 
We reproduce a theorem originally derived in Ref.\ \cite{mathoverflow}, altered slightly so as to make use of our existing theorems and notation.

\begin{thm}\label{thm:fedja}
    Let $(X_i \in \widetilde{\mathcal{T}})_{i=0}^{n}$ be  a discrete-time Markov chain on the state space $V(\widetilde{\mathcal{T}})$, the vertex set of the tree $\widetilde{\mathcal{T}}$ rooted at $r$ with effective resistance $\overline\eta$, and leaves $\mathcal{M}$. Let $X_0 = r$ and suppose that for $u,v \in \widetilde{\mathcal{T}}$,
    \[
        \Pr{X_{i+1} = u \vert X_i = v} =
            \begin{dcases}
                \frac{\kappa_u^{2}}{ 
           \sum\limits_{w\in\widetilde{\mathcal{T}}(v)\setminus\{v\}}\kappa_{w}^{2}} \qquad &\text{if $u \in \widetilde{\mathcal{T}}(v)\setminus\{v\}$} \\
                0 \qquad &\text{otherwise,}
            \end{dcases}
    \]
    where $\widetilde{\mathcal{T}}(v)$ denotes the subtree rooted at $v$ and $\kappa$ satisfies the hypotheses of \cref{thm:tree_properties}.
    Then, the expected amount of time $\E{r}$ until $X_i \in \mathcal{M}$ is bounded as
    \[
        \E{r} \leq \log\left[|\mathcal{M}|(\overline{\eta} + 1)\right].
    \]
\end{thm}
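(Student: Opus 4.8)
The plan is to exhibit a nonnegative potential on the vertices of $\widetilde{\mathcal{T}}$ that vanishes exactly on the leaves and drops by at least one unit per step, so that a direct strong induction on the expected hitting time bounds it by the potential at the root. Writing $S(v) \equiv \sum_{w \in \widetilde{\mathcal{T}}(v)}\kappa_w^2 = (\overline\eta(v)+1)\kappa_v^2$ (using \Cref{thm:resistance}), I would set
\[
  \Phi(v) \equiv \log_2\!\big[\,|\mathcal{M}(v)|\,(\overline\eta(v)+1)\,\big] = \log_2\frac{|\mathcal{M}(v)|\,S(v)}{\kappa_v^2}.
\]
Note $\Phi(v) \ge 0$ always, $\Phi(m)=0$ for every leaf $m\in\mathcal{M}$ (since $\overline\eta(m)=0$ and $|\mathcal{M}(m)|=1$), and $\Phi(r)=\log_2[\,|\mathcal{M}|(\overline\eta+1)\,]$ is exactly the claimed bound. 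Because the chain strictly descends (each step moves to a proper descendant of the current vertex within its subtree), it reaches $\mathcal{M}$ in at most $n$ steps; all quantities are finite and a strong induction on subtree height is well-founded. Letting $E(v)$ denote the expected hitting time from $v$, I would prove $E(v)\le\Phi(v)$ for all $v$.

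The inductive step is a first-step analysis. For internal $v$, with transition weights $q_u = \kappa_u^2/(S(v)-\kappa_v^2)$, we have $E(v) = 1 + \sum_{u\in\widetilde{\mathcal{T}}(v)\setminus\{v\}} q_u E(u)$. Substituting the inductive hypothesis $E(u)\le\Phi(u)=\log_2 A_u$ with $A_u\equiv|\mathcal{M}(u)|S(u)/\kappa_u^2$, then applying Jensen's inequality to the concave $\log_2$ (with $q$ as the probability distribution) gives $E(v)\le 1+\log_2\!\big(\sum_u q_u A_u\big)$. Since $q_u A_u = |\mathcal{M}(u)|S(u)/(S(v)-\kappa_v^2)$, the desired conclusion $E(v)\le\Phi(v)=\log_2 A_v$ reduces to the single ``crux'' inequality
\[
  \sum_{u \in \widetilde{\mathcal{T}}(v)\setminus\{v\}} |\mathcal{M}(u)|\,S(u) \;\le\; \frac{(S(v)-\kappa_v^2)\,|\mathcal{M}(v)|\,S(v)}{2\kappa_v^2}.
\]

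The main obstacle is establishing this crux inequality, and the decisive ingredient is the constraint \eqref{E: path_leaf} (equivalently \Cref{lem: kappa_recursion} and \Cref{cor: kappa_recursion}). I would prove it by an auxiliary induction on $B(v)\equiv\sum_{u\in\widetilde{\mathcal{T}}(v)}|\mathcal{M}(u)|S(u)$, showing $B(v)\le T(v)\equiv|\mathcal{M}(v)|S(v)(S(v)+\kappa_v^2)/(2\kappa_v^2)$ for every $v$; leaves give equality (the base case), and the crux is exactly $B(v)-|\mathcal{M}(v)|S(v)\le T(v)-|\mathcal{M}(v)|S(v)$. Using $B(v)=|\mathcal{M}(v)|S(v)+\sum_{c\leftarrow v}B(c)$ and $B(c)\le T(c)$, the step reduces to $\sum_{c\leftarrow v}T(c)\le N(v)|\mathcal{M}(v)|S(v)/(2\kappa_v^2)$, where $N(v)=S(v)-\kappa_v^2$. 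The key simplification is that a direct consequence of \eqref{E: path_leaf}, already used in the proof of \Cref{thm:resistance}, is that $S(c)/\kappa_c$ is \emph{independent of the child} $c\leftarrow v$; calling this common value $\rho$, one has $S(c)=\rho\kappa_c$, $\overline\eta(c)+1=\rho/\kappa_c$, and $\kappa_v^2=(\sum_c\kappa_c)^2=(\sum_c S(c)/\rho)^2=N(v)^2/\rho^2$. Substituting these collapses both sides of the required inequality so that, after cancellation, it becomes simply $\sum_{c}|\mathcal{M}(c)|\,S(c)\le\big(\sum_c|\mathcal{M}(c)|\big)\big(\sum_c S(c)\big)$, which holds trivially because all omitted cross terms are nonnegative. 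I expect the only delicate bookkeeping to be (i) verifying the constancy of $S(c)/\kappa_c$ cleanly from \eqref{E: path_leaf}, and (ii) the logarithm base: the stated inequality forces $\log=\log_2$, since on a single edge $E(r)=1$ while $|\mathcal{M}|(\overline\eta+1)=2$, so the base can be at most $2$, and the drift argument delivers exactly a factor-$2$ (one-bit) decrease per step.
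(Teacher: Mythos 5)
Your proposal is correct, but it takes a genuinely different route from the paper's proof. The paper strengthens the induction hypothesis to a leaf-weighted average of logarithms, $\E{v}\le\sum_{m\in\mathcal{M}(v)}\frac{\kappa_m}{\kappa_v}\log\bigl(\kappa_v\frac{\overline\eta(v)+1}{\kappa_m}\bigr)$, derives the one-step recursion $\E{v}=\sum_{c\leftarrow v}\frac{\kappa_c}{\kappa_v}\bigl[\frac{1}{\overline\eta(c)+1}+\E{c}\bigr]$ using \cref{cor:ratio}, absorbs the additive term $\frac{1}{\overline\eta(c)+1}$ into the logarithm via $t\le\log_2(1+t)$ for $t\in[0,1]$, and invokes Jensen's inequality only once, at the root. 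You instead keep the unweighted potential $\Phi(v)=\log_2\bigl[|\mathcal{M}(v)|(\overline\eta(v)+1)\bigr]$ itself as the induction hypothesis, apply Jensen at every internal vertex, and isolate all tree-specific content in the purely algebraic inequality $B(v)\le T(v)$, proved by a second induction. I checked the two places where your plan could have broken: the constancy of $S(c)/\kappa_c$ over children $c\leftarrow v$ is clean --- it is immediate from \cref{cor:ratio}, since $S(c)/\kappa_c=(\overline\eta(c)+1)\kappa_c=\overline\eta(v)\kappa_v$ --- and with $\rho=\overline\eta(v)\kappa_v$ your inductive step does collapse to $\sum_c|\mathcal{M}(c)|S(c)\le\bigl(\sum_c|\mathcal{M}(c)|\bigr)\bigl(\sum_c S(c)\bigr)$, which is trivial given that all $\kappa_v>0$ (without loss of generality, by \cref{lem: same_phase}). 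As for what each approach buys: the paper's weighted hypothesis is pointwise sharper (a single application of Jensen recovers your potential bound from it) and its induction needs no auxiliary inequality, while yours is more modular, separating a short probabilistic first-step analysis from a self-contained combinatorial lemma about $\kappa$, and it delivers $\E{v}\le\log_2\bigl[|\mathcal{M}(v)|(\overline\eta(v)+1)\bigr]$ at every vertex rather than only at the root. Your closing remark about the logarithm base is also correct and applies equally to the paper: its step $\frac{1}{\overline\eta(c)+1}+\log\bigl(\kappa_c\frac{\overline\eta(c)+1}{\kappa_m}\bigr)\le\log\bigl(\kappa_c\frac{\overline\eta(c)+2}{\kappa_m}\bigr)$ requires $\frac{1}{x}\le\log\bigl(1+\frac{1}{x}\bigr)$ for $x\ge 1$, which holds for base at most $2$ but fails for the natural logarithm, so the theorem's $\log$ must indeed be read as $\log_2$.
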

\begin{proof}
For $v \in \widetilde{\mathcal{T}}$, we denote the effective resistance of the subtree $\widetilde{\mathcal{T}}(v)$ by $\overline{\eta}(v)$ and the set of leaves in $\widetilde{\mathcal{T}}(v)$ by $\mathcal{M}(v)$. 

By Theorem 10, we can write
\begin{align*}
     \Pr{X_{i+1} = u \vert X_i = v} = \frac{\kappa_u^2}{\kappa_v^2 \overline\eta(v)}.
\end{align*}\newpage
Let $\E{v}$ denote the expected time to reach a leaf starting from $v \not\in\mathcal{M}$. Then,
\begin{align*}
    \E{v} &= 1 + \sum_{\substack{u\in\widetilde{\mathcal{T}}(v)\\u\neq v}}\Pr{X_{i+1} = u \vert X_i = v}\E{u} \\
    &= 1 + \sum_{c\leftarrow v}\sum_{u\in\widetilde{\mathcal{T}}(c)}\frac{\kappa_u^2}{\kappa_v^2 \overline\eta(v)}\E{u} \\
    &= 1 + \sum_{c\leftarrow v}\frac{1}{\kappa_v^2\overline\eta(v)}\left[ \kappa_c^2\E{c} + \sum_{\substack{u \in \widetilde{\mathcal{T}}(c) \\ u \neq c}}\kappa_u^2\E{u} \right] \\
    &= 1+\sum_{c\leftarrow v}\frac{\kappa_c^2}{\kappa_v^2 \overline\eta(v)}\left[\E{c} + \overline\eta(c)(\E{c}-1)\right] \\
    &= \sum_{c\leftarrow v}\frac{\kappa_c}{\kappa_v}\left[1 - \frac{\kappa_{c}\overline{\eta}(c)}{\kappa_{v}\overline{\eta}(v)} + \frac{\kappa_{c}(1 + \overline{\eta}(c))}{\kappa_{v}\overline{\eta}(v)}\E{c}\right] \\
    &= \sum_{c\leftarrow v}\frac{\kappa_c}{\kappa_v}\left[\frac{1}{\overline\eta(c) + 1}+\E{c}\right]
\end{align*}
where the last line follows from \cref{cor:ratio}.

Now, we claim that for all $v$, 
\[ \E{v} \leq \sum_{m\in\mathcal{M}(v)}\frac{\kappa_{m}}{\kappa_{v}}\log\left(\kappa_v\frac{\overline{\eta}(v)+1}{\kappa_{m}}\right). \]
This is trivially true if $v \in \mathcal{M}$. Assume that it holds for all children of $v\not\in\mathcal{M}$. Then,
\begin{align*}
    \E{v} &\leq \sum_{c\leftarrow v}\frac{\kappa_{c}}{\kappa_{v}}\left[\frac{1}{\overline\eta(c)+1} + \sum_{m\in\mathcal{M}(c)}\frac{\kappa_{m}}{\kappa_{c}}\log\left(\kappa_c\frac{\overline{\eta}(c)+1}{\kappa_{m}}\right)\right] \\
    &=\sum_{c\leftarrow v}\frac{\kappa_c}{\kappa_v}\sum_{m\in\mathcal{M}(c)}\frac{\kappa_m}{\kappa_c}\left[\frac{\kappa_c}{\kappa_c(\overline\eta(c)+1)} + \log\left(\kappa_c\frac{\overline{\eta}(c) + 1}{\kappa_{m}}\right)\right] \\
    &\leq\sum_{c\leftarrow v}\sum_{m\in\mathcal{M}(c)}\frac{\kappa_m}{\kappa_v}\left[\log\left(\kappa_c\frac{\overline{\eta}(c) + 2}{\kappa_{m}}\right)\right] \\
    &=\sum_{c\leftarrow v}\sum_{m\in\mathcal{M}(c)}\frac{\kappa_m}{\kappa_v}\left[\log\left(\frac{\kappa_v\overline\eta(v)+\kappa_c}{\kappa_{m}}\right)\right] \\
    &\leq \sum_{m \in \mathcal{M}(v)}\frac{\kappa_{m}}{\kappa_{v}}\left[\log \left(\kappa_v\frac{\overline{\eta}(v) + 1}{\kappa_m}\right)\right],
\end{align*}
where the fourth line follows from \cref{cor:ratio} and the final line from \cref{lem: same_phase} and \cref{E: kappas_recurrence}.
Restricting to the case of $v = r$ and applying Jensen's inequality, we find that
\[
    \E{r} \leq \log \left(\abs{\mathcal{M}}\right) + \log\left( \overline\eta +1 \right),
\]
as desired.
\end{proof}


\begin{thebibliography}{10}
\providecommand{\url}[1]{\texttt{#1}}
\expandafter\ifx\csname urlstyle\endcsname\relax
  \providecommand{\doi}[1]{doi: #1}\else
  \providecommand{\doi}{doi: \begingroup \urlstyle{rm}\Url}\fi

\bibitem{ambainis2017quantum}
\textsc{Ambainis}, Andris ; \textsc{Kokainis}, Martins:
\newblock Quantum algorithm for tree size estimation, with applications to
  backtracking and 2-player games.
\newblock {In: }\emph{arXiv preprint arXiv:1704.06774}  (2017)

\bibitem{belovs2013quantum}
\textsc{Belovs}, Aleksandrs:
\newblock Quantum walks and electric networks.
\newblock {In: }\emph{arXiv preprint arXiv:1302.3143}  (2013)

\bibitem{belovs2013time}
\textsc{Belovs}, Aleksandrs ; \textsc{Childs}, Andrew~M. ; \textsc{Jeffery},
  Stacey ; \textsc{Kothari}, Robin  ; \textsc{Magniez}, Fr{\'e}d{\'e}ric:
\newblock Time-efficient quantum walks for 3-distinctness.
\newblock {In: }\emph{International Colloquium on Automata, Languages, and
  Programming} Springer, 2013, S. 105--122

\bibitem{brassard2002quantum}
\textsc{Brassard}, Gilles ; \textsc{Hoyer}, Peter ; \textsc{Mosca}, Michele  ;
  \textsc{Tapp}, Alain:
\newblock Quantum amplitude amplification and estimation.
\newblock {In: }\emph{Contemporary Mathematics} 305 (2002), S. 53--74

\bibitem{hoeffding1963probability}
\textsc{Hoeffding}, Wassily:
\newblock Probability inequalities for sums of bounded random variables.
\newblock {In: }\emph{Journal of the American statistical association} 58
  (1963), Nr. 301, S. 13--30

\bibitem{ito2015approximate}
\textsc{Ito}, Tsuyoshi ; \textsc{Jeffery}, Stacey:
\newblock Approximate span programs.
\newblock {In: }\emph{arXiv preprint arXiv:1507.00432}  (2015)

\bibitem{mathoverflow}
\textsc{Jarret}, Michael:
\newblock \emph{Bound on queries to a tree with unusual probabilties --
  follow-up}.
\newblock MathOverflow.
\newblock \url{https://mathoverflow.net/q/280505}

\bibitem{jeffery2017quantum}
\textsc{Jeffery}, Stacey ; \textsc{Kimmel}, Shelby:
\newblock Quantum Algorithms for Graph Connectivity and Formula Evaluation.
\newblock {In: }\emph{arXiv preprint arXiv:1704.00765}  (2017)

\bibitem{krovi2016quantum}
\textsc{Krovi}, Hari ; \textsc{Magniez}, Fr{\'e}d{\'e}ric ; \textsc{Ozols},
  Maris  ; \textsc{Roland}, J{\'e}r{\'e}mie:
\newblock Quantum walks can find a marked element on any graph.
\newblock {In: }\emph{Algorithmica} 74 (2016), Nr. 2, S. 851--907

\bibitem{lee2011quantum}
\textsc{Lee}, Troy ; \textsc{Mittal}, Rajat ; \textsc{Reichardt}, Ben~W. ;
  \textsc{{\v{S}}palek}, Robert  ; \textsc{Szegedy}, Mario:
\newblock Quantum query complexity of state conversion.
\newblock {In: }\emph{Foundations of Computer Science (FOCS), 2011 IEEE 52nd
  Annual Symposium on} IEEE, 2011, S. 344--353

\bibitem{magniez2011search}
\textsc{Magniez}, Fr{\'e}d{\'e}ric ; \textsc{Nayak}, Ashwin ; \textsc{Roland},
  J{\'e}r{\'e}mie  ; \textsc{Santha}, Miklos:
\newblock Search via quantum walk.
\newblock {In: }\emph{SIAM Journal on Computing} 40 (2011), Nr. 1, S. 142--164

\bibitem{montanaro2015quantum}
\textsc{Montanaro}, Ashley:
\newblock Quantum walk speedup of backtracking algorithms.
\newblock {In: }\emph{arXiv preprint arXiv:1509.02374}  (2015)

\bibitem{shenvi2003quantum}
\textsc{Shenvi}, Neil ; \textsc{Kempe}, Julia  ; \textsc{Whaley}, K~B.:
\newblock Quantum random-walk search algorithm.
\newblock {In: }\emph{Physical Review A} 67 (2003), Nr. 5, S. 052307

\bibitem{szegedy2004quantum}
\textsc{Szegedy}, Mario:
\newblock Quantum speed-up of Markov chain based algorithms.
\newblock {In: }\emph{Foundations of Computer Science, 2004. Proceedings. 45th
  Annual IEEE Symposium on} IEEE, 2004, S. 32--41

\bibitem{wang2017}
\textsc{Wang}, Guoming:
\newblock Efficient quantum algorithms for analyzing large sparse electrical
  networks.
\newblock {In: }\emph{Quantum Information \& Computation} 17 (2017), 7,
  987-1026.
\newblock \url{https://arxiv.org/abs/1311.1851}
\end{thebibliography}
\end{document}